\newtheorem{thm}{Theorem}
\newtheorem{cor}[thm]{Corollary}
\newtheorem{lemma}[thm]{Lemma}
\newtheorem{prop}[thm]{Proposition}
\DeclareMathAlphabet{\mathsfsl}{OT1}{cmss}{m}{sl}
\newcommand{\lang}{\textit}
\newcommand{\term}{\emph}
\newcommand{\cnst}[1]{\mathrm{#1}}
\renewcommand{\phi}{\varphi}
\newcommand{\eps}{\varepsilon}
\newcommand{\defby}{\overset{\mathrm{\scriptscriptstyle{def}}}{=}}
\newcommand{\half}{\tfrac{1}{2}}
\newcommand{\econst}{\mathrm{e}}
\newcommand{\iunit}{\mathrm{i}}
\newcommand{\onevct}{\mathbf{e}}
\newcommand{\Id}{\mathbf{I}}
\newcommand{\Rspace}[1]{\mathbb{R}^{#1}}
\newcommand{\Cspace}[1]{\mathbb{C}^{#1}}
\newcommand{\oneton}[1]{\left\llbracket {#1} \right\rrbracket}
\newcommand{\abs}[1]{\left\vert {#1} \right\vert}
\newcommand{\abssq}[1]{{\abs{#1}}^2}
\newcommand{\sgn}[1]{\operatorname{sgn}{#1}}
\newcommand{\real}{\operatorname{Re}}
\newcommand{\imag}{\operatorname{Im}}
\newcommand{\diff}[1]{\mathrm{d}{#1}}
\newcommand{\idiff}[1]{\, \diff{#1}}
\newcommand{\argmin}{\operatorname*{arg\; min}}
\newcommand{\Prob}[1]{\operatorname{\mathbb{P}}\left\{ {#1} \right\}}
\newcommand{\Expect}{\operatorname{\mathbb{E}}}
\newcommand{\vct}[1]{\bm{#1}}
\newcommand{\mtx}[1]{\bm{#1}}
\newcommand{\adj}{*}
\newcommand{\diag}{\operatorname{diag}}
\newcommand{\supp}[1]{\operatorname{supp}(#1)}
\newcommand{\restrict}[1]{\big\vert_{#1}}
\newcommand{\ip}[2]{\left\langle {#1},\ {#2} \right\rangle}
\newcommand{\absip}[2]{\abs{\ip{#1}{#2}}}
\newcommand{\abssqip}[2]{\abssq{\ip{#1}{#2}}}
\newcommand{\norm}[1]{\left\Vert {#1} \right\Vert}
\newcommand{\normsq}[1]{\norm{#1}^2}
\newcommand{\enorm}[1]{\norm{#1}_2}
\newcommand{\enormsq}[1]{\enorm{#1}^2}
\newcommand{\fnorm}[1]{\norm{#1}_{\mathrm{F}}}
\newcommand{\pnorm}[2]{\norm{#2}_{#1}}
\newcommand{\infnorm}[1]{\norm{#1}_{\infty}}
\newcommand{\triplenorm}[1]{\left\vert\!\left\vert\!\left\vert {#1} \right\vert\!\right\vert\!\right\vert}
\newcommand{\subjto}{\quad\text{subject to}\quad}
\newcommand{\atom}{\vct{\phi}}
\newcommand{\Fee}{\mtx{\Phi}}
\newcommand{\bigO}{{\rm O}}
\newcommand{\sinc}{\operatorname{sinc}}
\begin{document}

\title{Beyond Nyquist: \\
Efficient Sampling of Sparse Bandlimited Signals}

\author{Joel A.~Tropp, \IEEEmembership{Member, IEEE},
Jason N.~Laska, \IEEEmembership{Student Member, IEEE},
Marco F.~Duarte, \IEEEmembership{Member, IEEE},
\\Justin K.~Romberg, \IEEEmembership{Member, IEEE},  and
Richard G.~Baraniuk, \IEEEmembership{Fellow, IEEE}
\thanks{Submitted: 30 January 2009. Revised: 12 September 2009.  A preliminary report on this work was presented by the first author at SampTA 2007 in Thessaloniki.}
\thanks{JAT was supported by ONR N00014-08-1-0883, DARPA/ONR N66001-06-1-2011
and N66001-08-1-2065, and NSF DMS-0503299.
JNL, MFD, and RGB were supported by DARPA/ONR N66001-06-1-2011 and
N66001-08-1-2065, ONR N00014-07-1-0936, AFOSR FA9550-04-1-0148, NSF
CCF-0431150, and the Texas Instruments Leadership University
Program.  JR was supported by NSF CCF-515632.
}}

\maketitle

\begin{abstract}
Wideband analog signals push contemporary analog-to-digital
conversion systems to their performance limits. In many
applications, however, sampling at the Nyquist rate is inefficient
because the signals of interest contain only a small number of
significant frequencies relative to the bandlimit, although the
locations of the frequencies may not be known a priori. For this
type of sparse signal, other sampling strategies are possible. This
paper describes a new type of data acquisition system, called a
\term{random demodulator}, that is constructed from robust, readily
available components.  Let $K$ denote the total number of
frequencies in the signal, and let $W$ denote its bandlimit in Hz.
Simulations suggest that the random demodulator requires just $\bigO( K
\log(W / K) )$ samples per second to stably reconstruct the signal.
This sampling rate is \emph{exponentially lower} than the Nyquist
rate of $W$ Hz.  In contrast with Nyquist sampling, one must use
nonlinear methods, such as convex programming, to recover the signal
from the samples taken by the random demodulator.  This paper provides
a detailed theoretical analysis of the system's performance that
supports the empirical observations.
\end{abstract}

\begin{keywords}
analog-to-digital conversion, compressive sampling, sampling theory,
signal recovery, sparse approximation
\end{keywords}

\begin{center}
Dedicated to the memory of Dennis M.~Healy.
\end{center}

\section{Introduction}

\PARstart{T}{he} Shannon sampling theorem is one of the foundations
of modern signal processing.  For a continuous-time signal $f$ whose
highest frequency is less than $W/2$ Hz, the theorem suggests that
we sample the signal uniformly at a rate of $W$ Hz.  The values of
the signal at intermediate points in time are determined completely
by the \term{cardinal series}
\begin{equation} \label{eqn:cardinal}
f(t) = \sum\nolimits_{n \in \mathbb{Z}} f\left(\frac{n}{W} \right)
   \sinc\left( Wt - n \right).
\end{equation}
In practice, one typically samples the signal at a somewhat higher
rate and reconstructs with a kernel that decays faster than the
$\sinc$ function \cite[Ch.~4]{OSB99:Discrete-Time-Signal}.

This well-known approach becomes impractical when the bandlimit $W$
is large because it is challenging to build sampling hardware that
operates at a sufficient rate.  The demands of many modern
applications already exceed the capabilities of current technology.
Even though recent developments in analog-to-digital converter (ADC)
technologies have increased sampling speeds, state-of-the-art
architectures are not yet adequate for emerging applications, such
as ultrawideband and radar systems because of the additional
requirements on power consumption~\cite{LeSPMag05}.  The time has
come to explore alternative techniques
\cite{Hea05:Analog-to-Information}.

\subsection{The Random Demodulator}

In the absence of extra information, Nyquist-rate sampling is
essentially optimal for bandlimited signals
\cite{Lan67:Sampling-Data}.  Therefore, we must identify other
properties that can provide additional leverage.  Fortunately, in
many applications, signals are also {\em sparse}.  That is, the
number of significant frequency components is often much smaller
than the bandlimit allows.  We can exploit this fact to design new
kinds of sampling hardware.

This paper studies the performance of a new type of sampling
system---called a \term{random demodulator}---that can be
used to acquire sparse, bandlimited signals.
Figure~\ref{fig:block} displays a block
diagram for the system, and Figure~\ref{fig:demod} describes the
intuition behind the design.  In summary, we demodulate the signal
by multiplying it with a high-rate pseudonoise sequence, which
smears the tones across the entire spectrum.  Then we apply a
lowpass anti-aliasing filter, and we capture the signal by sampling
it at a relatively low rate.  As illustrated in
Figure~\ref{fig:signatures}, the demodulation process ensures that
each tone has a distinct signature within the passband of the
filter.  Since there are few tones present, it is possible to
identify the tones and their amplitudes from the low-rate samples.

\begin{figure}
\begin{center}
\includegraphics[width=.9\columnwidth]{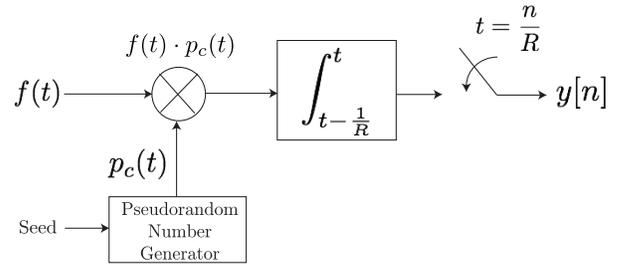}
\end{center}
\caption{{\sl Block diagram for the random demodulator.}  The components include a random number generator, a mixer, an accumulator, and a sampler.}
\label{fig:block}
\end{figure}

\begin{figure}
\begin{center}
\includegraphics[width=\columnwidth]{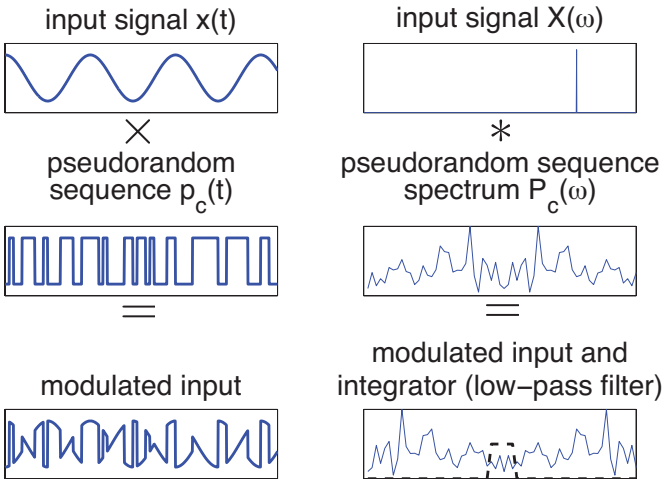}
\end{center}
\caption{{\sl Action of the demodulator on a pure tone.} The demodulation process multiplies the continuous-time input signal by a random square wave.  The action of the system on a single tone is illustrated in the time domain (left) and the frequency domain (right).  The dashed line indicates the frequency response of the lowpass filter.  See Figure~\ref{fig:signatures} for an enlargement of the filter's passband.} \label{fig:demod}
\end{figure}

\begin{figure}
\begin{center}
\includegraphics[width=\columnwidth]{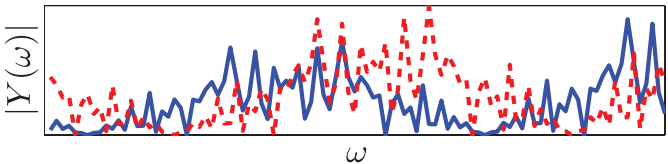}
\end{center}
\vspace*{-3mm} \caption{{\sl Signatures of two different tones.}
The random demodulator furnishes each
frequency with a unique signature that can be discerned by examining
the passband of the antialiasing filter.  This image enlarges the
pass region of the demodulator's output for two input tones (solid
and dashed). The two signatures are nearly orthogonal when their
phases are taken into account.} \label{fig:signatures}
\end{figure}

The major advantage of the random demodulator is that it bypasses
the need for a high-rate ADC. Demodulation is typically much easier
to implement than sampling, yet it allows us to use a low-rate ADC.
As a result, the system can be constructed from robust, low-power,
readily available components even while it can acquire
higher-bandlimit signals than traditional sampling hardware.

We do pay a price for the slower sampling rate: It is no longer
possible to express the original signal $f$ as a linear function of the
samples, \lang{{\`a} la} the cardinal series \eqref{eqn:cardinal}.
Rather, $f$ is encoded into the measurements in a more subtle
manner.  The reconstruction process is highly nonlinear, and must
carefully take advantage of the fact that the signal is sparse. As a
result, signal recovery becomes more computationally intensive. In
short, the random demodulator uses additional digital processing to
reduce the burden on the analog hardware. This tradeoff seems
acceptable, as advances in digital computing have outpaced those in
analog-to-digital conversion.

\subsection{Results}

Our simulations provide striking evidence that the random demodulator performs.
Consider a periodic signal with a bandlimit of $W/2$ Hz, and suppose
that it contains $K$ tones with random frequencies and phases.
Our experiments below show that, with high probability, the system
acquires enough information to reconstruct the signal after sampling
at just $\bigO( K \log(W/K) )$ Hz.  In words, the sampling rate is
proportional to the number $K$ of tones and the logarithm of the
bandwidth $W$.  In contrast, the usual approach requires sampling at
$W$ Hz, regardless of $K$. In other words, the random demodulator
operates at an \emph{exponentially} slower sampling rate!  We also
demonstrate that the system is effective for reconstructing simple
communication signals.

Our theoretical work supports these empirical conclusions, but it
results in slightly weaker bounds on the sampling rate.  We have
been able to prove that a sampling rate of $\bigO( K \log W  +
\log^3 W )$ suffices for high-probability recovery of the random
signals we studied experimentally.  This analysis also suggests that
there is a small startup cost when the number of tones is small, but
we did not observe this phenomenon in our experiments.  It remains
an open problem to explain the computational results in complete
detail.

The random signal model arises naturally in numerical experiments,
but it does not provide an adequate description of real signals,
whose frequencies and phases are typically far from random.  To
address this concern, we have established that the random demodulator can
acquire {\em all} $K$-tone signals---regardless of the frequencies,
amplitudes, and phases---when the sampling rate is $\bigO(K \log^6
W)$.  In fact, the system does not even require the spectrum of the
input signal to be sparse; the system can successfully recover any
signal whose spectrum is well-approximated by $K$ tones.  Moreover,
our analysis shows that the random demodulator is robust against
noise and quantization errors.

This work focuses on input signals drawn from a specific
mathematical model, framed in Section~\ref{sec:sigmod}.  Many real
signals have sparse spectral occupancy, even though they do not meet
all of our formal assumptions.  We propose a device, based on the
classical idea of \term{windowing}, that allows us to approximate
general signals by signals drawn from our model.  Therefore, our
recovery results for the idealized signal class extend to signals
that we are likely to encounter in practice.

In summary, we believe that these empirical and theoretical results,
taken together, provide compelling evidence that the demodulator
system is a powerful alternative to Nyquist-rate sampling for sparse
signals.

\subsection{Outline}

In Section~\ref{sec:sigmod}, we present a mathematical model for the
class of sparse, bandlimited signals.  Section~\ref{sec:rdm-demod}
describes the intuition and architecture of the random demodulator,
and it addresses the nonidealities that may affect its performance.
In Section~\ref{sec:demod-mtx}, we model the action of the random
demodulator as a matrix.  Section~\ref{sec:algs} describes
computational algorithms for reconstructing frequency-sparse signals
from the coded samples provided by the demodulator.  We continue
with an empirical study of the system in Section~\ref{sec:exp}, and
we offer some theoretical results in Section~\ref{sec:theory} that
partially explain the system's performance.
Section~\ref{sec:prewindowing} discusses a windowing technique that
allows the demodulator to capture nonperiodic signals.  We conclude
with a discussion of potential technological impact and related work
in Sections~\ref{sec:discussion} and Section~\ref{sec:related}.
Appendices~\ref{app:rand-demod-mtx},~\ref{app:random-phase}
and~\ref{app:rip} contain proofs of our signal reconstruction
theorems.

\section{The Signal Model} \label{sec:sigmod}

Our analysis focuses on a class of discrete, multitone signals that
have three distinguished properties:
\begin{itemize}
\item   {\bf Bandlimited.}  The maximum frequency is bounded.

\item   {\bf Periodic.}  Each tone has an integral frequency in Hz.

\item   {\bf Sparse.}   The number of active tones is small in comparison with the bandlimit.
\end{itemize}
Our work shows that the random demodulator can recover these signals very efficiently.  Indeed, the number of samples per unit time scales directly with the sparsity, but it increases only logarithmically in the bandlimit.

At first, these discrete multitone signals may appear simpler than the signals that arise in most applications. For example, we often encounter signals that contain nonharmonic tones or signals that contain continuous bands of active frequencies rather than discrete tones.  Nevertheless, these broader signal classes can be approximated within our model by means of \term{windowing} techniques.  We address this point in Section~\ref{sec:prewindowing}.

\subsection{Mathematical Model}

Consider the following mathematical model for a class of discrete
multitone signals.  Let $W/2$ be a positive integer that exceeds the
highest frequency present in the continuous-time signal $f$.
Fix a number $K$ that represents the number of active tones.
The model contains each signal of the form
\begin{equation} \label{eqn:cont-time}
f(t) = \sum\nolimits_{\omega \in \Omega}
   a_\omega \, \econst^{-2\pi \iunit \omega t}
\qquad\text{for $t \in [0, 1)$.}
\end{equation}
Here, $\Omega$ is a set of $K$ integer-valued frequencies that satisfies
$$
\Omega \subset \{ 0, \pm 1, \pm 2, \dots, \pm (W/2 - 1), W/2 \},
$$
and
$$
\{ a_\omega : \omega \in \Omega \}
$$
is a set of complex-valued amplitudes.  We focus on the case where the
number $K$ of active tones is much smaller than the bandwidth $W$.

To summarize, the signals of interest are {\em bandlimited} because they contain no frequencies above $W/2$ cycles per second; {\em periodic} because the frequencies are integral; and {\em sparse} because the number of tones $K \ll W$.

Let us emphasize several conceptual points about this model:
\begin{itemize}
\item   We have normalized the time interval to one second for simplicity.  Of course, it is possible to consider signals at another time resolution.

\item   We have also normalized frequencies.  To consider signals whose frequencies are drawn from a set equally spaced by $\Delta$, we would change the effective bandlimit to $W/\Delta$.

\item   The model also applies to signals that are sparse and bandlimited in a single time interval.  It can be extended to signals where the model \eqref{eqn:cont-time} holds with a different set of frequencies and amplitudes in each time interval $[0, 1), [1, 2), [2, 3), \dots$.  These signals are sparse not in the Fourier domain but rather in the short-time Fourier domain \cite[Ch.~IV]{Mal99:Wavelet-Tour}.
\end{itemize}

\subsection{Information Content of Signals}
\label{subsec:infocontent}

According to the sampling theorem, we can identify signals from the model \eqref{eqn:cont-time} by sampling for one second at $W$ Hz.  Yet these signals contain only $R = \bigO(K \log(W/K) )$ bits of information.  In consequence, it is reasonable to expect that we can acquire these signals using only $R$ digital samples.

Here is one way to establish the information bound.  Stirling's approximation shows that there are about $\exp\{ K \log(W/K) + \bigO(K) \}$ ways to select $K$ distinct integers in the range $\{1, 2, \dots, W\}$.  Therefore, it takes $\bigO( K \log (W/K) )$ bits to encode the frequencies present in the signal.  Each of the $K$ amplitudes can be approximated with a fixed number of bits, so the cost of storing the frequencies dominates.

\subsection{Examples}

There are many situations in signal processing where we encounter
signals that are sparse or locally sparse in frequency.
Here are some basic examples:
\begin{itemize}

\item{\bf Communications signals}, such as transmissions with a frequency hopping modulation scheme that switches a sinusoidal carrier among many frequency
channels according to a predefined (often pseudorandom) sequence.
Other examples include transmissions with narrowband modulation where the carrier frequency is unknown but could lie anywhere in a wide bandwidth.

\item{\bf Acoustic signals}, such as musical signals where each note
consists of a dominant sinusoid with a progression of several harmonic overtones.

\item{\bf Slowly varying chirps}, as used in radar and geophysics,
that slowly increase or decrease the frequency of a sinusoid over
time.

\item {\bf Smooth signals} that require only a few Fourier
coefficients to represent.

\item {\bf Piecewise smooth signals} that are differentiable except for
a small number of step discontinuities.
\end{itemize}

We also note several concrete applications where sparse
wideband signals are manifest.
Surveillance systems may acquire a broad swath of
Fourier bandwidth that contains only a few communications signals.
Similarly, cognitive radio applications rely on the fact that parts of
the spectrum are not occupied~\cite{CabMisBro::2004::Implementation-Issues},
so the random demodulator could be used to perform \emph{spectrum sensing}
in certain settings.
Additional potential applications include geophysical imaging, where
two- and three-dimensional seismic data can be modeled as piecewise
smooth (hence, sparse in a local Fourier representation)~
\cite{HerHen::2007::Non-parametric-seismic}, as well as radar and sonar
imaging~\cite{BarSte::2007::Compressive-radar,HerStr::2009::High-resolution-radar}.

\section{The Random Demodulator} \label{sec:rdm-demod}

This section describes the random demodulator system that we propose for signal acquisition.  We first discuss the intuition behind the system and its design.
Then we address some implementation issues and nonidealities that impact its performance.

\subsection{Intuition}

The random demodulator performs three basic actions: demodulation, lowpass filtering, and low-rate sampling. Refer back to Figure~\ref{fig:block} for the block diagram.
In this section, we offer a short explanation of why this approach allows us to acquire sparse signals.

Consider the problem of acquiring a single high-frequency tone that lies within a wide spectral band.  Evidently, a low-rate sampler with an antialiasing filter is oblivious to any tone whose frequency exceeds the passband of the filter.  The random demodulator deals with the problem by smearing the tone across the entire spectrum so that it leaves a signature that can be detected by a low-rate sampler.

More precisely, the random demodulator forms a (periodic) square wave that randomly alternates at or above the Nyquist rate.  This random signal is a sort of periodic approximation to white noise.
When we multiply a pure tone by this random square wave, we simply translate the spectrum of the noise, as documented in Figure~\ref{fig:demod}.  The key point is that translates of the noise spectrum look completely different from each other, even when restricted to a narrow frequency band, which Figure~\ref{fig:signatures} illustrates.

Now consider what happens when we multiply a frequency-sparse signal by the random square wave.  In the frequency domain, we obtain a superposition of translates of the noise spectrum, one translate for each tone. Since the translates are so distinct, each tone has its own signature.  The original signal contains few tones, so we can disentangle them by examining a small slice of the spectrum of the demodulated signal.

To that end, we perform lowpass filtering to prevent aliasing, and we sample with a low-rate ADC.  This process results in coded samples that contain a complete representation of the original sparse signal.  We discuss methods for decoding the samples in Section~\ref{sec:algs}.

\subsection{System Design}

Let us present a more formal description of the random demodulator shown in Figure~\ref{fig:block}.  The first two components implement the demodulation process.  The first piece is a random number generator, which produces a discrete-time sequence $\eps_0, \eps_1, \eps_2 \dots$ of numbers that take values $\pm 1$ with equal probability.  We refer to this as the \term{chipping sequence}.  The chipping sequence is used to create a continuous-time demodulation signal $p_c(t)$ via the formula
$$
p_c(t) = \eps_n, \quad\text{$t \in \biggl[\frac{n}{W},\frac{n+1}{W} \biggr)$ and $n = 0, 1, \dots, W - 1$.}
$$
In words, the demodulation signal switches between the levels $\pm 1$ randomly at the Nyquist rate of $W$ Hz. Next, the mixer multiplies the continuous-time input $f(t)$ by the demodulation signal $p_c(t)$ to obtain a continuous-time demodulated signal
$$
y(t) = f(t) \cdot p_c(t),
\quad\text{$t \in [0, 1)$.}
$$
Together these two steps smear the frequency spectrum of the original signal via the convolution
$$
Y(\omega) = (F * P_c)(\omega).
$$
See Figure~\ref{fig:demod} for a visual.

The next two components behave the same way as a standard ADC, which performs lowpass filtering to prevent aliasing and then samples the signal.  Here, the lowpass filter is simply an accumulator that sums the demodulated signal $y(t)$ for $1/R$ seconds.  The filtered signal is sampled instantaneously every $1/R$ seconds to obtain a sequence $\{y_m\}$ of measurements.  After each sample is taken, the accumulator is reset. In summary,
$$
y_m = R\int_{m/R}^{(m+1)/R} y(t) \,{\rm d}t,
\quad\text{$m = 0, 1, \dots, R - 1$.}
$$
This approach is called \term{integrate-and-dump} sampling.  Finally, the samples are quantized to a finite precision.  (In this work, we do not model the final quantization step.)

The fundamental point here is that the sampling rate $R$ is much lower than the Nyquist rate $W$.  We will see that $R$ depends primarily on the number $K$ of significant frequencies that participate in the signal.

\subsection{Implementation and Nonidealities}

Any reasonable system for acquiring continuous-time signals must be
implementable in analog hardware.  The system that we propose is built from robust, readily available components.  This subsection briefly discusses some of the engineering issues.

In practice, we generate the chipping sequence with a pseudorandom number generator.  It is preferable to use pseudorandom numbers for several reasons: they are easier to generate; they are easier to store; and their structure can be exploited by digital algorithms.  Many types of pseudorandom generators can be fashioned from basic hardware components.  For example, the Mersenne twister \cite{MN98:Mersenne-Twister} can be implemented with shift registers.
In some applications, it may suffice just to fix a chipping sequence in advance.

The performance of the random demodulator is unlikely to suffer from the fact that the chipping sequence is not completely random.  We have been able to prove that if the chipping sequence consists of $\ell$-wise independent random variables (for an appropriate value of $\ell$), then the demodulator still offers the same guarantees.  Alon et al.~have demonstrated that shift registers can generate a related class of random variables~\cite{AGHP92:Simple-Constructions}.

The mixer must operate at the Nyquist rate $W$.  Nevertheless, the chipping sequence alternates between the levels $\pm 1$, so the mixer only needs to reverse the polarity of the signal.  It is relatively easy to perform this step using inverters and multiplexers.  Most conventional mixers trade speed for linearity, i.e., fast transitions may result in incorrect products.  Since the random demodulator only needs to reverse polarity of the signal, nonlinearity is not the primary nonideality.  Instead, the bottleneck for the speed of the mixer is the settling times of inverters and multiplexors, which determine the length of time it takes for the output of the mixer to reach steady state.

The sampler can be implemented with an off-the-shelf ADC.  It suffers the same types of nonidealities as any ADC, including thermal noise, aperture jitter, comparator ambiguity, and so forth \cite{Walden99:ADC}.  Since the random demodulator operates at a relatively low sampling rate, we can use high-quality ADCs, which exhibit fewer problems.

In practice, a high-fidelity integrator is not required.  It suffices to perform lowpass filtering before the samples are taken.  It is essential, however, that the impulse response of this filter can be characterized very accurately.

The net effect of these nonidealities is much like the addition of noise to the signal.  The signal reconstruction process is very robust, so it performs well even in the presence of noise.  Nevertheless, we must emphasize that, as with any device that employs mixed signal technologies, an end-to-end random demodulator system must be calibrated so that the digital algorithms are aware of the nonidealities in the output of the analog hardware.

\section{Random Demodulation in Matrix Form} \label{sec:demod-mtx}

In the ideal case, the random demodulator is a linear system that maps a continuous-time signal to a discrete sequence of samples.  To understand its performance, we prefer to express the system in matrix form.  We can then study its properties using tools from matrix analysis and functional analysis.

\subsection{Discrete-Time Representation of Signals}

The first step is to find an appropriate discrete representation for the space of continuous-time input signals.  To that end, note that each $(1/W)$-second block of the signal is multiplied by a random sign.  Then these blocks are aggregated, summed, and sampled.  Therefore, part of the time-averaging performed by the accumulator commutes with the demodulation process.  In other words, we can average the input signal over blocks of duration $1/W$ without affecting subsequent steps.

Fix a time instant of the form $t_n = n / W$ for an integer $n$.  Let $x_n$ denote the average value of the signal $f$ over a time interval of length
$1/W$ starting at $t_n$.  Thus,
\begin{align}
x_n &= \int_{t_n}^{t_n + 1/W} f(t) \idiff{t} \notag \\
&= \sum_{\omega \in \Omega} a_\omega \left[
   \frac{\econst^{-2\pi\iunit\omega / W} - 1}{2\pi\iunit \omega}
   \right] \econst^{-2\pi\iunit \omega t_n} \label{eqn:dt}
\end{align}
with the convention that, for the frequency $\omega = 0$, the
bracketed term equals $1/W$.  Since $\abs{\omega} \leq W/2$,
the bracket never equals zero.  Absorbing the brackets into the
amplitude coefficients, we obtain a discrete-time representation $x_n$ of the signal $f(t)$:
$$
x_n = \sum_{\omega \in \Omega} s_\omega \,
   \econst^{-2\pi\iunit n \omega / W}
\quad\text{for $n = 0, 1, \dots, W - 1$}
$$
where
$$
s_{\omega} = a_{\omega} \left[
   \frac{\econst^{-2\pi\iunit\omega / W} - 1}{2\pi\iunit \omega}
   \right].
$$
In particular, a continuous-time signal that involves only the
frequencies in $\Omega$ can be viewed as a discrete-time signal
comprised of the same frequencies. We refer to the complex vector
$\vct{s}$ as an \term{amplitude vector}, with the understanding that
it contains phase information as well.

The nonzero components of the length-$W$ vector $\vct{s}$ are listed
in the set $\Omega$. We may now express the discrete-time signal
$\vct{x}$ as a matrix--vector product.  Define the $W \times W$
matrix
\begin{multline*}
\mtx{F} = \frac{1}{\sqrt{W}} \begin{bmatrix}
   \econst^{-2\pi\iunit n \omega / W}
\end{bmatrix}_{n, \omega} \qquad\text{where} \\
n = 0, 1, \dots, W - 1
\text{ and } \\
\omega = 0, \pm 1, \dots, \pm \left( \frac{W}{2} - 1\right), \frac{W}{2}.
\end{multline*}
The matrix $\mtx{F}$ is a simply a permuted discrete Fourier transform (DFT) matrix.  In particular, $\mtx{F}$ is unitary and its entries share the magnitude $W^{-1/2}$.

In summary, we can work with a discrete representation
$$
\vct{x} = \mtx{F}\vct{s}
$$
of the input signal.

\subsection{Action of the Demodulator}

We view the random demodulator as a linear system acting on the
discrete form $\vct{x}$ of the continuous-time signal $f$.

First, we consider the effect of random demodulation on the discrete-time signal.  Let $\eps_0, \eps_1, \dots, \eps_{W-1}$ be the chipping sequence.  The demodulation step multiplies each $x_n$, which is the average of $f$ on the $n$th time interval, by the random sign $\eps_n$.  Therefore, demodulation corresponds to the map $\vct{x} \mapsto \mtx{D} \vct{x}$ where
$$
\mtx{D} =
\begin{bmatrix}
\eps_0 \\
& \eps_1 \\
&& \ddots \\
&&& \eps_{W-1}
\end{bmatrix}
$$
is a $W \times W$ diagonal matrix.

Next, we consider the action of the accumulate-and-dump sampler.
Suppose that the sampling rate is $R$, and assume that $R$ divides
$W$.  Then each sample is the sum of $W/R$ consecutive entries of
the demodulated signal.  Therefore, the action of the sampler can be
treated as an $R \times W$ matrix $\mtx{H}$ whose $r$th row has
$W/R$ consecutive unit entries starting in column $rW/R + 1$ for
each $r = 0, 1, \dots, R - 1$.  An example with $R = 3$ and $W = 12$
is
$$
\mtx{H} =
\left[ \begin{array}{cccccccccccc}
1 & 1 & 1 & 1 \\
 &   &   &   & 1 & 1 & 1 & 1 \\
 &   &   &   &   &   &   &   & 1 & 1 & 1 & 1 \\
\end{array} \right].
$$

When $R$ does not divide $W$, two samples may share contributions from a single element of the chipping sequence.  We choose to address this situation by allowing the matrix $\mtx{H}$ to have fractional elements in some of its columns.  An example with $R=3$ and $W=7$ is
$$
\mtx{H} =
\left[ \begin{array}{ccccccc}
1 & 1  & \sqrt{1/3} \\
  &   & \sqrt{2/3} & 1 &  \sqrt{2/3} \\
  &  &   &   &  \sqrt{1/3} & 1 & 1 \\
\end{array} \right].
$$
We have found that this device provides an adequate approximation to
the true action of the system.  For some applications, it may be
necessary to exercise more care.

In summary, the matrix $\mtx{M} = \mtx{HD}$ describes the action of the hardware system on the discrete signal $\vct{x}$.  Each row of the matrix yields a separate sample of the input signal.

The matrix $\Fee = \mtx{MF}$ describes the overall action of the system on the vector $\vct{s}$ of amplitudes. This matrix $\Fee$ has a special place in our analysis, and we refer to it as a \term{random demodulator matrix}.

\subsection{Prewhitening}

It is important to note that the bracket in \eqref{eqn:dt} leads to a nonlinear attenuation of the amplitude coefficients.  In a hardware implementation of the random demodulator, it may be advisable to apply a prewhitening filter to preserve the magnitudes of the amplitude coefficients.  On the other hand, prewhitening amplifies noise in the high frequencies.  We leave this issue for future work.

\section{Signal Recovery Algorithms} \label{sec:algs}

The Shannon sampling theorem provides a simple linear method for reconstructing a bandlimited signal from its time samples.  In contrast, there is no linear process for reconstructing the input signal from the output of the random demodulator because we must incorporate the highly nonlinear sparsity constraint into the reconstruction process.

Suppose that $\vct{s}$ is a sparse amplitude vector, and let $\vct{y} = \Fee\vct{s}$ be the vector of samples acquired by the random demodulator.  Conceptually, the way to recover $\vct{s}$ is to solve the mathematical program
\begin{equation} \label{eqn:l0}
\widehat{\vct{s}} =
\arg\min \pnorm{0}{\vct{v}}
\subjto
\Fee \vct{v} = \vct{y}
\end{equation}
where the $\ell_0$ function $\pnorm{0}{\cdot}$ counts the number of nonzero entries in a vector.  In words, we seek the sparsest amplitude vector that generates the samples we have observed.  The presence of the $\ell_0$ function gives the problem a combinatorial character, which may make it computationally difficult to solve completely.

Instead, we resort to one of the signal recovery algorithms from the sparse approximation or compressive sampling literature.  These techniques fall in two rough classes: convex relaxation and greedy pursuit.  We describe the advantages and disadvantages of each approach in the sequel.

This work concentrates on convex relaxation methods because they are
more amenable to theoretical analysis.  Our purpose here is not to
advocate a specific algorithm but rather to argue that the random
demodulator has genuine potential as a method for acquiring signals
that are spectrally sparse.  Additional research on algorithms will
be necessary to make the technology viable.  See Section~\ref{sec:speed}
for discussion of how quickly we can perform signal recovery with
contemporary computing architectures.

\subsection{Convex Relaxation}
\label{sec:convexrelax}

The problem~\eqref{eqn:l0} is difficult because of the unfavorable properties of the $\ell_0$ function.  A fundamental method for dealing with this challenge is to relax the $\ell_0$ function to the $\ell_1$ norm, which may be viewed as the convex function ``closest'' to $\ell_0$.  Since the $\ell_1$ norm is convex, it can be minimized subject to convex constraints in polynomial time~\cite{CDS99:Atomic-Decomposition}.

Let $\vct{s}$ be the unknown amplitude vector, and let $\vct{y} = \Fee \vct{s}$ be the vector of samples acquired by the random demodulator.  We attempt to identify the amplitude vector $\vct{s}$ by solving the convex optimization problem
\begin{equation} \label{eqn:p1}
\widehat{\vct{s}} = 
\arg\min \pnorm{1}{ \vct{v} } \subjto \Fee
\vct{v} = \vct{y}.
\end{equation}
In words, we search for an amplitude vector that yields the same samples and has the least $\ell_1$ norm.  On account of the geometry of the $\ell_1$ ball, this method promotes sparsity in the estimate $\widehat{\vct{s}}$.

The problem \eqref{eqn:p1} can be recast as a second-order cone program.  In our work, we use an old method, iteratively reweighted least squares (IRLS), for performing the optimization \cite[173ff]{Bjo96:Numerical-Methods}.  It is known that IRLS converges linearly for certain signal recovery problems \cite{DDFG08:Iteratively-Reweighted}.  It is also possible to use interior-point methods, as proposed by Cand{\`e}s et al.~\cite{CRT06:Robust-Uncertainty}.

Convex programming methods for sparse signal recovery problems are very powerful.  Second-order methods, in particular, seem capable of achieving very good reconstructions of signals with wide dynamic range.  Recent work suggests that optimal first-order methods provide similar performance with lower computational overhead~\cite{BBC09:Nesta-Fast}.

\subsection{Greedy Pursuits}
\label{sec:greedy}

To produce sparse approximate solutions to linear systems, we can also use another class of methods based on greedy pursuit.  Roughly, these algorithms build up a sparse solution one step at a time by adding new components that yield the greatest immediate improvement in the approximation error.  An early paper of Gilbert and Tropp analyzed the performance of an algorithm called Orthogonal Matching Pursuit for simple compressive sampling problems~\cite{TG07:Signal-Recovery}.  More recently, work of Needell and Vershynin~\cite{NV07:Uniform-Uncertainty,NV07:ROMP-Stable} and work of Needell and Tropp~\cite{NT08:CoSaMP-Iterative} has resulted in greedy-type algorithms whose theoretical performance guarantees are analogous with those of convex relaxation methods.

In practice, greedy pursuits tend to be effective for problems where the solution is ultra-sparse.  In other situations, convex relaxation is usually more powerful.  On the other hand, greedy techniques have a favorable computational profile, which makes them attractive for large-scale problems.

\subsection{Impact of Noise}

In applications, signals are more likely to be \term{compressible} than to be sparse.  (Compressible signals are not sparse but can be approximated by sparse signals.)  Nonidealities in the hardware system lead to noise in the measurements.  Furthermore, the samples acquired by the random demodulator are quantized.   Modern convex relaxation methods and greedy pursuit methods are robust against all these departures from the ideal model.  We discuss the impact of noise on signal reconstruction algorithms in Section~\ref{sec:rip}.

In fact, the major issue with all compressive sampling methods is not the presence of noise {\em per se}.  Rather, the process of compressing the signal's information into a small number of samples inevitably decreases the signal-to-noise ratio.  In the design of compressive sampling systems, it is essential to address this issue.

\section{Empirical Results} \label{sec:exp}

We continue with an empirical study of the minimum measurement rate required to accurately reconstruct signals with the random demodulator.  Our results are phrased in terms of the Nyquist rate $W$, the sampling rate $R$, and the sparsity level $K$.  It is also common to combine these parameters into scale-free quantities.  The \term{compression factor} $R/W$ measures the improvement in the sampling rate over the Nyquist rate, and the \term{sampling efficiency} $K/R$ measures the number of tones acquired per sample.  Both these numbers range between zero and one.

Our results lead us to an empirical rule for the sampling rate necessary to recover \emph{random} sparse signals using the demodulator system:
\begin{equation} \label{eqn:emp-rate}
R \approx 1.7 K \log(W/K + 1).
\end{equation}
The empirical rate is similar to the weak phase transition threshold that Donoho and Tanner calculated for compressive sensing problems with a Gaussian sampling matrix~\cite{Don::2006::High-Dimensional-Centrally-Symmetric}. The form of this relation also echoes the form of the information bound developed in Section~\ref{subsec:infocontent}.

\subsection{Random Signal Model}
\label{subsec:sigmodel}

We begin with a stochastic model for frequency-sparse discrete signals.  To that end, define the \term{signum function}
$$
\sgn( r \econst^{\iunit \theta} ) \defby \begin{cases}
0, & r = 0, \\
\econst^{\iunit\theta}, & r > 0.
\end{cases}
$$
The model is described in the following box:  

\vspace{1pc}
\begin{center}
\renewcommand{\arraystretch}{1.25}
\begin{tabular}{|rcp{.51\columnwidth}|}
\hline
\multicolumn{3}{|c|}{Model (A) for a random amplitude vector $\vct{s}$} \\
\hline\hline
Frequencies: & $\Omega$ & is a uniformly random set of $K$ frequencies from $\{0, \pm 1, \dots, \pm(W/2 - 1), W/2\}$ \\
\hline
Amplitudes: & $\abs{s_\omega}$ & is \emph{arbitrary} for each $\omega \in \Omega$ \\
   & $s_\omega = 0$ & for each $\omega \notin \Omega$ \\
\hline
Phases: & $\sgn(s_\omega)$ & is i.i.d.\ uniform on the unit circle for each $\omega \in \Omega$ \\
\hline
\end{tabular}
\end{center}
\vspace{1pc}

In our experiments, we set the amplitude of each nonzero coefficient equal to one because the success of $\ell_1$ minimization does not depend on the amplitudes.

\subsection{Experimental Design}

Our experiments are intended to determine the minimum sampling rate $R$ that is necessary to identify a $K$-sparse signal with a bandwidth of $W$ Hz.  In each trial, we perform the following steps:
\begin{enumerate}
\item \textbf{Input Signal.} A \emph{new} amplitude vector $\vct{s}$ is drawn at random according to Model (A) in Section~\ref{subsec:sigmodel}.  The amplitudes all have magnitude one.

\item \textbf{Random demodulator.} A \emph{new} random demodulator $\Fee$ is drawn with parameters $K$, $R$, and $W$.  The elements of the chipping sequence are independent random variables, equally likely to be $\pm 1$.

\item \textbf{Sampling.}  The vector of samples is computed using the expression $\vct{y} = \Fee \vct{s}$.

\item \textbf{Reconstruction.} An estimate $\widehat{\vct{s}}$ of the amplitude vector is computed with IRLS.
\end{enumerate}

For each triple $(K, R, W)$, we perform 500 trials.  We declare the experiment a success when $\vct{s} = \widehat{\vct{s}}$ to machine precision, and we report the smallest value of $R$ for which the empirical failure probability is less than 1\%.

\subsection{Performance Results} \label{sec:synthetic}

We begin by evaluating the relationship between the signal bandwidth
$W$ and the sampling rate $R$ required to achieve a high probability
of successful reconstruction.  Figure~\ref{fig:growW} shows the
experimental results for a fixed sparsity of $K=5$ as $W$ increases
from 128 to 2048 Hz, denoted by solid discs. The solid line shows the
result of a linear regression on the experimental data, namely
$$
R = 1.69K\log(W/K + 1) + 4.51.
$$
The variation
about the regression line is probably due to arithmetic effects that
occur when the sampling rate does not divide the bandlimit.

\begin{center}
\begin{figure}
\centerline{
\includegraphics[width=.9\columnwidth]{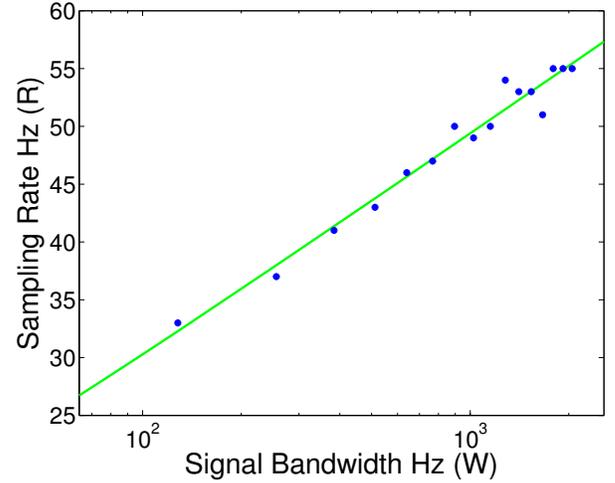}}
\caption{{\sl Sampling rate as a function of signal bandwidth.}  The sparsity is fixed at $K = 5$.  The solid discs mark the lowest sampling rate $R$ that achieves successful reconstruction with probability $0.99$.  The solid line denotes the linear least-squares fit $R = 1.69K\log(W/K + 1) + 4.51$ for the data.}
\label{fig:growW}
\end{figure}
\end{center}

Next, we evaluate the relationship between the sparsity $K$ and the sampling rate $R$. Figure~\ref{fig:growK} shows the experimental results for a fixed chipping rate of $W=512$ Hz as the sparsity $K$ increases from 1 to 64. The figure also shows that the linear regression give a close fit for the data.
$$
R = 1.71K\log(W/K + 1) + 1.00
$$
is the empirical trend.

\begin{center}
\begin{figure}
\centerline{
\includegraphics[width=.9\columnwidth]{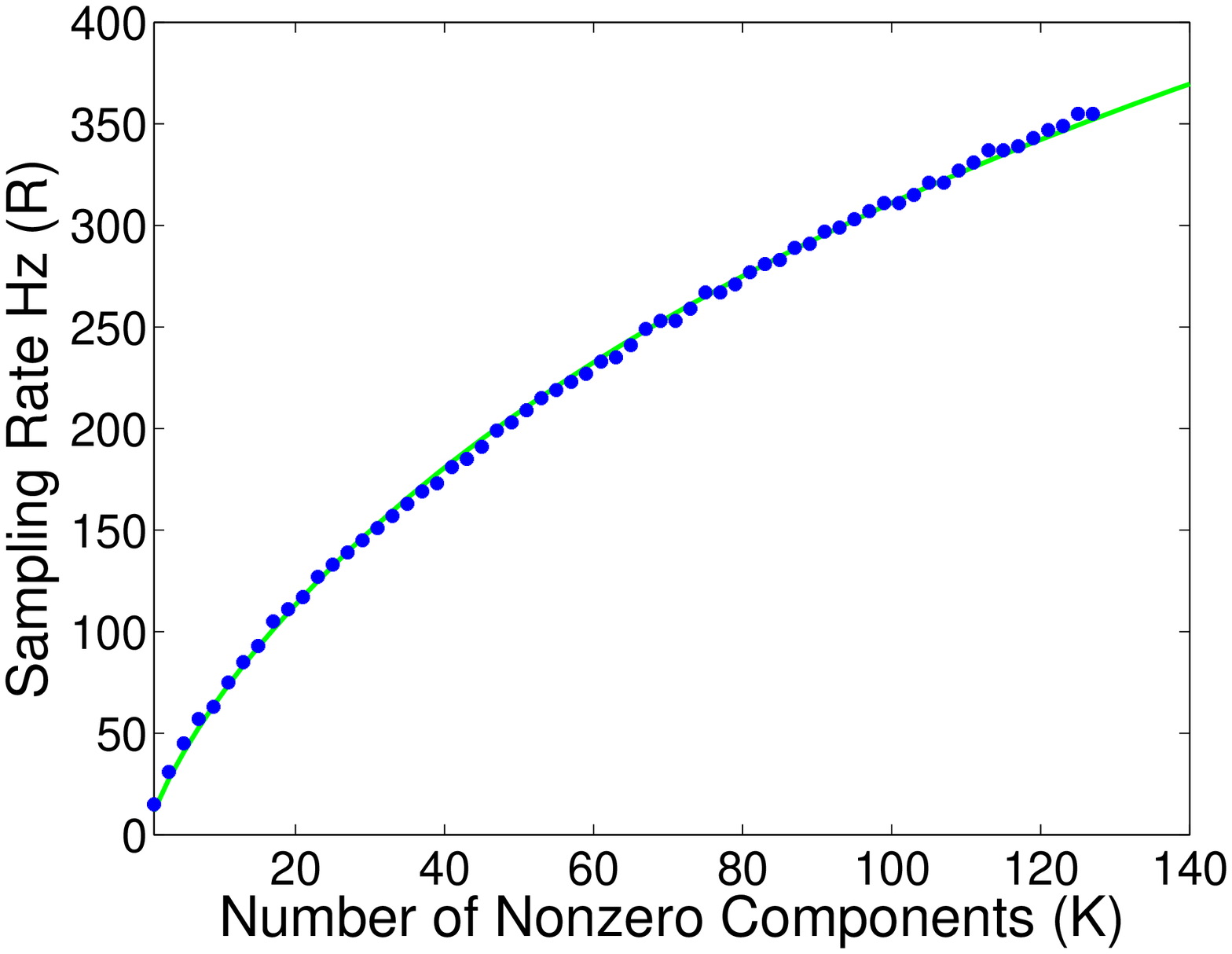}}
\caption{{\sl Samping rate as a function of sparsity.}  The bandlimit is fixed at $W = 512$ Hz.  The solid discs mark the lowest sampling rate that achieves successful reconstruction with probability $0.99$.  The solid line denotes the linear least-squares fit $R = 1.71K\log(W/K + 1) + 1.00$ for the data.}
\label{fig:growK}
\end{figure}
\end{center}

The regression lines from these two experiments suggest that successful reconstruction of signals from Model (A) occurs with high probability when the sampling rate obeys the bound
\begin{equation} \label{eqn:emp-samp-rate}
R \geq 1.7 K\log(W/K + 1).
\end{equation}
Thus, for a fixed sparsity, the sampling rate grows only logarithmically as the Nyquist rate increases. We also note that this bound is similar to those obtained for other measurement schemes that require fully random, dense matrices~\cite{BDDW06:Johnson-Lindenstrauss}.

Finally, we study the threshold that denotes a change from high to low probability of successful reconstruction. This type of {\em phase transition} is a common phenomenon in compressive sampling.  For this experiment, the chipping rate is fixed at $W = 512$ Hz, while the sparsity $K$ and sampling $R$ rate vary.  We record the probability of success as the compression factor $R/W$ and the sampling efficiency $K/R$ vary.

The experimental results appear in Figure~\ref{fig:infocomp}.  Each pixel in this image represents the probability of success for the corresponding combination of system parameters.  Lighter pixels denote higher probability. The dashed line,
$$
\frac{K}{R} = \frac{0.68}{\log(W/K + 1)},
$$
describes the relationship among the parameters where the probability of success drops below 99\%.  The numerical parameter was derived from a linear regression without intercept.

For reference, we compare the random demodulator with a benchmark system that obtains measurements of the amplitude vector $\vct{s}$ by applying a matrix $\Fee$ whose entries are drawn independently from the standard Gaussian distribution.   As the dimensions of the matrix grow, $\ell_1$ minimization methods exhibit a sharp phase transition from success to failure.  The solid line in Figure~\ref{fig:infocomp} marks the location of the precipice, which can be computed analytically with methods from \cite{Don::2006::High-Dimensional-Centrally-Symmetric}.

\begin{center}
\begin{figure}
\centerline{
\includegraphics[width=.9\columnwidth]{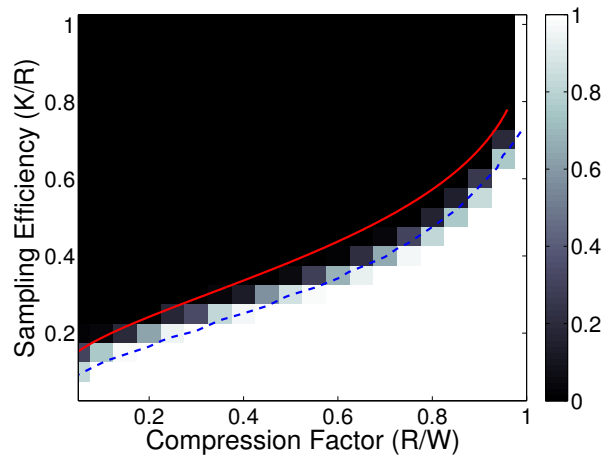}}
\caption{{\sl Probability of success as a function of sampling efficiency and compression factor.}  The shade of each pixel indicates the probability of successful reconstruction for the corresponding combination of parameter values.  The solid line marks the theoretical threshold for a Gaussian matrix; the dashed line traces the 99\% success isocline.}
\label{fig:infocomp}
\end{figure}
\end{center}

\subsection{Example: Analog Demodulation}
\label{sec:example}

We constructed these empirical performance trends using signals drawn from a synthetic model.  To narrow the gap between theory and practice, we performed another experiment to demonstrate that the random demodulator can successfully recover a simple communication signal from samples taken below the Nyquist rate.

In the amplitude modulation (AM) encoding scheme, the transmitted
signal $f_{\rm AM}(t) $ takes the form
\begin{equation}
f_{\rm AM}(t) = A\cos (2\pi \omega_{c} t)\cdot \left(m(t)+C\right),
\end{equation}
where $m(t)$ is the original message signal, $\omega_{c}$ is the carrier frequency, and $A, C$ are fixed values. When the original message signal $m(t)$ has $K$ nonzero Fourier coefficients, then the cosine-modulated signal has only $2K + 2$ nonzero Fourier coefficients.

We consider an AM signal that encodes the message appearing in Figure~\ref{fig:realsig}(a).  The signal was transmitted from a communications device using carrier frequency $\omega_c = 8.2$ KHz, and the received signal was sampled by an ADC at a rate of 32 KHz.  Both the transmitter and receiver were isolated in a lab to produce a clean signal; however, noise is still present on the sampled data due to hardware effects and environmental conditions.

We feed the received signal into a \emph{simulated} random demodulator, where we take the Nyquist rate $W = 32$ KHz and we attempt a variety of sampling rates $R$.  We use IRLS to reconstruct the signal from the random samples, and we perform AM demodulation on the recovered signal to reconstruct the original message. Figures~\ref{fig:realsig}(b)--(d) display reconstructions for a random demodulator with sampling rates $R = 16$ KHz, $8$ KHz, and $3.2$ KHz, respectively.  To evaluate the quality of the reconstruction, we measure the signal-to-noise ratio (SNR)
between the message obtained from the received signal $f_{\rm AM}$ and the message obtained from the output of the random demodulator. The reconstructions achieve SNRs of 27.8 dB, 22.3 dB, and 20.9 dB, respectively.

These results demonstrate that the performance of the random demodulator degrades gracefully as the SNR decreases.  Note, however, that the system will not function at all unless the sampling rate is sufficiently high that we stand below the phase transition.

\begin{figure*}
\begin{center}
\begin{tabular}{cc}
\includegraphics[width=.9\columnwidth]{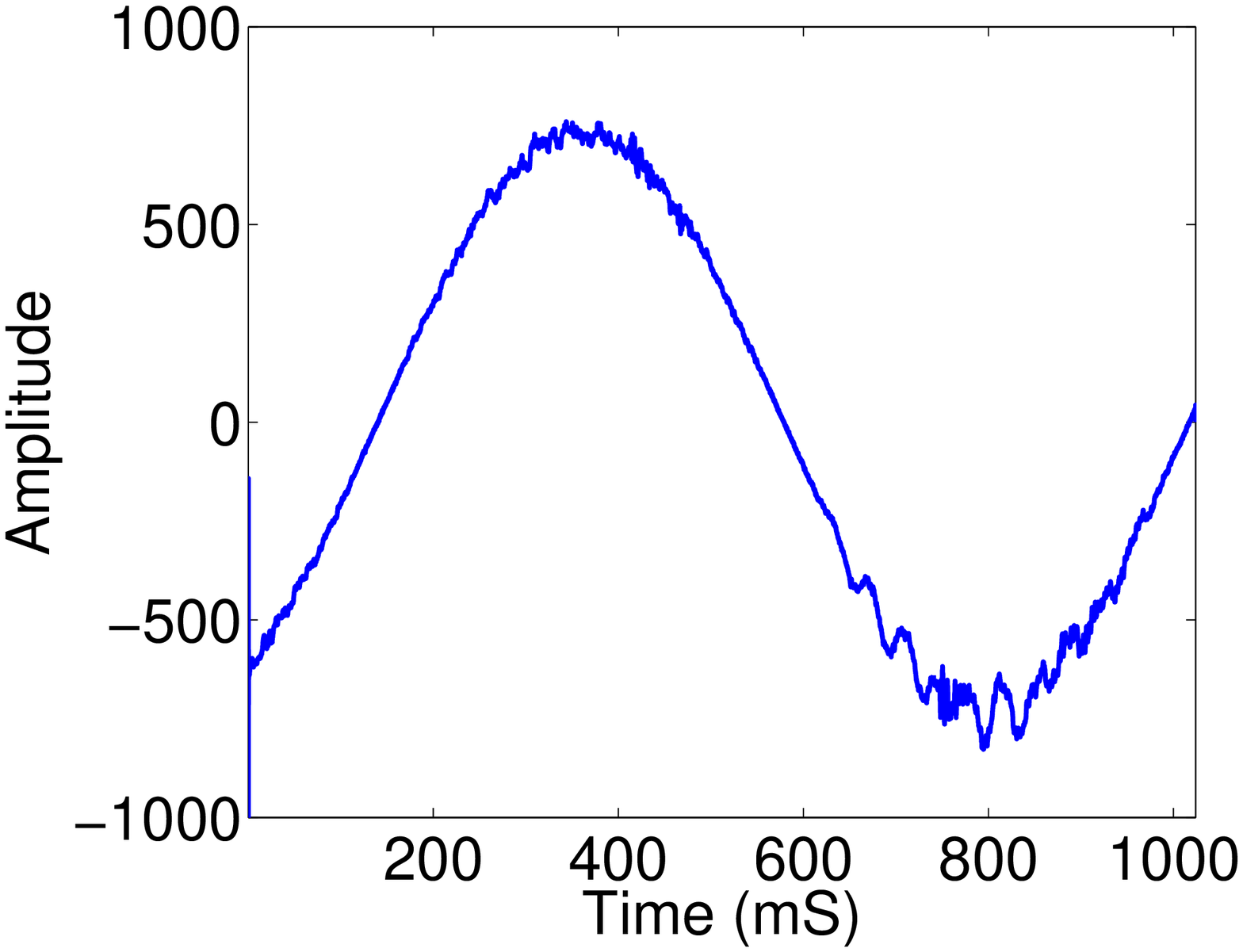}&
\includegraphics[width=.9\columnwidth]{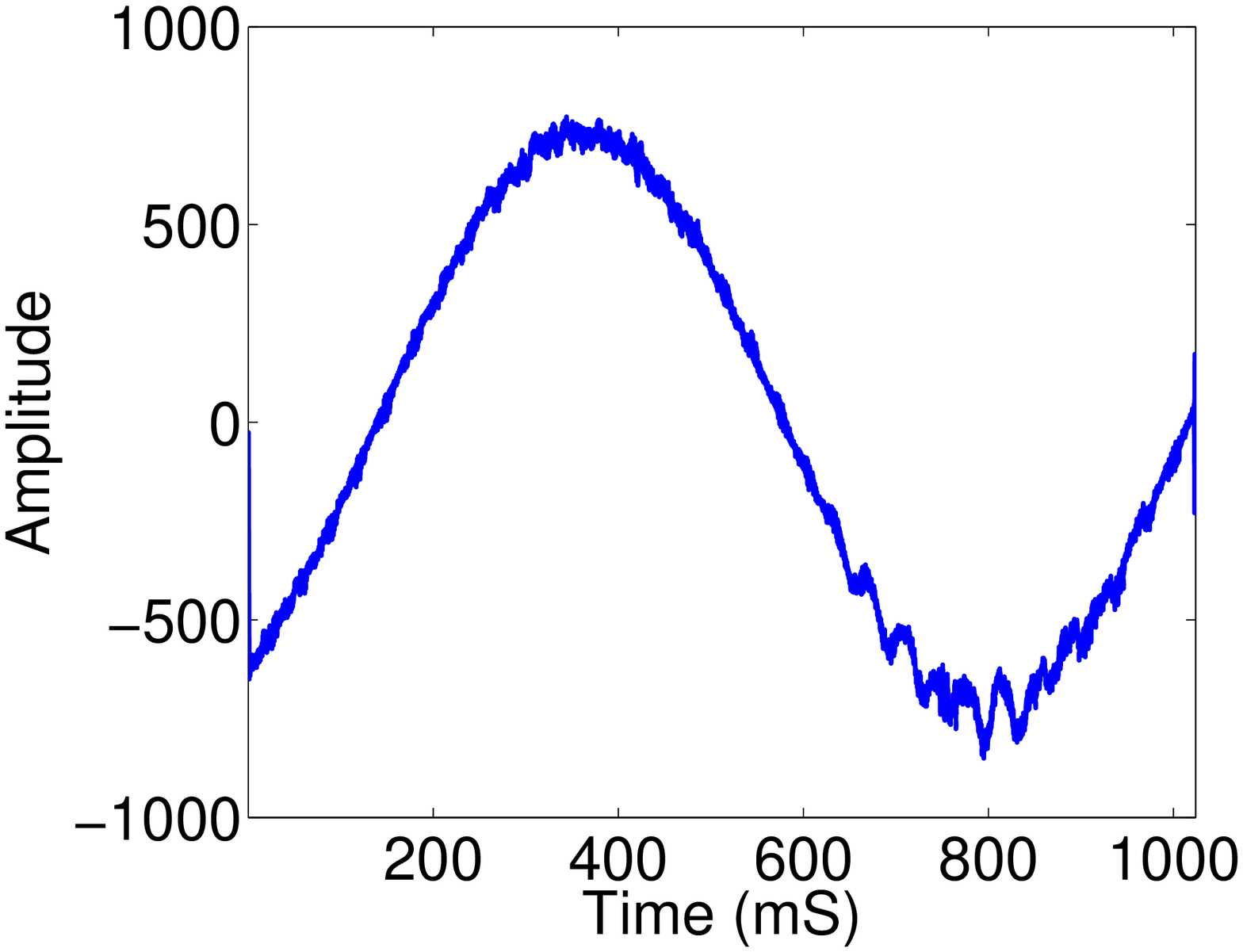}\\
(a)&(b)\\
\includegraphics[width=.9\columnwidth]{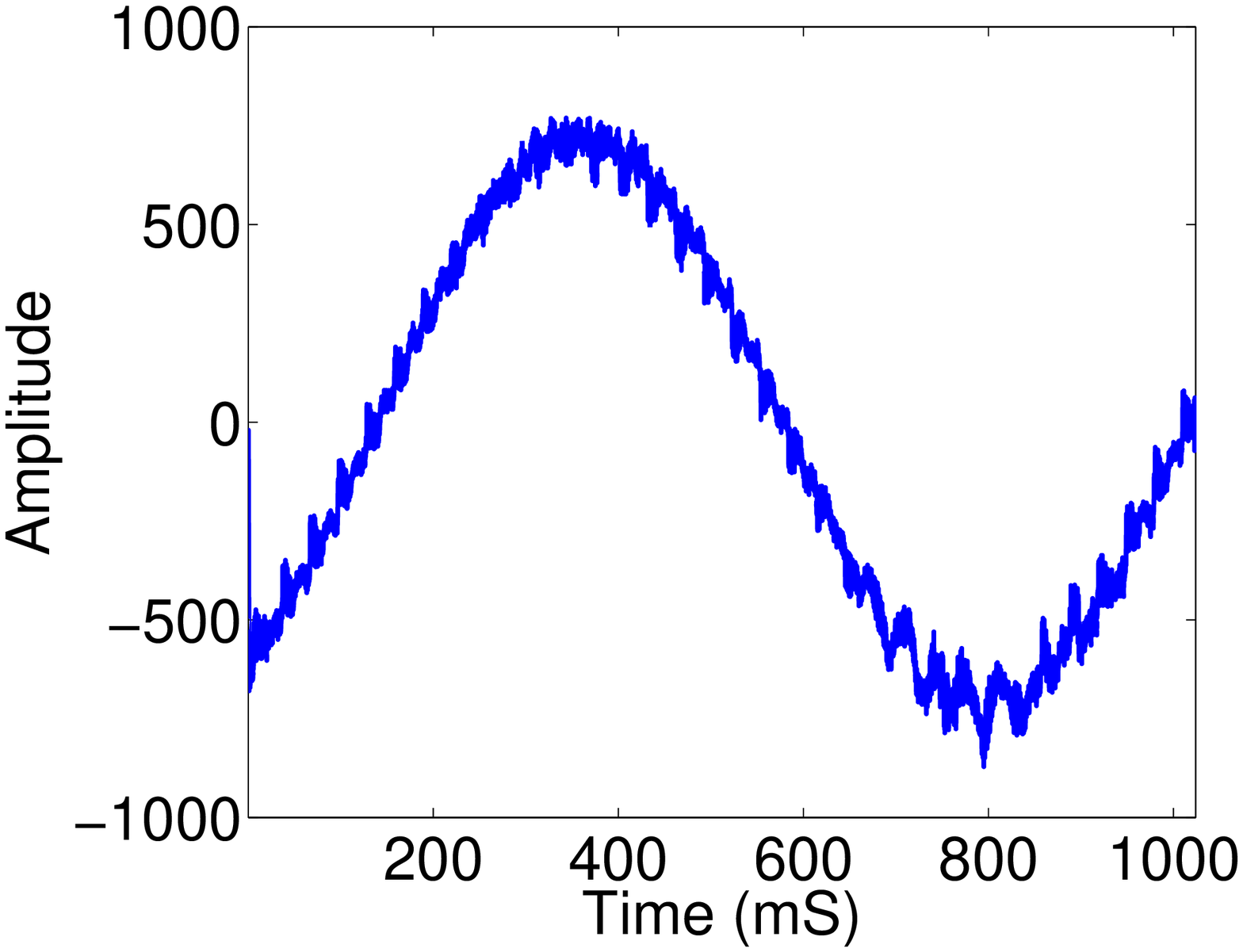}&
\includegraphics[width=.9\columnwidth]{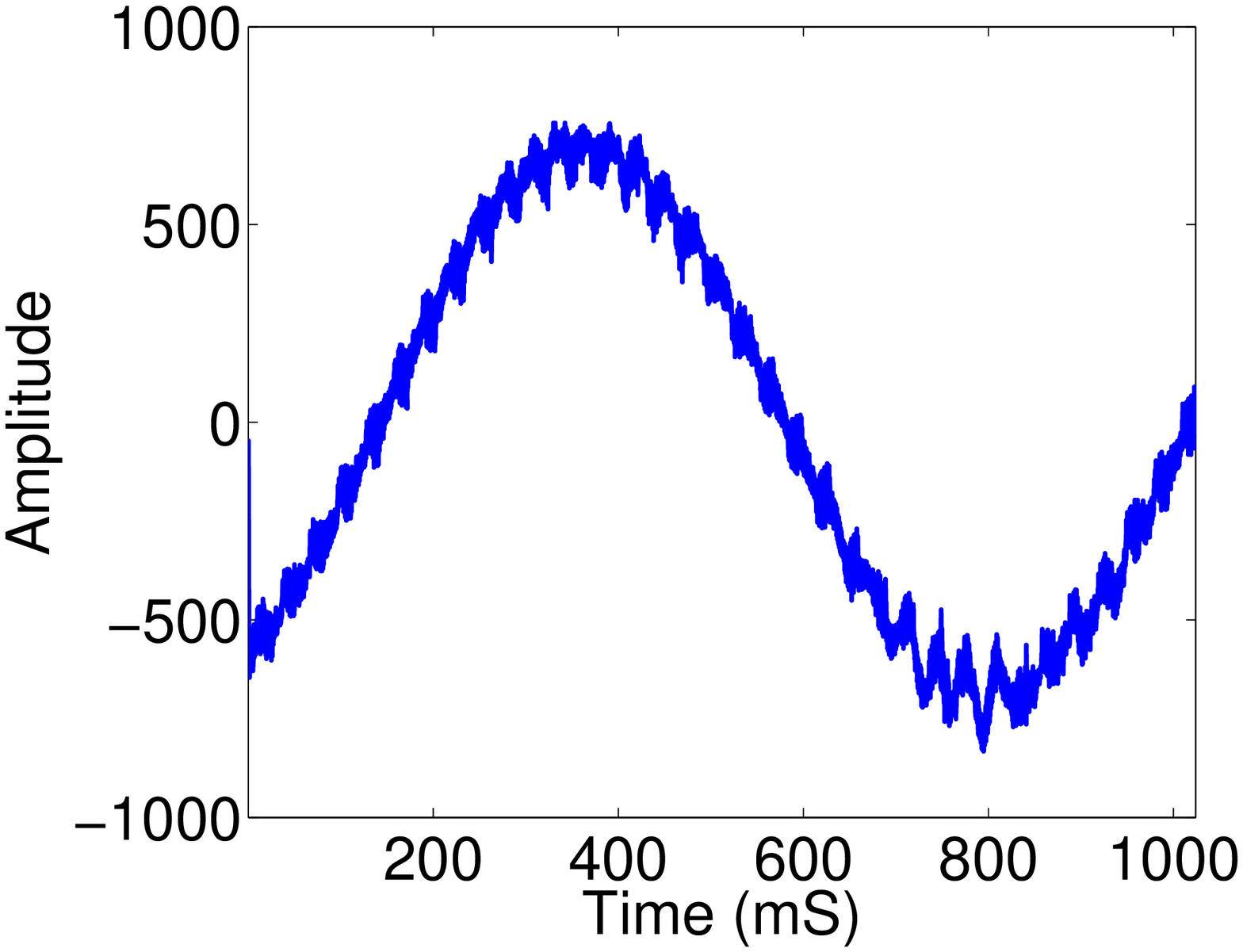}\\
(c)&(d)\\
\end{tabular}
\end{center}
\caption{{\sl Simulated acquisition and reconstruction of an AM signal.}  (a) Message reconstructed from signal sampled at Nyquist rate $W=32$ KHz. (b) Message reconstructed from the output of a simulated random demodulator running at sampling rate $R = 16$ KHz (SNR = 27.8 dB). (c) Reconstruction with $R = 8$ KHz (SNR = 22.3 dB).  (d) Reconstruction with $R =
3.2$ KHz (SNR = 20.9 dB).}
\label{fig:realsig}
\end{figure*}

\section{Theoretical Recovery Results} \label{sec:theory}

We have been able to establish several theoretical guarantees on the performance of the random demodulator system.  First, we focus on the setting described in the numerical experiments, and we develop estimates for the sampling rate required to recover the synthetic sparse signals featured in Section~\ref{sec:synthetic}. Qualitatively, these bounds almost match the system performance we witnessed in our experiments.  The second set of results addresses the behavior of the system for a much wider class of input signals.  This theory establishes that, when the sampling rate is slightly higher, the signal recovery process will succeed---even if the spectrum is not perfectly sparse and the samples collected by the random demodulator are contaminated with noise.

\subsection{Recovery of Random Signals}

First, we study when $\ell_1$ minimization can reconstruct random, frequency-sparse signals that have been sampled with the random demodulator.  The setting of the following theorem precisely matches our numerical experiments.

\begin{thm}[Recovery of Random Signals] \label{thm:rpm}
Suppose that the sampling rate
$$
R \geq \cnst{C} \left[ K \log W + \log^3 W \right]
$$
and that $R$ divides $W$.  The number $\cnst{C}$ is a positive, universal constant.

Let $\vct{s}$ be a random amplitude vector drawn according to Model (A).  Draw an $R \times W$ random demodulator matrix $\Fee$, and let $\vct{y} = \Fee \vct{s}$ be the samples collected by the random demodulator.

Then the solution $\widehat{\vct{s}}$ to the convex program \eqref{eqn:p1} equals $\vct{s}$, except with probability $\bigO(W^{-1})$.
\end{thm}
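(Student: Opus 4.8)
The plan is to reduce exact recovery to the existence of a \emph{dual certificate} and then to construct one explicitly. By the standard optimality conditions for the complex program \eqref{eqn:p1}, the vector $\vct{s}$ supported on $\Omega$ is the unique minimizer provided (i) the columns $\{\Fee\onevct_\omega : \omega\in\Omega\}$ are linearly independent, and (ii) there is a vector $\vct{\pi}\in\range(\Fee^\adj)$ with $\pi_\omega=\sgn{(s_\omega)}$ for $\omega\in\Omega$ and $\abs{\pi_\omega}<1$ for $\omega\notin\Omega$. The natural candidate is $\vct{\pi}=\Fee^\adj\Fee_\Omega\,\mtx{G}_\Omega^{-1}\vct{z}$, where $\mtx{G}_\Omega=\Fee_\Omega^\adj\Fee_\Omega$ is the Gram matrix of the active columns and $\vct{z}=\sgn{(\vct{s}_\Omega)}$ is the random sign pattern. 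When $\mtx{G}_\Omega$ is invertible, this $\vct{\pi}$ lies in the row space of $\Fee$ and matches the signs on $\Omega$ automatically, so the entire argument collapses to two claims: (a) $\mtx{G}_\Omega$ is well conditioned, and (b) the strict inequality $\max_{\omega\notin\Omega}\abs{\pi_\omega}<1$ holds.

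First I would control the conditioning of $\mtx{G}_\Omega$. Writing each active column as $\Fee\onevct_\omega=\mtx{H}\mtx{D}\vct{f}_\omega$, where $\vct{f}_\omega$ is a column of the DFT matrix $\mtx{F}$, every entry of $\mtx{G}_\Omega$ is a quadratic form in the Rademacher chipping sequence $\{\eps_n\}$. A direct computation gives $\Expect\mtx{G}_\Omega=\Id$, so the active columns are orthonormal on average; the task is to show the fluctuation obeys $\norm{\mtx{G}_\Omega-\Id}\le\half$ with probability $1-\bigO(W^{-1})$. I would bound this operator norm by a moment estimate (a Rudelson-type selection lemma, or a noncommutative Khintchine / matrix concentration inequality), taking expectations over both the signs $\eps$ and the uniformly random $K$-subset $\Omega$. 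The averaging over the random support is precisely what lets the bound scale like $K\log W$ rather than $K^2$. This step certifies (i) and supplies $\norm{\mtx{G}_\Omega^{-1}}\le 2$.

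Next I would handle the off-support correlations. Conditioning on a well-conditioned $\mtx{G}_\Omega$ and on $\Fee$, fix $\omega\notin\Omega$ and write $\pi_\omega=\ip{\vct{u}_\omega}{\vct{z}}$ with the fixed coefficient vector $\vct{u}_\omega=\mtx{G}_\Omega^{-1}(\Fee^\adj\Fee)_{\Omega,\omega}$. Since $\vct{z}$ has independent, uniform-phase entries under Model (A), a complex Hoeffding--Bernstein bound for such Steinhaus sums shows $\abs{\pi_\omega}$ is of order $\enorm{\vct{u}_\omega}\sqrt{\log W}$ with overwhelming probability. Now $\enorm{\vct{u}_\omega}\le\norm{\mtx{G}_\Omega^{-1}}\cdot\enorm{(\Fee^\adj\Fee)_{\Omega,\omega}}\le 2\,\enorm{(\Fee^\adj\Fee)_{\Omega,\omega}}$, so I must bound the vector of cross-correlations between the inactive column $\omega$ and the $K$ active columns; each such entry is again a mean-zero quadratic form in $\{\eps_n\}$, and a concentration estimate together with $\abs{\Omega}=K$ controls $\enorm{(\Fee^\adj\Fee)_{\Omega,\omega}}$. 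Feeding this back shows $\abs{\pi_\omega}<1$ with room to spare once $R$ exceeds a constant multiple of $K\log W$; a union bound over the at most $W$ inactive frequencies contributes the remaining logarithmic factors and keeps the total failure probability at $\bigO(W^{-1})$.

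The \textbf{main obstacle} is the concentration analysis underlying the two claims, because $\Fee=\mtx{H}\mtx{D}\mtx{F}$ is far from having independent entries: the accumulator $\mtx{H}$ makes each sample a sum over a block of \emph{shared} chipping variables, so the rows are correlated and the Gram entries are genuinely quadratic in $\eps$, and the off-the-shelf i.i.d.\ or Gaussian tools do not apply. The hard part is proving the conditioning bound $\norm{\mtx{G}_\Omega-\Id}\le\half$ at the rate $R\gtrsim K\log W$: a naive Gershgorin argument from the coherence alone would force the quadratic rate $R\gtrsim K^2$, so one must exploit the averaging over the random support (and decouple the quadratic form) to beat it. I expect the residual $\log^3 W$ term in the statement to emerge from this conditioning estimate as a support-independent startup cost, produced by the logarithmic union bounds and the sub-exponential tails of the quadratic forms.
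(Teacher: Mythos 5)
Your plan is essentially the paper's proof. The paper reduces Theorem~\ref{thm:rpm} to exactly the two conditions you name---$\norm{(\Fee_\Omega^\adj\Fee_\Omega)^{-1}}\leq 2$ and a bound on the cross-correlations $\max_{\omega\notin\Omega}\enorm{(\Fee^\adj\Fee)_{\Omega,\omega}}$, which it calls the local 2-cumulative coherence $\mu_2(\Omega)$---and then invokes a dual-certificate result for random sign patterns (Proposition~\ref{prop:l1-rdm-recover}) whose content is precisely your Steinhaus--Hoeffding argument for $\pi_\omega=\ip{\vct{u}_\omega}{\vct{z}}$. One caveat: you localize the need for random-support averaging to the conditioning of $\mtx{G}_\Omega$, but it is equally essential for the cross-correlation bound. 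The deterministic estimate $\enorm{(\Fee^\adj\Fee)_{\Omega,\omega}}\leq\mu\sqrt{K}\approx\sqrt{K\log W/R}$, combined with the $\sqrt{\log W}$ from the union bound over Hoeffding tails, would force $R\gtrsim K\log^2 W$, which misses the stated rate; the paper instead proves $\mu_2(\Omega)\leq(16\log W)^{-1/2}$ by applying a random-submatrix column-norm estimate (Proposition~\ref{prop:12-norm}) to the hollow Gram matrix, trading the factor $\sqrt{K}$ for $\sqrt{\log W}\,\pnorm{\max}{\mtx{G}}+\sqrt{K/W}\,\pnorm{1\to2}{\mtx{G}}$. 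Your guess about the $\log^3 W$ startup is correct in spirit, but it enters through both steps: each requires the coherence to satisfy $\pnorm{\max}{\mtx{G}}\lesssim 1/\log W$, i.e.\ $\sqrt{\log W/R}\lesssim 1/\log W$, which is exactly $R\gtrsim\log^3 W$.
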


We have framed the unimportant technical assumption that $R$ divides $W$ to simplify the lengthy argument. See Appendix~\ref{app:random-phase} for the proof.

Our analysis demonstrates that the sampling rate $R$ scales linearly with the sparsity level $K$, while it is logarithmic in the bandlimit $W$.  In other words, the theorem supports our empirical rule~\eqref{eqn:emp-rate} for the sampling rate.  Unfortunately, the analysis does not lead to reasonable estimates for the leading constant.  Still, we believe that this result offers an attractive theoretical justification for our claims about the sampling efficiency of the random demodulator.

The theorem also suggests that there is a small startup cost.  That
is, a minimal number of measurements is required before the
demodulator system is effective.  Our experiments were not refined
enough to detect whether this startup cost actually exists in
practice.

\subsection{Uniformity and Stability}
\label{sec:rip}

Although the results described in the last section provide a satisfactory explanation of the numerical experiments, they are less compelling as a vision of signal recovery in the real world.  Theorem~\ref{thm:rpm} has three major shortcomings.  First, it is unreasonable to assume that tones are located at random positions in the spectrum, so Model (A) is somewhat artificial.  Second, typical signals are not spectrally sparse because they contain background noise and, perhaps, irrelevant low-power frequencies.  Finally, the hardware system suffers from nonidealities, so it only approximates the linear transformation described by the matrix $\mtx{M}$.  As a consequence, the samples are contaminated with noise.

To address these problems, we need an algorithm that provides \term{uniform} and \term{stable} signal recovery. Uniformity means that the approach works for all signals, irrespective of the frequencies and phases that participate.  Stability means that the performance degrades gracefully when the signal's spectrum is not perfectly sparse and the samples are noisy.
Fortunately, the compressive sampling community has developed several powerful signal recovery techniques that enjoy both these properties.

Conceptually, the simplest approach is to modify the convex program \eqref{eqn:p1} to account for the contamination in the samples~\cite{CRT06:Stable-Signal}.
Suppose that $\vct{s} \in \Cspace{W}$ is an arbitrary amplitude
vector, and let $\vct{\nu} \in \Cspace{R}$ be an arbitrary noise
vector that is known to satisfy the bound $\enorm{\vct{\nu}} \leq
\eta$.  Assume that we have acquired the dirty samples
$$
\vct{y} = \Fee \vct{s} + \vct{\nu}.
$$
To produce an approximation of $\vct{s}$, it is natural to solve the noise-aware optimization problem
\begin{equation}\label{eqn:p1-err}
\widehat{\vct{s}} = \argmin \pnorm{1}{\vct{v}} \subjto
\enorm{\Fee\vct{v} - \vct{y}} \leq \eta.
\end{equation}
As before, this problem can be formulated as a second-order cone program, which can be optimized using various algorithms~\cite{CRT06:Stable-Signal}.

We have established a theorem that describes how the optimization-based recovery technique
performs when the samples are acquired using the random demodulator system.  An analogous result holds for the {\sf CoSaMP}
algorithm, a greedy pursuit method with superb guarantees on its runtime~\cite{NT08:CoSaMP-Iterative}.  

\begin{thm}[Recovery of General Signals] \label{thm:stable-recovery}
Suppose that the sampling rate
$$
R \geq \cnst{C} K \log^6 W
$$
and that $R$ divides $W$.  Draw an $R \times W$ random demodulator matrix $\Fee$.  Then  the following statement holds, except with probability $\bigO(W^{-1})$.

Suppose that $\vct{s}$ is an arbitrary amplitude vector and $\vct{\nu}$ is a noise vector with $\enorm{\vct{\nu}} \leq \eta$.  Let $\vct{y} = \Fee \vct{s} + \vct{\nu}$ be the noisy samples collected by the random demodulator.

Then every solution $\widehat{\vct{s}}$ to the convex program \eqref{eqn:p1-err} approximates the target vector $\vct{s}$:
\begin{equation} \label{eqn:21-err-bd}
\enorm{\widehat{\vct{s}} - \vct{s}} \leq \cnst{C} \max\left\{
   \eta, \frac{1}{\sqrt{K}} \pnorm{1}{ \vct{s} - \vct{s}_K } \right\},
\end{equation}
where $\vct{s}_K$ is a best $K$-sparse approximation to $\vct{s}$ with respect to the $\ell_1$ norm.
\end{thm}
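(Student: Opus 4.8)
\emph{Proof proposal.} The error estimate \eqref{eqn:21-err-bd} has exactly the shape of the classical stable-recovery guarantee for $\ell_1$ minimization under a restricted isometry hypothesis \cite{CRT06:Stable-Signal}: if a matrix has restricted isometry constant $\delta_{2K} < \sqrt{2}-1$, then every minimizer of \eqref{eqn:p1-err} obeys $\enorm{\widehat{\vct s} - \vct s} \le \cnst{C} \max\{\eta,\, K^{-1/2}\pnorm{1}{\vct s - \vct s_K}\}$, with no further randomness required. I would therefore split the argument cleanly: the entire probabilistic content is to show that the random demodulator matrix $\Fee$ satisfies the restricted isometry property of order $\sim K$ with constant $\delta < \sqrt 2 - 1$, with probability $1 - \bigO(W^{-1})$, whenever $R \ge \cnst{C} K \log^6 W$. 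The robustness to the noise $\vct\nu$ and to the compressibility error $\vct s - \vct s_K$ is then a purely deterministic consequence. The {\sf CoSaMP} variant alluded to in the statement uses the same RIP property as a black box.

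To prove the RIP I would exploit the factorization $\Fee = \mtx{HDF}$. Because $R$ divides $W$, the rows of $\mtx H$ act on disjoint blocks of the chipping sequence $\eps_0, \dots, \eps_{W-1}$, so the $R$ rows of $\Fee$ are \emph{independent}. Fixing a frequency support $T$ with $\abs{T} = s \sim K$ and writing $\vct z_m$ for the ($T$-restricted) $m$th row, the Gram matrix factors as $\Fee_T^{\adj}\Fee_T = \sum_{m} \vct z_m \vct z_m^{\adj}$, a sum of independent positive-semidefinite terms. Orthonormality of the columns of the DFT matrix $\mtx F$ together with $\Expect \eps_n \eps_{n'} = \delta_{nn'}$ gives $\Expect \Fee_T^{\adj}\Fee_T = \Id$, so the task reduces to bounding the deviation $\enorm{\Fee_T^{\adj}\Fee_T - \Id} = \sup_{\enorm{\vct x}=1} \abs{\enorm{\Fee_T \vct x}^2 - 1}$.

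For a fixed support I would control this supremum by a covering argument: discretize the unit sphere of $\Cspace{s}$ with a net of cardinality $\bigO(9^s)$, and for each net point estimate the scalar sum $\enorm{\Fee_T\vct x}^2 = \sum_m \abs{\smip{\vct z_m}{\vct x}}^2$ by a Bernstein-type inequality for independent terms. The crucial numerical input is that every entry of $\mtx F$ has magnitude $W^{-1/2}$, so each $\smip{\vct z_m}{\vct x}$ is a Rademacher sum of controlled subgaussian scale, after intersecting with a high-probability event on the row norms $\max_m \enorm{\vct z_m}$. A union bound over the net and then over the $\binom{W}{s} \le (\econst W/s)^s$ supports yields a uniform bound $\delta_s < \sqrt 2 - 1$ provided $R \gtrsim s \cdot \polylog(W)$, while keeping the total failure probability at $\bigO(W^{-1})$.

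The hard part is closing this last step. Unlike an i.i.d.\ Gaussian matrix, whose rows enjoy ideal subgaussian concentration and deliver the sharp rate $R \gtrsim K\log(W/K)$, the demodulator's rows are \emph{bounded and internally correlated} Rademacher sums. The per-term maximum penalty in the Bernstein estimate---equivalently, the gap between a Rudelson/noncommutative-Khintchine bound on $\Expect \enorm{\Fee_T^{\adj}\Fee_T - \Id}$ and the matrix variance---cannot be removed and costs several extra logarithmic factors. This is precisely the source of the $\log^6 W$ in the hypothesis, rather than the single $\log(W/K)$ observed experimentally, and it is also why the universal constant $\cnst{C}$ resists an explicit estimate. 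Getting the bookkeeping of these logarithms to close while preserving independence of the rows (the reason the technical assumption $R \mid W$ is imposed) is the main obstacle I would expect.
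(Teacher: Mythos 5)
Your top-level architecture matches the paper exactly: the error bound \eqref{eqn:21-err-bd} is the deterministic Cand\`es--Romberg--Tao consequence of the RIP (Proposition~\ref{prop:l1-stable}), the rows of $\Fee$ are independent precisely because $R$ divides $W$, $\Expect \Fee^\adj\Fee = \Id$, and the {\sf CoSaMP} variant uses the same RIP as a black box. The gap is in how you propose to \emph{prove} the RIP. An $\eps$-net on the sphere of each fixed support, Bernstein's inequality for the $R$ independent row contributions, and a union bound over the $\binom{W}{s}$ supports cannot deliver linear dependence on $K$. On the good event $\max_m\infnorm{\vct{z}_m}\lesssim\sqrt{\log W/R}$, the best available bound on a single summand is $\abssq{\smip{\vct{z}_m}{\vct{x}}}\leq\pnorm{1}{\vct{x}}^2\infnorm{\vct{z}_m}^2\lesssim s\log W/R$, so Bernstein yields a failure probability of roughly $\exp(-\cnst{c}\delta^2 R/(s\log W))$ per net point; absorbing the $\exp(\cnst{C}s\log(W/s))$-fold union bound then forces $R\gtrsim\delta^{-2}s^2\,\polylog(W)$ --- \emph{quadratic} in the sparsity. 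This is the standard obstruction for sampling matrices with bounded (rather than subgaussian) rows; it is not ``several extra logarithmic factors,'' and no bookkeeping of the logarithms rescues the union-bound-over-supports strategy.

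The paper's proof avoids the union bound entirely. After symmetrization, it invokes the Rudelson--Vershynin chaining lemma, which controls $\Expect\triplenorm{\sum_r\xi_r\vct{z}_r\otimes\vct{z}_r}$ by $\cnst{C}B\sqrt{N}\log^2(W)\cdot\triplenorm{\sum_r\vct{z}_r\otimes\vct{z}_r}^{1/2}$ uniformly over all supports simultaneously (the $\triplenorm{\cdot}$ norm is the supremum over principal $N\times N$ submatrices); feeding this into the self-bounding relation $E\leq\gamma(E+1)^{1/2}$ gives $\Expect\triplenorm{\Fee^\adj\Fee-\Id}\leq\delta$ once $R\gtrsim\delta^{-2}N\log^5 W$, and the high-probability version follows from a truncation argument together with Talagrand's concentration inequality for sums of independent symmetric Banach-space-valued random variables (Proposition~\ref{prop:banach-tail}), which costs the sixth logarithm. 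If you want to keep your outline, the piece to replace is the ``net plus Bernstein plus union bound'' core; everything before and after it is consistent with the paper.
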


The proof relies on the \term{restricted isometry property} (RIP) of Cand{\`e}s--Tao~\cite{CT06:Near-Optimal}.  To establish that the random demodulator verifies the RIP, we adapt ideas of Rudelson--Vershynin~\cite{RV06:Sparse-Reconstruction}.  Turn to Appendix~\ref{app:rip} for the argument.

Theorem~\ref{thm:stable-recovery} is not as easy to grasp as Theorem~\ref{thm:rpm}, so we must spend some time to unpack its meaning.  First, observe that the sampling rate has increased by several logarithmic factors. Some of these factors are probably parasitic, a consequence of the techniques used to prove the theorem.  It seems plausible that, in practice, the actual requirement on the sampling rate is closer to
$$
R \geq \cnst{C} K \log^2 W.
$$
This conjecture is beyond the power of current techniques.

The earlier result, Theorem~\ref{thm:rpm}, suggests that we should draw a new random demodulator each time we want to acquire a signal.  In contrast, Theorem~\ref{thm:stable-recovery} shows that, with high probability, a particular instance of the random demodulator acquires enough information to reconstruct any amplitude vector whatsoever.  This aspect of Theorem~\ref{thm:stable-recovery} has the practical consequence that a random chipping sequence can be chosen in advance and fixed for all time.

Theorem~\ref{thm:stable-recovery} does not place any specific requirements on the amplitude vector, nor does it model the noise contaminating the signal.  Indeed, the approximation error bound \eqref{eqn:21-err-bd} holds generally.  That said, the strength of the error bound depends substantially on the structure of the amplitude vector.  When $\vct{s}$ happens to be $K$-sparse, then the second term in the maximum vanishes.  So the convex programming method still has the ability to recover sparse signals perfectly.

When $\vct{s}$ is not sparse, we may interpret the bound~\eqref{eqn:21-err-bd} as saying that the computed approximation is comparable with the best $K$-sparse approximation of the amplitude vector.  When the amplitude vector has a good sparse approximation, then the recovery succeeds well.  When the amplitude vector does not have a good approximation, then signal recovery may fail completely.  For this class of signal, the random demodulator is not an appropriate technology.

Initially, it may seem difficult to reconcile the two norms that appear in the error bound~\eqref{eqn:21-err-bd}.  In fact, this type of mixed-norm bound is structurally optimal~\cite{CDD06:Compressed-Sensing}, so we cannot hope to improve the $\ell_1$ norm to an $\ell_2$ norm.  Nevertheless, for an important class of signals, the scaled $\ell_1$ norm of the tail is essentially equivalent to the $\ell_2$ norm of the tail.

Let $p \in (0, 1)$.  We say that a vector $\vct{s}$ is \term{$p$-compressible} when its sorted components decay sufficiently fast:
\begin{equation}
\label{eq:compressible}
\abs{s}_{(k)} \leq k^{-1/p}
\quad\text{for}\quad k = 1, 2, 3, \dots.
\end{equation}
Harmonic analysis shows that many natural signal classes are compressible~\cite{DVDD98:Data-Compression}. Moreover, the windowing scheme of Section~\ref{sec:prewindowing} results in compressible signals.

The critical fact is that compressible signals are well approximated by sparse signals.  Indeed, it is straightforward to check that
\begin{align*}
\pnorm{1}{ \vct{s} - \vct{s}_K} &\leq \frac{1}{1/p - 1} \cdot K^{1 - 1/p} \\
\enorm{\vct{s} - \vct{s}_K} &\leq \frac{1}{\sqrt{2/p - 1}} \cdot K^{1/2 - 1/p}.
\end{align*}
Note that the constants on the right-hand side depend only on the level $p$ of compressibility.

For a $p$-compressible signal, the error bound~\eqref{eqn:21-err-bd} reads
$$
\enorm{ \widehat{\vct{s}} - \vct{s} } \leq \cnst{C} \max\left\{
   \eta, K^{1/2 - 1/p} \right\}.
$$
We see that the right-hand side is comparable with the $\ell_2$ norm of the tail of the signal.  This quantitive conclusion reinforces the intuition that the random demodulator is efficient when the amplitude vector is well approximated by a sparse vector.

\subsection{Extension to Bounded Orthobases}

We have shown that the random demodulator is effective at acquiring signals that are spectrally sparse or compressible.  In fact, the system can acquire a much more general class of signals.  The proofs of Theorem~\ref{thm:rpm} and Theorem~\ref{thm:stable-recovery} indicate that the crucial feature of the sparsity basis $\mtx{F}$ is its \term{incoherence} with the Dirac basis.  In other words, we exploit the fact that the entries of the matrix $\mtx{F}$ have small magnitude.  This insight allows us to extend the recovery theory to other sparsity bases with the same property.  We avoid a detailed exposition.  See~\cite{CR07:Sparsity-Incoherence} for more discussion of this type of result.

\section{Windowing} \label{sec:prewindowing}

We have shown that the random demodulator can acquire periodic multitone signals.  The effectiveness of the system for a general signal $f$ depends on how closely we can approximate $f$ on the time interval $[0,1)$ by a periodic multitone signal.  In this section, we argue that many common types of nonperiodic signals can be approximated well using \term{windowing} techniques.  In particular, this approach allows us to capture nonharmonic sinusoids and multiband signals with small total bandwidth.
We offer a brief overview of the ideas here, deferring a detailed analysis for a later publication.

\subsection{Nonharmonic Tones}

First, there is the question of capturing and reconstructing {\em nonharmonic} sinusoids, signals of the form
$$
f(t) = a_{\omega'}\econst^{-2\pi\iunit\omega' t / W}, \quad \omega' \notin \mathbb{Z}.
$$
It is notoriously hard to approximate $f$ on $[0,1)$ using harmonic
sinusoids, because the coefficients $a_\omega$ in
\eqref{eqn:cont-time} are essentially samples of a sinc function.
Indeed, the signal $f_K$, defined as the best approximation of $f$
using $K$ harmonic frequencies, satisfies only
\begin{equation}
\label{eq:sincapprox}
\pnorm{L_2[0,1)}{ f - f_K } \lesssim K^{-1/2},
\end{equation}
a painfully slow rate of decay.

There is a classical remedy to this problem.  Instead of acquiring $f$ directly, we acquire a smoothly windowed version of $f$.  To fix ideas, suppose that $\psi$ is a window function that vanishes outside the interval $[0,1)$.  Instead of measuring $f$, we measure $g = \psi \cdot f$.  The goal is now to reconstruct this {\em windowed} signal $g$.

As before, our ability to reconstruct the windowed signal depends on how well we can approximate it using a periodic multitone signal.  In fact, the approximation rate depends on the smoothness of the window.  When the continuous-time Fourier transform $\widehat{\psi}$ decays like $\omega^{-r}$ for some $r \geq 1$, we have that $g_K$, the best approximation of $g$ using $K$ harmonic frequencies, satisfies
$$
\pnorm{L_2[0,1)}{ g - g_K } \lesssim K^{-r+1/2},
$$
which decays to zero much faster than the error bound~\eqref{eq:sincapprox}.  Thus, to achieve a tolerance of $\eps$, we need only about $\eps^{-1/(r-1/2)}$ terms.

In summary, windowing the input signal allows us to closely
approximate a nonharmonic sinusoid by a periodic multitone signal;
$g$ will be compressible as in \eqref{eq:compressible}.  If there
are multiple nonharmonic sinusoids present, then the number of
harmonic tones required to approximate the signal to a certain
tolerance scales linearly with the number of nonharmonic sinusoids.

\subsection{Multiband Signals}

Windowing also enables us to treat signals that occupy a small band
in frequency.  Suppose that $f$ is a signal whose continuous-time
Fourier transform $\widehat{f}$ vanishes outside an interval of
length $B$.  The Fourier transform of the windowed signal $g = \psi
\cdot f$ can be broken into two pieces.  One part is nonzero only on
the support of $\widehat{f}$; the other part decays like
$\omega^{-r}$ away from this interval.  As a result, we can
approximate $g$ to a tolerance of $\eps$ using a multitone signal
with about $B + \eps^{-1/(r-1/2)}$ terms.

For a multiband signal, the same argument applies to each band.  Therefore, the number of tones required for the approximation scales linearly with the total bandwidth.

\subsection{A Vista on Windows}

To reconstruct the original signal $f$ over a long period of time, we must multiply the signal with overlapping shifts of a window $\psi$.  The window needs to have additional properties for this scheme to work.  Suppose that the set of half-integer shifts of $\psi$ form a partition of unity, i.e.,
$$
\sum\nolimits_{k} \psi(t-k/2) = 1,
\quad t \in \Rspace{}.
$$
After measuring and reconstructing $g_k(t) = \psi(t-k/2) \cdot f(t)$ on each subinterval, we simply add the reconstructions together to obtain a complete reconstruction of $f$.

This windowing strategy relies on our ability to measure the windowed signal $\psi \cdot f$.  To accomplish this, we require a slightly more complicated system architecture.  To perform the windowing, we use an amplifier with a time-varying gain.  Since subsequent windows overlap, we will also need to measure $g_k$ and $g_{k+1}$ simultaneously, which creates the need for two random demodulator channels running in parallel.  In principle, none of these modifications diminishes the practicality of the system.

\section{Discussion}
\label{sec:discussion}

This section develops some additional themes that arise from our work.  First, we discuss the SNR performance and SWAP profile of the random demodulator system.  Afterward, we present a rough estimate of the time required for signal recovery using contemporary computing architectures.  We conclude with some speculations about the technological impact of the device.


\subsection{SNR Performance}

A significant benefit of the random demodulator is that we can control the SNR performance of the reconstruction by optimizing the sampling rate of the back-end ADC, as demonstrated in Section~\ref{sec:example}.  When input signals are spectrally sparse, the random demodulator system can outperform a standard ADC by a substantial margin.

To quantify the SNR performance of an ADC, we consider a standard metric, the effective number of bits (ENOB), which is roughly the actual number of quantization levels possible at a given SNR.  The ENOB is calculated as
\begin{equation}
\hbox{ENOB} = (\hbox{SNR}-1.76)/6.02,
\end{equation}
where the SNR is expressed in dB.   We estimate the ENOB of a modern ADC using the information in Walden's 1999 survey~\cite{Walden99:ADC}.

When the input signal has sparsity $K$ and bandlimit $W$, we can acquire the signal using a random demodulator running at rate $R$, which we estimate using the relation~\eqref{eqn:emp-samp-rate}.  As noted, this rate is typically much lower than the Nyquist rate.  Since low-rate ADCs exhibit much better SNR performance than high-rate ADCs, the demodulator can produce a higher ENOB.

Figure~\ref{fig:snrwalden} charts how much the random demodulator improves on state-of-the-art ADCs when the input signals are assumed to be sparse.  The first panel, Figure~\ref{fig:snrwalden}(a), compares the random demodulator with a standard ADC for signals with the sparsity $K = 10^6$ as the Nyquist rate $W$ varies up to $10^{12}$ Hz.  In this setting, the clear advantages of the random demodulator can be seen in the slow decay of the curve.  The second panel, Figure~\ref{fig:snrwalden}(b), estimates the ENOB as a function of the sparsity $K$ when the bandlimit is fixed at $W = 10^8$ Hz.  Again, a distinct advantage is visible.  But we stress that this improvement is contingent on the sparsity of the signal.

\begin{center}
\begin{figure}
\begin{tabular}{c}
\includegraphics[width=.9\columnwidth]{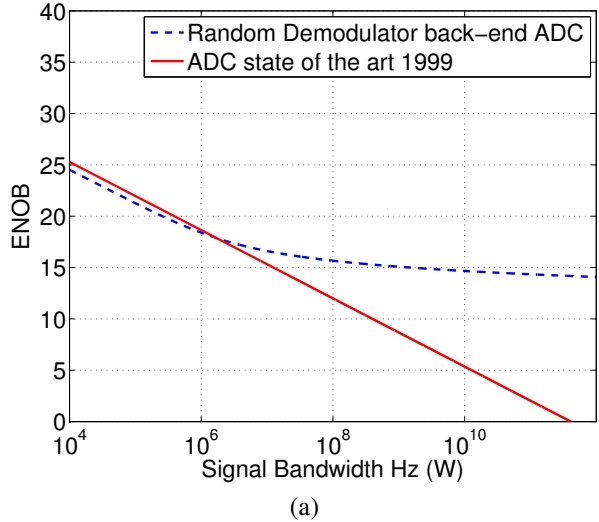} \\ (a) \\
\includegraphics[width=.9\columnwidth]{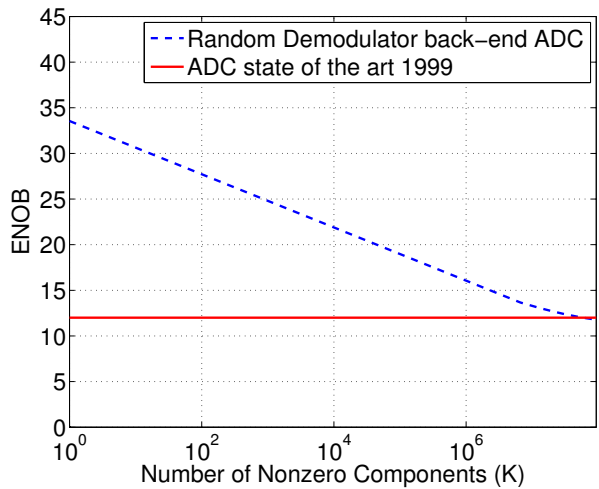}\\ (b)
\end{tabular}
\caption{{\sl ENOB for random demodulator versus a standard ADC.}   The solid lines represent state-of-the-art ADC performance in 1999.  The dashed lines represent the random demodulator performance using an ADC running at the sampling rate $R$ suggested by our empirical work. (a) Performance as a function of the bandlimit $W$ with fixed sparsity $K= 10^{6}$. (b) Performance as a function of the sparsity $K$ with fixed bandlimit $W= 10^{8}$.  Note that the performance of a standard ADC does not depend on $K$.}
\label{fig:snrwalden}
\end{figure}
\end{center}

Although this discussion suggests that the SNR behavior of the 
random demodulator represents a clear improvement over traditional
ADCs, we have neglected some important factors.  The
estimated performance of the demodulator does not take into account
SNR degradations due to the chain of hardware components upstream of
the sampler.  These degradations are caused by factors such as the
nonlinearity of the multiplier and jitter of the pseudorandom
modulation signal.  Thus, it is critical to choose high-quality
components in the design.  Even under this constraint, the 
random demodulator may have a more attractive feasibility and cost
profile than a high-rate ADC.

\subsection{Power Consumption}

In some applications, the {power consumption} of the signal acquisition
system is of tantamount importance.
One widely used figure of merit for ADCs is the quantity
$$
\frac{2^{\rm ENOB} f_s}{P_{\rm diss}},
$$
where $f_s$ is the sampling rate and $P_{\rm diss}$ is the power
dissipation \cite{LeRonRee::2005::Analog-to-Digital-Converters:}.
We propose a slightly modified version of this figure of
merit to compare compressive ADCs with conventional ones:
$$
{\rm FOM} = \frac{2^{{\rm ENOB}-1} W}{P_{\rm diss}(R)},
$$
where we simply replace the sampling rate with the acquisition
bandlimit $W/2$ and express the power dissipated as a function of the
actual sampling rate.
For the random demodulator,
$$
{\rm FOM} \approx \frac{2^{{\rm ENOB}-1} W}
{P_{\rm diss}(1.7K \log(W/K))}
$$
on account of~\eqref{eqn:emp-samp-rate}.
The random demodulator incurs a penalty
in the effective number of bits for a given signal, but it may require
significantly less power to acquire the signal.  This effect becomes
pronounced as the bandlimit $W$ becomes large, which is precisely
where low-power ADCs start to fade.

\subsection{Computational Resources Needed for Signal Recovery}
\label{sec:speed}

Recovering randomly demodulated signals in real-time seems like a daunting task.  This section offers a back-of-the-envelope calculation that supports our claim that current digital computing technology is nearly adequate to achieve real-time recovery rates for signals with frequency components in the gigahertz range.

The computational cost of a sparse recovery algorithm is dominated by repeated application of the system matrix $\mtx{\Phi}$ and its transpose, in both the case where the algorithm solves a convex optimization problem (Section~\ref{sec:convexrelax}) or performs a greedy pursuit (Section~\ref{sec:greedy}).
Recently developed algorithms~\cite{figueiredo07gr,hale08fi,yin08br,NT08:CoSaMP-Iterative,BBC09:Nesta-Fast}
typically produce good solutions with a few hundred applications of the system matrix.

For the random demodulator, the matrix $\Phi$ is a composition of three operations:
\begin{enumerate}
\item	a length-$W$ discrete Fourier transform, which requires $\bigO(W\log W)$ multiplications and additions (via an FFT),
\item	a pointwise multiplication, which uses $W$ multiplies, and
\item	a calculation of block sums, which involves $W$ additions.
\end{enumerate}
Of these three steps, the FFT is by far the most expensive.  Roughly speaking, it takes several hundred FFTs to recover a sparse signal with Nyquist rate $W$ from measurements made by the random demodulator.

Let us fix some numbers so we can estimate the amount of computational time required.  Suppose that we want the digital back-end to output $2^{30}$ (or, about 1 billion) samples per second.  Assume that we compute the samples in blocks of size $2^{14} = 16,384$ and that we use $200$ FFTs for each block.  Since we need to recover $2^{16}$ blocks per second, we have to perform about 13 million 16K-point FFTs in one second.

This amount of computation is substantial, but it is not entirely unrealistic.  For example, a recent benchmark~\cite{williams07sc} reports that the Cell processor performs a 16K-point FFT at 37.6 Gflops/s, which translates%
\footnote{A single $n$-point FFT requires around $2.5n\log_2 n$ flops.}
to about $1.5\times 10^{-5}$\,s.  The nominal time to recover a single block is around 3\,ms, so the total cost of processing $2^{16}$ blocks is around 200\,s.  Of course, the blocks can be recovered in parallel, so this factor of 200 can be reduced significantly using parallel or multicore architectures.


The random demodulator offers an additional benefit.  The samples collected by the system automatically compress a sparse signal into a minimal amount of data. As a result, the \emph{storage} and \emph{communication} requirements of the signal acquisition system are also reduced.  Furthermore, no extra processing is required after sampling to compress the data, which decreases the complexity of the required hardware and its power consumption.

\subsection{Technological Impact}

The easiest conclusion to draw from the bounds on SNR performance is that
the random demodulator may allow us to acquire high-bandwidth signals that
are not accessible with current technologies.  These applications still
require high-performance analog and digital signal processing technology,
so they may be several years (or more) away.

A more subtle conclusion is that the random demodulator enables
us to perform certain signal processing tasks using devices with
a more favorable size, weight, and power (SWAP) profile.  We believe that
these applications will be easier to achieve in the near future because
suitable ADC and DSP technology is already available.

\section{Related Work}
\label{sec:related}

Finally, we describe connections between the random demodulator and
other approaches to sampling signals that contain limited information.

\subsection{Origins of Random Demodulator}

The random demodulator was introduced in two earlier
papers~\cite{laskaa2i,laska06}, which offer preliminary work on the
system architecture and experimental studies of the system's
behavior.  The current paper can be seen as an expansion of these
articles, because we offer detailed information on performance
trends as a function of the signal bandlimit, the sparsity, and the
number of samples.  We have also developed theoretical
foundations that support the empirical results.  This analysis
was initially presented by the first author at SampTA 2007.

\subsection{Compressive Sampling}

The most direct precedent for this paper is the theory of
compressive sampling.  This field, initiated in the
papers~\cite{CRT06:Robust-Uncertainty,Don06:Compressed-Sensing}, has
shown that random measurements are an efficient and practical method
for acquiring compressible signals.  For the most part, compressive
sampling has concentrated on finite-length, discrete-time signals.
One of the innovations in this work is to transport the
continuous-time signal acquisition problem into a setting where
established techniques apply.  Indeed, our empirical and theoretical
estimates for the sampling rate of the random demodulator echo
established results from the compressive sampling literature
regarding the number of measurements needed to acquire sparse
signals.

\subsection{Comparison with Nonuniform Sampling}

Another line of research \cite{CRT06:Robust-Uncertainty,GST08:Tutorial-Fast} in compressive sampling has shown that frequency-sparse, periodic, bandlimited signals can be acquired by sampling nonuniformly in time at an average rate comparable with \eqref{eqn:Rrpm}.  This type of nonuniform sampling can be implemented by changing the clock input to a standard ADC.  Although the random demodulator system involves more components, it has several advantages over nonuniform sampling.

First, nonuniform samplers are extremely sensitive to {\em timing jitter}.  Consider the problem of acquiring a signal with high-frequency components by means of nonuniform sampling.  Since the signal values change rapidly, a small error in the sampling time can result in an erroneous sample value.  The random demodulator, on the other hand, benefits from the integrator, which effectively lowers the bandwidth of the input into the ADC. Moreover, the random demodulator uses a uniform clock, which is more stable to generate.

Second, the SNR in the measurements from a random demodulator is
much higher than the SNR in the measurements from a nonuniform
sampler. Suppose that we are acquiring a single sinusoid with unit
amplitude. On average, each sample has magnitude $2^{-1/2}$, so if
we take $W$ samples at the Nyquist rate, the total energy in the
samples is $W/2$.  If we take $R$ nonuniform samples the total
energy will be on average $R/2$.  In contrast, if we take $R$
samples with the random demodulator, each sample has an approximate
magnitude of $\sqrt{W/R}$, so the total energy in the samples is
about $W$.  In consequence, signal recovery using samples from the
random demodulator is more robust against additive noise.

\subsection{Relationship with Random Convolution}

As illustrated in Figure~\ref{fig:demod}, the random demodulator can be interpreted in the frequency domain as a convolution of a sparse signal with a random waveform, followed by lowpass filtering.  An idea closely related to this---convolution with a random waveform followed by subsampling---has appeared in the compressed sensing literature \cite{TWDBB06:Random-Filters,romberg09co,haupt08to,rauhut09ci}.  In fact, if we replace the integrator with an ideal lowpass filter, so that we are in essence taking $R$ consecutive samples of the Fourier transform of the demodulated signal, the architecture would be very similar to that analyzed in \cite{romberg09co}, with the roles of time and frequency reversed.  The main difference is that \cite{romberg09co} relies on the samples themselves being randomly chosen rather than consecutive (this extra randomness allows the same sensing architecture to be used with any sparsity basis).

\subsection{Multiband Sampling Theory}

The classical approach to recovering bandlimited signals from time
samples is, of course, the well-known method associated with the
names of Shannon, Nyquist, Whittaker, Kotel'nikov, and others.  In
the 1960s, Landau demonstrated that \emph{stable} reconstruction of
a bandlimited signal demands a sampling rate no less than the
Nyquist rate~\cite{Lan67:Sampling-Data}.  Landau also considered
multiband signals, those bandlimited signals whose spectrum is
supported on a collection of frequency intervals.  Roughly speaking,
he proved that a sequence of time samples cannot stably determine a
multiband signal unless the average sampling rate exceeds
the measure of the occupied part of the spectrum~\cite{Lan67:Necessary-Density}.

In the 1990s, researchers began to consider practical sampling
schemes for acquiring multiband signals.  The earliest effort is
probably a paper of Feng and Bresler, who assumed that the band
locations were known in advance~\cite{FengBresler}.  See also the
subsequent work~\cite{VenkataramaniBresler}.  Afterward, Mishali and
Eldar showed that related methods can be used to acquire multiband
signals \emph{without} knowledge of the band
locations~\cite{MishaliEldar}.

Researchers have also considered generalizations of the multiband
signal model.  These approaches model signals as arising from a
union of subspaces.  See, for example,~\cite{Eldar,EM09:Robust-Recovery}
and their references.

\subsection{Parallel Demodulator System}

Very recently, it has been shown that a bank of random demodulators
can be used for blind acquisition of multiband
signals~\cite{MET08:Efficient-Sampling,ME09:Theory-Practice}.
This system uses different parameter settings from the system
described here, and it processes samples in a rather different
fashion.  This work is more closely connected with multiband
sampling theory than with compressive sampling, so we refer the
reader to the papers for details.

\subsection{Finite Rate of Innovation}

Vetterli and collaborators have developed an alternative framework
for sub-Nyquist sampling and reconstruction, called \emph{finite rate of
innovation} sampling, that passes an analog signal $f$ having $K$
degrees of freedom per second through a linear time-invariant filter
and then samples at a rate above $2K$ Hz. Reconstruction is then
performed by rooting a high-order polynomial
\cite{FROITSP,FROITSP2,FROISPMag,FROISines}. While this sampling
rate is less than the $\bigO( K \log(W / K) )$ required by the
random demodulator, a proof of the numerical stability of this
method remains elusive.

\subsection{Sublinear FFTs}

During the 1990s and the early years of the present decade, a
separate strand of work appeared in the literature on theoretical
computer science.  Researchers developed computationally efficient
algorithms for approximating the Fourier transform of a signal,
given a small number of random time samples from a structured grid.
The earliest work in this area was due to Kushilevitz and
Mansour~\cite{KM93:Learning-Decision}, while the method was
perfected by Gilbert et al.~\cite{spfour,gilbert04im}.
See~\cite{GST08:Tutorial-Fast} for a tutorial on these ideas.
These schemes provide an alternative approach to sub-Nyquist
ADCs~\cite{rsac,ragheb} in certain settings.

\appendices

\section{The Random Demodulator Matrix} \label{app:rand-demod-mtx}

This appendix collects some facts about the random demodulator matrix that we use to prove the recovery bounds.

\subsection{Notation}

Let us begin with some notation.  First, we abbreviate $\oneton{W} = \{ 1, 2, \dots, W \}$.  We write ${}^\adj$ for the complex conjugate transpose of a scalar, vector, or matrix.  The symbol $\norm{\cdot}$ denotes the spectral norm of a matrix, while $\fnorm{\cdot}$ indicates the Frobenius norm.  We write $\pnorm{\max}{\cdot}$ for the maximum absolute entry of a matrix.  Other norms will be introduced as necessary.  The probability of an event is expressed as $\Prob{\cdot}$, and we use $\Expect$ for the expectation operator.  For conditional expectation, we use the notation
$\Expect_X Z$, which represents integration with respect to $X$, holding all other variables fixed.
For a random variable $Z$, we define its $L_p$ norm
$$
\Expect^p(Z) = \left( \Expect \abs{Z}^p \right)^{1/p}.
$$
We sometimes omit the parentheses when there is no possibility of confusion.
Finally, we remind the reader of the analyst's convention that roman letters $\cnst{c}$, $\cnst{C}$, etc.\ denote universal constants that may change at every appearance.

\subsection{Background}

This section contains a potpourri of basic results that will be helpful to us.

We begin with a simple technique for bounding the moments of a maximum.  Consider an arbitrary set $\{ Z_1, \dots, Z_N \}$ of random variables.  It holds that
\begin{equation} \label{eqn:moment-max}
\Expect^p( \max\nolimits_{j} Z_j )
\leq N^{1/p} \max\nolimits_{j} \Expect^p( Z_j ).
\end{equation}
To check this claim, simply note that
\begin{multline*}
\left[ \Expect \max\nolimits_j \abs{Z_j}^p \right]^{1/p}
   \leq \left[ \Expect \sum\nolimits_j \abs{Z_j}^p \right]^{1/p} \\
   = \left[ \sum\nolimits_j \Expect \abs{Z_j}^p \right]^{1/p}
   \leq \left[ N \cdot \max\nolimits_j \Expect \abs{Z_j}^p \right]^{1/p}.
\end{multline*}
In many cases, this inequality yields essentially sharp results for the appropriate choice of $p$.

The simplest probabilistic object is the \term{Rademacher} random variable, which takes the two values $\pm 1$ with equal likelihood.  A sequence of independent Rademacher variables is referred to as a \term{Rademacher sequence}.  A \term{Rademacher series} in a Banach space $X$ is a sum of the form
$$
\sum\nolimits_{j=1}^\infty \xi_j \vct{x}_j
$$
where $\{ \vct{x}_j \}$ is a sequence of points in $X$ and $\{\xi_j\}$ is an (independent) Rademacher sequence.  

For Rademacher series with scalar coefficients, the most important result is the inequality of Khintchine.  The following
sharp version is due to Haagerup~\cite{Haa82:Best-Constants}.

\begin{prop}[Khintchine] \label{prop:khintchine}
Let $p \geq 2$.  For every sequence $\{a_j\}$ of complex scalars,
$$
\Expect^p \abs{ \sum\nolimits_j \xi_j a_j }
   \leq \cnst{C}_p \left[ \sum\nolimits_j \abssq{a_j} \right]^{1/2},
$$
where the optimal constant
$$
\cnst{C}_p = \left[ \frac{p!}{2^{p/2} (p/2)!} \right]^{1/p}
   \leq 2^{1/2p} \econst^{-0.5} \sqrt{p}.
$$
\end{prop}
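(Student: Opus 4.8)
The plan is to prove the sharp Khintchine inequality for $p \geq 2$ by directly computing the even moments of a real Rademacher series and then reducing the complex case to the real case. The key observation is that $\Expect^p$ is monotone in $p$, so it suffices to establish the bound when $p$ is an even integer; the optimal constant $\cnst{C}_p$ grows with $p$, and interpolating between consecutive even integers only costs us a factor absorbed into the stated estimate. First I would treat real scalars $\{a_j\}$ normalized so that $\sum_j a_j^2 = 1$. Writing $p = 2m$ for a positive integer $m$, I expand
\begin{equation*}
\Expect \Bigl( \sum\nolimits_j \xi_j a_j \Bigr)^{2m}
= \sum_{k_1 + \dots + k_n = 2m} \binom{2m}{k_1, \dots, k_n}
   \Bigl( \prod\nolimits_j a_j^{k_j} \Bigr) \Expect \prod\nolimits_j \xi_j^{k_j}.
\end{equation*}
Since the $\xi_j$ are independent and $\Expect \xi_j^{k} $ vanishes unless $k$ is even (in which case it equals $1$), only terms with all $k_j$ even survive. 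Setting $k_j = 2 \ell_j$ collapses the sum to a multinomial expression in the $a_j^2$.

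\textbf{Identifying the optimal constant.}
The surviving sum is
\begin{equation*}
\Expect \Bigl( \sum\nolimits_j \xi_j a_j \Bigr)^{2m}
= \sum_{\ell_1 + \dots + \ell_n = m} \frac{(2m)!}{\prod_j (2\ell_j)!}
   \prod\nolimits_j a_j^{2\ell_j}.
\end{equation*}
The main step is to compare this with the corresponding moment of a Gaussian, or equivalently to bound the multinomial coefficients $\tfrac{(2m)!}{\prod_j (2\ell_j)!}$ against $\tfrac{(2m)!}{2^m m!} \cdot \tfrac{m!}{\prod_j \ell_j!}$. The ratio of the two multinomial factors is $\prod_j \tfrac{2^{\ell_j} \ell_j!}{(2\ell_j)!}$, which is at most $1$ termwise because $(2\ell)! \geq 2^\ell \ell!$ for every nonnegative integer $\ell$. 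Applying this bound and then using the multinomial theorem $\sum_{\ell_1 + \dots + \ell_n = m} \tfrac{m!}{\prod_j \ell_j!} \prod_j a_j^{2\ell_j} = (\sum_j a_j^2)^m = 1$ gives
\begin{equation*}
\Expect \Bigl( \sum\nolimits_j \xi_j a_j \Bigr)^{2m}
\leq \frac{(2m)!}{2^m m!}.
\end{equation*}
Taking the $(2m)$-th root recovers exactly $\cnst{C}_{2m} = [(2m)! / (2^m m!)]^{1/(2m)}$, which matches the claimed optimal constant at even integers. Equality in the termwise bound holds when at most one $\ell_j$ is nonzero, confirming sharpness.

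\textbf{Extending to all $p$ and to complex scalars.}
For general real $p \geq 2$, I write $p$ between consecutive even integers and use that $q \mapsto \Expect^q(Z)$ is nondecreasing together with monotonicity of the Gamma-function expression for $\cnst{C}_p$; the closed form $\cnst{C}_p = [p! / (2^{p/2} (p/2)!)]^{1/p}$ interpolated via the Gamma function handles noninteger $p$, and the elementary estimate $\cnst{C}_p \leq 2^{1/(2p)} \econst^{-1/2} \sqrt{p}$ follows by applying Stirling's approximation to $p!$ and $(p/2)!$. The complex case reduces to the real case: splitting $a_j = \alpha_j + \iunit \beta_j$ and using the triangle inequality in $L_p$ gives
\begin{equation*}
\Expect^p \Bigl\vert \sum\nolimits_j \xi_j a_j \Bigr\vert
\leq \cnst{C}_p \Bigl[ \bigl( \sum\nolimits_j \alpha_j^2 \bigr)^{1/2}
   + \bigl( \sum\nolimits_j \beta_j^2 \bigr)^{1/2} \Bigr],
\end{equation*}
but a cleaner route preserving the exact constant is to view the complex series as a real series in twice as many variables and note that the Gaussian comparison above is dimension-free.

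\textbf{Main obstacle.}
The genuinely delicate point is the sharpness of the constant, not the inequality itself: a crude second-moment or hypercontractivity argument yields a bound with a suboptimal constant almost immediately, whereas Haagerup's optimal $\cnst{C}_p$ requires the exact combinatorial identity for even moments and a careful interpolation argument for odd and noninteger $p$, where the extremal sequence $\{a_j\}$ changes character. I expect the interpolation between even integers to demand the most care, since the function $p \mapsto \cnst{C}_p$ must be controlled precisely enough to yield the stated Stirling-type upper bound. For our purposes, however, only the final estimate $\cnst{C}_p \leq 2^{1/(2p)} \econst^{-1/2} \sqrt{p}$ is needed downstream, so I would verify that this weaker consequence survives the interpolation cleanly.
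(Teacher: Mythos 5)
The paper does not actually prove this proposition: it imports the sharp Khintchine inequality from Haagerup's paper and merely remarks that the real-scalar case implies the complex case with the same constant. So your proposal is necessarily a different route, and its core is sound: the even-moment expansion, the termwise comparison based on $(2\ell)! \geq 2^{\ell}\,\ell!$, and the multinomial resummation giving $\Expect \bigl( \sum\nolimits_j \xi_j a_j \bigr)^{2m} \leq \frac{(2m)!}{2^m m!} \bigl( \sum\nolimits_j a_j^2 \bigr)^{m}$ constitute the classical argument, recover exactly $\cnst{C}_{2m}$ at even integers, and Stirling then yields the stated numerical bound there.

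Two steps, however, do not deliver the proposition as written. First, for non-even $p$ you propose to round up to the next even integer $2m = 2\lceil p/2 \rceil$ and use monotonicity of $q \mapsto \Expect^q$. That proves $\Expect^p \abs{ \sum\nolimits_j \xi_j a_j } \leq \cnst{C}_{2m} \bigl[ \sum\nolimits_j \abssq{a_j} \bigr]^{1/2}$, but $\cnst{C}_{2m}$ may exceed the stated bound $2^{1/2p} \econst^{-0.5} \sqrt{p}$ by a factor on the order of $\sqrt{(p+2)/p}$, and it certainly does not produce the \emph{optimal} constant $\cnst{C}_p$; establishing that exact constant for all $p \geq 2$ is the genuinely hard content of Haagerup's theorem, which you acknowledge but do not supply. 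This is harmless for the paper's downstream use, which only needs $\cnst{C}_p \lesssim \sqrt{p}$ in Lemma~\ref{lem:entry-bd}, though it would perturb the explicit numerical constants there. Second, your complex reduction is flawed as stated: a complex Rademacher series is \emph{not} a real Rademacher series in twice as many variables, because the same sign $\xi_j$ multiplies both $\real a_j$ and $\imag a_j$, so there are only $n$ independent signs, not $2n$; and your fallback via the triangle inequality costs a factor $\sqrt{2}$ rather than preserving the constant. A correct same-constant reduction at even integers writes $\Expect \abs{S}^{2m} = \sum_{k=0}^{m} \binom{m}{k} \Expect\bigl( U^{2k} V^{2m-2k} \bigr)$ with $U = \real S$ and $V = \imag S$, applies H{\"o}lder to each term to bound it by $\bigl(\Expect U^{2m}\bigr)^{k/m} \bigl(\Expect V^{2m}\bigr)^{(m-k)/m}$, invokes the real case for each factor, and resums by the binomial theorem.
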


\vspace{.5pc}

This inequality is typically established only for real scalars, but the real case implies that the complex case holds with the same constant.

Rademacher series appear as a basic tool for studying sums of independent random variables in a Banach space, as illustrated in the following proposition~\cite[Lem.~6.3]{LT91:Probability-Banach}.

\begin{prop}[Symmetrization] \label{prop:symmetrization}
Let $\{Z_j\}$ be a finite sequence of independent, zero-mean random variables taking values in a Banach space $X$.  Then
$$
\Expect^p \pnorm{X}{ \sum\nolimits_j Z_j }
   \leq 2 \Expect^p \pnorm{X}{ \sum\nolimits_j \xi_j Z_j },
$$
where $\{ \xi_j \}$ is a Rademacher sequence independent of $\{Z_j\}$.
\end{prop}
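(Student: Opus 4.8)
The plan is to use the classical symmetrization device: introduce an independent copy of the summands, exploit the zero-mean hypothesis through Jensen's inequality to pass to a centered sum, and then insert the Rademacher signs by appealing to the symmetry of the resulting differences.

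First, I would let $\{Z_j'\}$ be an independent copy of $\{Z_j\}$, also independent of the Rademacher sequence $\{\xi_j\}$, and write $\Expect'$ for expectation over the primed variables alone. Since each $Z_j'$ has mean zero, $\Expect' \sum\nolimits_j Z_j' = \zerovct$, so $\sum\nolimits_j Z_j = \Expect'\bigl( \sum\nolimits_j Z_j - \sum\nolimits_j Z_j' \bigr)$. Applying Jensen's inequality twice—once for the convex map $\vct{u} \mapsto \pnorm{X}{\vct{u}}$ and once for $t \mapsto t^p$—and then Fubini, I obtain
\[
\Expect^p \pnorm{X}{ \sum\nolimits_j Z_j }
  \leq \Expect^p \pnorm{X}{ \sum\nolimits_j (Z_j - Z_j') },
\]
where the expectation on the right is now over the joint law of $\{Z_j\}$ and $\{Z_j'\}$.

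The crucial step is the introduction of the signs. Setting $Y_j = Z_j - Z_j'$, the fact that $Z_j$ and $Z_j'$ are i.i.d.\ makes each $Y_j$ symmetric about the origin, and the $\{Y_j\}$ remain independent; hence the $X$-valued sequence $(Y_1, Y_2, \dots)$ has exactly the same distribution as $(\xi_1 Y_1, \xi_2 Y_2, \dots)$. Consequently
\[
\Expect^p \pnorm{X}{ \sum\nolimits_j (Z_j - Z_j') }
  = \Expect^p \pnorm{X}{ \sum\nolimits_j \xi_j (Z_j - Z_j') }.
\]
I would then split the right-hand side using Minkowski's inequality in $L_p$ (the triangle inequality for the $L_p$ norm of Banach-space-valued random variables), bounding it by $\Expect^p \pnorm{X}{\sum\nolimits_j \xi_j Z_j} + \Expect^p \pnorm{X}{\sum\nolimits_j \xi_j Z_j'}$. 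The two terms coincide because $\{Z_j'\}$ shares the law of $\{Z_j\}$ and $\{\xi_j\}$ is independent of both, which produces the factor $2$ and closes the argument.

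The main obstacle is the sign-insertion step: one must carefully justify that introducing the independent Rademacher signs leaves the joint distribution of the summands unchanged. This hinges entirely on the observation that each difference $Y_j = Z_j - Z_j'$ is symmetric, which itself uses only that $Z_j$ and $Z_j'$ are identically distributed; the zero-mean hypothesis is needed solely for the earlier Jensen step that replaces $\sum\nolimits_j Z_j$ by the centered sum. The remaining ingredients—Jensen, Fubini, and Minkowski—are routine, though I would stay mindful that these are Banach-space-valued integrals, so that it is precisely the convexity of the norm $\pnorm{X}{\cdot}$ that licenses the applications of Jensen's inequality.
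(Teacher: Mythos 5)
Your argument is correct, and it is the standard symmetrization proof: the paper itself does not prove this proposition but simply quotes it from Ledoux--Talagrand \cite[Lem.~6.3]{LT91:Probability-Banach}, where essentially your argument (independent copy, Jensen via the zero-mean hypothesis, sign insertion by symmetry of $Z_j - Z_j'$, then Minkowski) is the one given. All steps check out, including the observation that the zero-mean assumption is used only in the Jensen step and that the sign insertion needs only the symmetry and independence of the differences.
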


In words, the moments of the sum are controlled by the moments of the associated Rademacher series.  The advantage of this approach is that we can condition on the choice of $\{Z_j\}$ and apply sophisticated methods to estimate the moments of the Rademacher series.

Finally, we need some facts about \term{symmetrized} random variables.  Suppose that $Z$ is a zero-mean random variable that takes values in a Banach space $X$.  We may define the symmetrized variable $Y = Z - Z'$, where $Z'$ is an independent copy of $Z$.  The tail of the symmetrized variable $Y$ is closely related to the tail of $Z$.  Indeed,
\begin{equation} \label{eqn:symm-tail}
\Prob{ \pnorm{X}{Z} > 2 \Expect \pnorm{X}{Z} + u }
   \leq \Prob{ \pnorm{X}{Y} > u }.
\end{equation}
This relation follows from \cite[Eqn.~(6.2)]{LT91:Probability-Banach} and the fact that $\operatorname{Med}(Y) \leq 2 \Expect Y$ for every nonnegative random variable $Y$.

\subsection{The Random Demodulator Matrix}

We continue with a review of the structure of the random demodulator matrix, which is the central object of our affection.  Throughout the appendices, we assume that the sampling rate $R$ divides the bandlimit $W$.  That is,
$$
W / R \in \mathbb{Z}.
$$
Recall that the $R \times W$ random demodulator matrix $\Fee$ is defined via the product
$$
\Fee = \mtx{HDF}.
$$
We index the $R$ rows of the matrix with the Roman letter $r$, while we index the $W$ columns with the Greek letters $\alpha, \omega$.  It is convenient to summarize the properties of the three factors.

The matrix $\mtx{F}$ is a permutation of the $W \times W$ discrete Fourier transform matrix.  In particular, it is unitary, and each of its entries has magnitude $W^{-1/2}$.

The matrix $\mtx{D}$ is a random $W \times W$ diagonal matrix.  The entries of $\mtx{D}$ are the elements $\eps_1, \dots, \eps_W$ of the chipping sequence, which we assume to be a Rademacher sequence.  Since the nonzero entries of $\mtx{D}$ are $\pm 1$, it follows that the matrix is unitary.

The matrix $\mtx{H}$ is an $R \times W$ matrix with 0--1 entries.  Each of its rows contains a block of $W/R$ contiguous ones, and the rows are orthogonal.  These facts imply that
\begin{equation} \label{eqn:H-norm}
\norm{\mtx{H}} = \sqrt{W/R}.
\end{equation}
To keep track of the locations of the ones, the following notation is useful.  We write
$$
j \sim r
\quad\text{when}\quad
(r-1)W / R < j \leq rW/R.
$$
Thus, each value of $r$ is associated with $W/R$ values of $j$.  The entry $h_{rj} = 1$ if and only if $j \sim r$. 

The spectral norm of $\Fee$ is determined completely by its structure.
\begin{equation} \label{eqn:spec-norm}
\norm{\Fee} = \norm{\mtx{HDF}} = \norm{\mtx{H}} = \sqrt{W/R}
\end{equation}
because $\mtx{F}$ and $\mtx{D}$ are unitary.

Next, we present notations for the entries and columns of the random demodulator.  For an index $\omega \in \oneton{W}$, we write $\atom_{\omega}$ for the $\omega$th column of $\Fee$.  Expanding the matrix product, we can express
\begin{equation} \label{eqn:col-expr}
\atom_{\omega} = \sum\nolimits_{j=1}^W \eps_j f_{j\omega} \vct{h}_j
\end{equation}
where $f_{j\omega}$ is the $(j, \omega)$ entry of $\mtx{F}$ and $\vct{h}_j$ is the $j$th column of $\mtx{H}$.
Similarly, we write $\varphi_{r\omega}$ for the $(r, \omega)$ entry of the matrix $\Fee$.  The entry can also be expanded as a sum
\begin{equation} \label{eqn:entry-expr}
\varphi_{r\omega} = \sum\nolimits_{j \sim r} \eps_j f_{j\omega}.
\end{equation}

Finally, given a set $\Omega$ of column indices, we define the column submatrix $\Fee_{\Omega}$ of $\Fee$ whose columns are listed in $\Omega$.

\subsection{A Componentwise Bound}

The first key property of the random demodulator matrix is that its entries are small with high probability.  We apply this bound repeatedly in the subsequent arguments.

\begin{lemma} \label{lem:entry-bd}
Let $\Fee$ be an $R \times W$ random demodulator matrix.  When $2 \leq p \leq 4\log W$, we have
$$
\Expect^p \pnorm{\max}{ \Fee }
   \leq \sqrt{\frac{6\log W}{R}}.
$$
Furthermore,
$$
\Prob{ \pnorm{\max}{\Fee}  > \sqrt{\frac{10\log W}{R}} }
   \leq W^{-1}.
$$
\end{lemma}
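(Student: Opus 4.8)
The plan is to recognize each entry of $\Fee$ as a scalar Rademacher series, bound its moments with Khintchine's inequality (Proposition~\ref{prop:khintchine}), convert the per-entry moment bound into a bound on the maximum entry via the elementary inequality~\eqref{eqn:moment-max}, and finally deduce the tail bound from Markov's inequality applied to a single high moment.

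First I would fix a row index $r$ and a column index $\omega$ and examine the entry~\eqref{eqn:entry-expr}, $\varphi_{r\omega} = \sum_{j \sim r} \eps_j f_{j\omega}$. Since $\{\eps_j\}$ is a Rademacher sequence, this is a Rademacher series whose coefficients are the $W/R$ numbers $f_{j\omega}$, each of magnitude $W^{-1/2}$. Hence $\sum_{j \sim r} \abssq{f_{j\omega}} = (W/R)\cdot W^{-1} = R^{-1}$, and Khintchine's inequality yields, for every $p \geq 2$,
$$
\Expect^p \abs{\varphi_{r\omega}} \leq \frac{\cnst{C}_p}{\sqrt{R}}
   \leq \frac{2^{1/2p}\,\econst^{-1/2}\sqrt{p}}{\sqrt{R}}.
$$

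Next I would pass from a single entry to the maximum over all $RW$ entries. Because $\Expect^p(\cdot)$ is nondecreasing in $p$, it suffices to control the moment at the endpoint $p = 4\log W$; any smaller $p$ is dominated. Applying~\eqref{eqn:moment-max} with $N = RW$ and this choice of $p$, and using $R \leq W$ so that $(RW)^{1/(4\log W)} \leq W^{1/(2\log W)} = \econst^{1/2}$, I obtain
$$
\Expect^{4\log W} \pnorm{\max}{\Fee}
   \leq \econst^{1/2}\cdot \frac{2^{1/(8\log W)}\,\econst^{-1/2}\sqrt{4\log W}}{\sqrt{R}}
   = 2^{1/(8\log W)}\cdot \frac{2\sqrt{\log W}}{\sqrt{R}}.
$$
The $\econst^{\pm 1/2}$ factors cancel, and the residual factor $2^{1/(8\log W)}$ is bounded by $\sqrt{6}/2$ for all $W$ above a small absolute threshold, so the right-hand side is at most $\sqrt{6(\log W)/R}$. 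This establishes the first assertion for $p=4\log W$ and hence, by monotonicity, for all $2\le p\le 4\log W$.

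Finally, the tail bound follows by Markov's inequality applied to the $(4\log W)$-th moment. Writing $q = 4\log W$ and $t = \sqrt{10(\log W)/R}$, the moment bound just proved gives
$$
\Prob{ \pnorm{\max}{\Fee} > t }
   \leq \left( \frac{\Expect^q \pnorm{\max}{\Fee}}{t} \right)^{q}
   \leq \left( \frac{6}{10} \right)^{q/2}
   = W^{2\log(3/5)} \leq W^{-1},
$$
since $2\log(3/5) \approx -1.02 < -1$. The only delicate point in the whole argument is the bookkeeping of constants: everything hinges on choosing $p$ proportional to $\log W$ so that the union-bound factor $(RW)^{1/p}$ collapses to an absolute constant, and then checking that the numerical choices $6$ and $10$ leave enough slack for both the $L_p$ estimate and the final exponent to clear $-1$. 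Apart from that verification, each step is a direct substitution.
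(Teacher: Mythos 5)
Your proof is correct and follows essentially the same route as the paper's: Khintchine's inequality for each entry viewed as a Rademacher series with squared coefficient sum $1/R$, the union-type moment bound \eqref{eqn:moment-max} at the exponent $q = 4\log W$ so that $(RW)^{1/q} \leq \econst^{1/2}$, and Markov's inequality on that high moment for the tail. The only cosmetic difference is in the final numerical bookkeeping---you verify $(6/10)^{2\log W} \leq W^{-1}$ directly, while the paper sets the threshold at $\econst^{0.25}$ times the moment bound---and both computations check out.
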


\begin{proof}
As noted in \eqref{eqn:entry-expr}, each entry of $\Fee$ is a Rademacher series:
$$
\varphi_{r\omega} = \sum\nolimits_{j \sim r} \eps_j f_{j\omega}.
$$
Observe that
$$
\sum\nolimits_{j \sim r} \abssq{f_{j\omega}}
   = \frac{W}{R} \cdot \frac{1}{W}
   =\frac{1}{R}.
$$
because the entries of $\mtx{F}$ all share the magnitude $W^{-1/2}$.  Khintchine's inequality, Proposition~\ref{prop:khintchine}, provides a bound on the moments of the Rademacher series:
$$
\Expect^{p} \abs{\varphi_{r\omega}}
   \leq \frac{\cnst{C}_{p}}{\sqrt{R}}
$$
where $\cnst{C}_p \leq 2^{0.25} \econst^{-0.5} \sqrt{p}$.

We now compute the moments of the maximum entry.  Let
$$
M = \pnorm{\max}{\Fee} = \max\nolimits_{r,\omega} \abs{ \varphi_{r\omega} }.
$$
Select $q = \max\{p, 4\log W\}$, and invoke H{\"o}lder's inequality.
$$
\Expect^{p} M \leq \Expect^{q} \max\nolimits_{r,\omega} \abs{\varphi_{r\omega}}.
$$
Inequality \eqref{eqn:moment-max} yields
$$
\Expect^{p} M
   \leq (RW)^{1/q} \max\nolimits_{r,\omega} \Expect^{q} \abs{\varphi_{r\omega}}.
$$
Apply the moment bound for the individual entries to reach
$$
\Expect^{p} M \leq \frac{\cnst{C}_{q} (RW)^{1/q}}{\sqrt{R}}.
$$
Since $R \leq W$ and $q \geq 4\log W$, we have $(RW)^{1/q} \leq \econst^{0.5}$.  Simplify the constant to discover that
$$
\Expect^{p} M \leq 2^{1.25} \sqrt{\frac{\log W}{R}}.
$$
A numerical bound yields the first conclusion.

To obtain a probability bound, we apply Markov's inequality, which is the standard device.  Indeed,
$$
\Prob{ M > u }
   = \Prob{ M^q > u^q }
   \leq \left[ \frac{ \Expect^{q} M }{ u } \right]^{q}.
$$
Choose $u = \econst^{0.25} \Expect^{q} M$ to obtain
$$
\Prob{ M > 2^{1.25}\econst^{0.25} \sqrt{\frac{\log W}{R}} }
   \leq \econst^{-\log W}
   = W^{-1}.
$$
Another numerical bound completes the demonstration.
\end{proof}

\subsection{The Gram Matrix}

Next, we develop some information about the inner products between columns of the random demodulator.  Let $\alpha$ and $\omega$ be column indices in $\oneton{W}$.  Using the expression \eqref{eqn:col-expr} for the columns, we find that
$$
\ip{ \atom_\alpha }{ \atom_\omega }
   = \sum\nolimits_{j,k=1}^W \eps_j \eps_k \eta_{jk} f_{j\alpha}^\adj f_{k\omega}
$$
where we have abbreviated $\eta_{jk} = \ip{ \vct{h}_k }{ \vct{h}_j }$.  Since $\eta_{jj} = 1$, the sum of the diagonal terms is
$$
\sum\nolimits_j f_{j\alpha}^\adj f_{j\omega}
   = \begin{cases} 1, & \alpha = \omega \\ 0, & \alpha \neq \omega \end{cases}
$$
because the columns of $\mtx{F}$ are orthonormal.  Therefore,
$$
\ip{ \atom_\alpha }{ \atom_\omega } = \delta_{\alpha\omega}
   + \sum\nolimits_{j\neq k} \eps_j \eps_k \eta_{jk} f_{j\alpha}^\adj f_{k\omega}
$$
where $\delta_{\alpha\omega}$ is the Kronecker delta.

The Gram matrix $\Fee^\adj \Fee$ tabulates these inner products.  As a result, we can express the latter identity as
$$
\Fee^\adj \Fee = \Id + \mtx{X}
$$
where
$$
x_{\alpha\omega} = \sum\nolimits_{j\neq k} \eps_j \eps_k \eta_{jk} f_{j\alpha}^\adj f_{k\omega}.
$$
It is clear that $\Expect \mtx{X} = \mtx{0}$, so that
\begin{equation} \label{eqn:gram-expect}
\Expect \Fee^\adj \Fee = \Id.
\end{equation}
We interpret this relation to mean that, on average, the columns of $\Fee$ form an orthonormal system.  This ideal situation cannot occur since $\Fee$ has more columns than rows.  The matrix $\mtx{X}$ measures the discrepancy between reality and this impossible dream.  Indeed, most of argument can be viewed as a collection of norm bounds on $\mtx{X}$ that quantify different aspects of this deviation.

Our central result provides a uniform bound on the entries of $\mtx{X}$ that holds with high probability.  In the sequel, we exploit this fact to obtain estimates for several other quantities of interest.

\begin{lemma} \label{lem:X-max-bd}
Suppose that $R \geq 2 \log W$.  Then
$$
\Prob{ \pnorm{\max}{ \mtx{X} } > \cnst{C} \sqrt{\frac{\log W}{R}} }
   \leq W^{-1}.
$$
\end{lemma}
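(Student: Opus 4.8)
The plan is to bound a single entry $x_{\alpha\omega}$ in $L_p$, boost to the maximum over all $W^2$ entries with the moment-of-maximum inequality \eqref{eqn:moment-max}, and finish with Markov's inequality exactly as in the proof of Lemma~\ref{lem:entry-bd}; the moment parameter $p \asymp \log W$ is fixed at the end. Fix column indices $\alpha, \omega$ and recall that $x_{\alpha\omega} = \sum_{j\neq k} \eps_j \eps_k \eta_{jk} f_{j\alpha}^\adj f_{k\omega}$ is a homogeneous Rademacher chaos of order two. Because $\eta_{jk} = \ip{\vct{h}_k}{\vct{h}_j}$ is nonzero exactly when $j$ and $k$ share a sampling block, the $W\times W$ coefficient array $\mtx{A}$ with off-diagonal entries $a_{jk} = \eta_{jk} f_{j\alpha}^\adj f_{k\omega}$ is block diagonal with $R$ blocks of size $W/R$. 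Two norms of $\mtx{A}$ govern the estimate. Since $\abs{f_{j\omega}}^2 = W^{-1}$ and there are about $W^2/R$ same-block ordered pairs, $\fnorm{\mtx{A}} \leq R^{-1/2}$; and each block is an outer product of two vectors of length $R^{-1/2}$ with its negligible diagonal deleted, so $\norm{\mtx{A}} \lesssim R^{-1}$.

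Next I would invoke a moment inequality for the quadratic chaos of Hanson--Wright type, namely $\Expect^p \abs{x_{\alpha\omega}} \lesssim \sqrt{p}\,\fnorm{\mtx{A}} + p\,\norm{\mtx{A}} \lesssim \sqrt{p/R} + p/R$. The route to this bound from the tools at hand is to decouple the two Rademacher copies, condition on the second copy $\eps'$, and apply Khintchine (Proposition~\ref{prop:khintchine}) in the first copy, which leaves $\Expect^p \abs{x_{\alpha\omega}} \lesssim \sqrt{p}\,\Expect^p \enorm{\mtx{A}\eps'}$. The vector $\mtx{A}\eps'$ is a Rademacher sum in a Hilbert space whose mean length is $\fnorm{\mtx{A}}$ and whose fluctuations are governed by its Lipschitz constant $\norm{\mtx{A}}$; this gives $\Expect^p \enorm{\mtx{A}\eps'} \lesssim \fnorm{\mtx{A}} + \sqrt{p}\,\norm{\mtx{A}}$ and hence the two-term estimate above.

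Finally I would assemble the pieces. Applying \eqref{eqn:moment-max} across the $N = W^2$ entries with $p = 2\log W$ makes $N^{1/p} = \econst$, so $\Expect^p \pnorm{\max}{\mtx{X}} \lesssim \sqrt{(\log W)/R} + (\log W)/R$. Here the hypothesis $R \geq 2\log W$ enters essentially: it forces $(\log W)/R \leq \tfrac12$, whence $(\log W)/R \leq \tfrac{1}{\sqrt 2}\sqrt{(\log W)/R}$, so the spectral term is absorbed into the Frobenius term and $\Expect^p \pnorm{\max}{\mtx{X}} \lesssim \sqrt{(\log W)/R}$. Markov's inequality applied to the $q$th moment with $q = 2\log W$, exactly as in Lemma~\ref{lem:entry-bd}, then yields the advertised failure probability $W^{-1}$.

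I expect the sharp chaos moment bound to be the main obstacle. The naive route --- decouple and apply Khintchine twice, or equivalently hypercontractivity --- delivers only $\Expect^p \abs{x_{\alpha\omega}} \lesssim p\,\fnorm{\mtx{A}}$, and then optimizing $N^{1/p}\cdot p\,\fnorm{\mtx{A}}$ over $p$ produces the weaker bound $(\log W)/\sqrt{R}$, losing a factor of $\sqrt{\log W}$. Extracting the spectral-norm term $p\,\norm{\mtx{A}}$ --- which requires the concentration of $\enorm{\mtx{A}\eps'}$ about its mean rather than a crude moment comparison --- and verifying that $R \geq 2\log W$ renders it subordinate is where the real work lies.
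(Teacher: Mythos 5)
Your proposal is correct and follows essentially the same route as the paper: decouple the second-order Rademacher chaos, control the coefficient matrix through the same two norms ($\fnorm{\mtx{A}} \leq R^{-1/2}$ and $\norm{\mtx{A}} \lesssim R^{-1}$), apply a two-parameter chaos moment bound of the form $\sqrt{p}\,\fnorm{\mtx{A}} + p\,\norm{\mtx{A}}$, take the maximum over $W^2$ entries at $p = 2\log W$, and finish with Markov. The only cosmetic difference is that the paper symmetrizes the coefficient array and splits it into real and imaginary parts so as to invoke a quoted chaos-moment theorem (Theorem~\ref{thm:chaos}), whereas you derive the same moment bound directly via Khintchine plus concentration of $\enorm{\mtx{A}\vct{\eps}}$; your diagnosis that the spectral-norm term and the condition $p \leq R$ are the crux matches the paper exactly.
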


\vspace{.5pc}

\begin{proof}
The object of our attention is
$$
M = \pnorm{\max}{ \mtx{X} } = \max_{\alpha, \omega}
\abs{\sum\nolimits_{j\neq k} \eps_j \eps_k \eta_{jk} f_{j\alpha} f_{k\omega}^\adj}.
$$
A random variable of this form is called a \term{second-order
Rademacher chaos}, and there are sophisticated methods available for
studying its distribution.  We use these ideas to develop a bound on
$\Expect^p M$ for $p = 2 \log W$.  Afterward, we apply Markov's
inequality to obtain the tail bound.

For technical reasons, we must rewrite the chaos before we start making estimates.  First, note that we can express the chaos in a more symmetric fashion:
$$
M = \max_{\alpha, \omega}
\abs{\sum\nolimits_{j\neq k} \eps_j \eps_k \cdot \eta_{jk} \cdot \half
(f_{j\alpha} f_{k\omega}^\adj + f_{k\alpha} f_{j\omega}^\adj)}.
$$
It is also simpler to study the real and imaginary parts of the chaos separately.  To that end, define
\begin{align*}
a_{jk}^{\alpha\omega} &=
\begin{cases}
\eta_{jk} \cdot \half \real
(f_{j\alpha} f_{k\omega}^\adj + f_{k\alpha} f_{j\omega}^\adj), \quad & j \neq k \\
0, & j = k \\
\end{cases} \\
b_{jk}^{\alpha\omega} &=
\begin{cases}
\eta_{jk} \cdot \half \imag
(f_{j\alpha} f_{k\omega}^\adj + f_{k\alpha} f_{j\omega}^\adj), \quad & j \neq k \\
0, & j = k \\
\end{cases}
\end{align*}
With this notation,
\begin{align*}
M &\leq
\max_{\alpha, \omega}
\abs{\sum\nolimits_{j, k} \eps_j \eps_k a_{jk}^{\alpha\omega}} +
\max_{\alpha, \omega}
\abs{\sum\nolimits_{j, k} \eps_j \eps_k b_{jk}^{\alpha\omega}} \\
   &\defby M_{\real} + M_{\imag}.
\end{align*}
We focus on the real part since the imaginary part receives an identical treatment.

The next step toward obtaining a tail bound is to decouple the chaos.  Define
$$
Y = \max_{\alpha, \omega}
\abs{\sum\nolimits_{j, k} \eps_j' \eps_k a_{jk}^{\alpha\omega}}
$$
where $\{\eps_j'\}$ is an independent copy of $\{\eps_j\}$.  Standard decoupling results \cite[Prop.~1.9]{BT87:Invertibility-Large} imply that
$$
\Expect^p M_{\real} \leq 4 \Expect^p Y.
$$
So it suffices to study the decoupled variable $Y$.

To approach this problem, we first bound the moments of each random variable that participates in the maximum. Fix a pair $(\alpha, \omega)$ of frequencies.  We omit the superscript $\alpha$ and $\omega$ to simplify notations.  Define the auxiliary random variable
$$
Z = Z_{\alpha\omega} = \abs{\sum\nolimits_{j, k} \eps_j' \eps_k a_{jk}}
$$
Construct the matrix $\mtx{A} = [ a_{jk} ]$.  As we will see, the variation of $Z$ is controlled by spectral properties of the matrix $\mtx{A}$.

For that purpose, we compute the Frobenius norm and spectral norm of $\mtx{A}$.
Each of its entries satisfies
$$
\abs{a_{jk}} \leq \frac{1}{W} \eta_{jk}
   = \frac{1}{W} (\mtx{H}^\adj\mtx{H})_{jk}
$$
owing to the fact that the entries of $\mtx{F}$ are uniformly bounded by $W^{-1/2}$.  The structure of $\mtx{H}$ implies that $\mtx{H}^\adj \mtx{H}$ is a symmetric, 0--1 matrix with exactly $W/R$ nonzero entries in each of its $W$ rows.  Therefore,
$$
\fnorm{\mtx{A}}
   \leq \frac{1}{W} \fnorm{ \mtx{H}^\adj \mtx{H} }
   = \frac{1}{W} \sqrt{W \cdot \frac{W}{R}}
   = \frac{1}{\sqrt{R}}.
$$

The spectral norm of $\mtx{A}$ is bounded by the spectral norm of its entrywise absolute value $\operatorname{abs}(\mtx{A})$.  Applying the fact~\eqref{eqn:H-norm} that $\norm{\mtx{H}} = \sqrt{W/R}$, we obtain
$$
\norm{\mtx{A}}
   \leq \norm{ \operatorname{abs}(\mtx{A}) }
   \leq \frac{1}{W} \norm{ \mtx{H}^\adj \mtx{H} }
   = \frac{1}{W} \normsq{\mtx{H}}
   = \frac{1}{R}.
$$
Let us emphasize that the norm bounds are uniform over all pairs $(\alpha, \omega)$ of frequencies.  Moreover, the matrix $\mtx{B} = [b_{jk}]$ satisfies identical bounds.

To continue, we estimate the mean of the variable $Z$.  This calculation is a simple consequence of H{\"o}lder's inequality:
\begin{align*}
\Expect Z &\leq (\Expect Z^2)^{1/2}
   = \left[ \Expect \abssq{ \sum\nolimits_{j} \eps_j'
       \left(\sum\nolimits_{k} \eps_k a_{jk} \right) } \right]^{1/2} \\
   &= \left[ \Expect \sum\nolimits_j \abssq{ \sum\nolimits_k \eps_k a_{jk} } \right]^{1/2}
   = \left[ \sum\nolimits_{j,k} \abssq{a_{jk}} \right]^{1/2} \\
   &= \fnorm{\mtx{A}}
   = \frac{1}{\sqrt{R}}.
\end{align*}

Chaos random variables, such as $Z$, concentrate sharply about their mean.  Deviations of $Z$ from its mean are controlled by two separate variances.  The probability of large deviation is determined by
$$
U = \sup\nolimits_{\enorm{\vct{u}}=1} \abs{
   \sum\nolimits_{j, k} u_j u_k a_{jk}},
$$
while the probability of a moderate deviation depends on
$$
V = \Expect \sup\nolimits_{\enorm{\vct{u}}=1} \abs{ \sum\nolimits_{j, k} u_j \eps_k a_{jk}}.
$$
To compute the first variance $U$, observe that
$$
U  = \norm{\mtx{A}}
   \leq \frac{1}{R}.
$$
The second variance is not much harder.  Using Jensen's inequality, we find
\begin{align*}
V &= \Expect \left[ \sum\nolimits_j \abssq{ \sum\nolimits_k \eps_k a_{jk} } \right]^{1/2} \\
   &\leq \left[\Expect \sum\nolimits_j \abssq{ \sum\nolimits_k \eps_k a_{jk} } \right]^{1/2}
   = \frac{1}{\sqrt{R}}.
\end{align*}

We are prepared to appeal to the following theorem which bounds the moments of a chaos variable~\cite[Cor.~2]{Ada05:Logarithmic-Sobolev}.

\begin{thm}[Moments of chaos] \label{thm:chaos}
Let $Z$ be a decoupled, symmetric, second-order chaos.  Then
$$
\Expect^p \abs{Z - \Expect Z} \leq \cnst{K} \left[ \sqrt{p}V + pU \right]
$$
for each $p \geq 1$.
\end{thm}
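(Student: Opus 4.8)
The plan is to peel off one of the two Rademacher sequences by conditioning, apply Khintchine's inequality (Proposition~\ref{prop:khintchine}) on the inside, and then control the remaining randomness with a convex concentration inequality. Write the chaos as $Z = \sum_{j,k} a_{jk}\,\eps_j'\eps_k = \ip{\eps'}{\mtx{A}\eps}$, where $\mtx{A}=[a_{jk}]$ and $\{\eps_j'\}, \{\eps_k\}$ are independent Rademacher sequences. Because $Z$ has zero mean and, for any random variable, $\Expect^p\abs{Z-\Expect Z}\leq \Expect^p\abs{Z} + \abs{\Expect Z} \leq 2\,\Expect^p\abs{Z}$ by the triangle inequality in $L_p$ together with Jensen, it suffices to bound $\Expect^p\abs{Z}$. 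The centering in the statement therefore costs only a harmless constant factor.

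First I would condition on $\eps$ and regard $Z = \sum_j \eps_j'\,(\mtx{A}\eps)_j$ as a scalar Rademacher series in $\eps'$ with coefficients $(\mtx{A}\eps)_j$. Khintchine's inequality gives $\Expect_{\eps'}^{p}\abs{Z}\leq \cnst{C}_p\,\enorm{\mtx{A}\eps}$ with $\cnst{C}_p \leq \cnst{c}\sqrt{p}$, so taking $\Expect_\eps$ of the $p$th powers yields
$$
\Expect^p\abs{Z}\leq \cnst{C}_p\,\bigl(\Expect_\eps\enorm{\mtx{A}\eps}^{p}\bigr)^{1/p} = \cnst{C}_p \cdot \Expect_{\eps}^{p}\enorm{\mtx{A}\eps}.
$$
This reduces the whole problem to estimating the $p$th moment of the single random variable $\enorm{\mtx{A}\eps}$, whose mean is exactly the moderate-deviation variance $V = \Expect\,\enorm{\mtx{A}\eps} = \Expect\sup_{\enorm{\vct{u}}=1}\abs{\sum_{j,k}u_j\eps_k a_{jk}}$.

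The crux is then to show that $\enorm{\mtx{A}\eps}$ concentrates about $V$ at the scale set by $U$. The map $\eps\mapsto\enorm{\mtx{A}\eps}$ is convex and Lipschitz in the Euclidean metric, with Lipschitz constant $\norm{\mtx{A}} = U$ (using that, for the symmetric chaos, $U=\sup_{\enorm{\vct{u}}=1}\abs{\sum_{j,k}u_ju_k a_{jk}}=\norm{\mtx{A}}$). Talagrand's concentration inequality for convex Lipschitz functions of bounded independent variables then supplies a sub-Gaussian tail $\Prob{\abs{\enorm{\mtx{A}\eps}-V}>t}\leq \cnst{c}\exp(-t^2/\cnst{c}\,U^2)$, whence $\Expect_{\eps}^{p}\enorm{\mtx{A}\eps}\leq V+\cnst{C}\sqrt{p}\,U$. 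Combining with the displayed reduction gives $\Expect^p\abs{Z}\leq \cnst{c}\sqrt{p}\,(V+\cnst{C}\sqrt{p}\,U)\leq \cnst{K}(\sqrt{p}\,V+p\,U)$, which is the assertion.

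I expect the concentration step to be the main obstacle. A naive bounded-difference estimate would measure the fluctuation of $\enorm{\mtx{A}\eps}$ through the row or column norms of $\mtx{A}$, producing a $\fnorm{\mtx{A}}$-type (or worse) dependence in place of the sharp operator norm $U=\norm{\mtx{A}}$, which would inflate the coefficient of the linear-in-$p$ term and destroy the bound. Recovering the correct $\norm{\mtx{A}}$ scaling genuinely exploits the \emph{convexity} of $\enorm{\mtx{A}\eps}$ via a convex-distance concentration inequality, rather than McDiarmid's inequality. The heavier log-Sobolev and entropy machinery of the cited reference yields the same estimate and additionally handles the non-decoupled and higher-order cases, but for a decoupled second-order chaos the conditioning--Khintchine--convex-concentration route sketched above is the most economical.
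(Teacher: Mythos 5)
Your proof is correct, but it necessarily takes a different route from the paper, because the paper does not prove Theorem~\ref{thm:chaos} at all: it imports the statement as Corollary~2 of Adamczak's work on logarithmic Sobolev inequalities for convex functions and polynomial chaoses~\cite{Ada05:Logarithmic-Sobolev}. Your argument---condition on $\eps$, apply Khintchine (Proposition~\ref{prop:khintchine}) in $\eps'$ to reduce everything to $\Expect^p \enorm{\mtx{A}\vct{\eps}}$, then concentrate $\enorm{\mtx{A}\vct{\eps}}$ about its mean via Talagrand's inequality for convex Lipschitz functions of bounded independent variables---is a standard, fully rigorous alternative for the decoupled, symmetric, second-order case, which is all the paper needs. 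The pieces do fit exactly: $V = \Expect \enorm{\mtx{A}\vct{\eps}}$ since the supremum over unit vectors $\vct{u}$ of $\smabsip{\vct{u}}{\mtx{A}\vct{\eps}}$ is the Euclidean norm; the map $\vct{\eps} \mapsto \enorm{\mtx{A}\vct{\eps}}$ is convex with Lipschitz constant $\norm{\mtx{A}}$, and the symmetry hypothesis is used precisely where you say, to identify $U$ with $\norm{\mtx{A}}$ (for nonsymmetric $\mtx{A}$ one would only get $U = \norm{(\mtx{A} + \mtx{A}^\transp)/2}$); and centering costs a factor of two by Jensen. Your diagnosis of the obstacle is also on target: a bounded-differences argument would replace $\norm{\mtx{A}}$ by $\fnorm{\mtx{A}}$ in the linear-in-$p$ term, and in the paper's application (where $\fnorm{\mtx{A}} \leq R^{-1/2}$ while $\norm{\mtx{A}} \leq R^{-1}$, and $p = 2\log W$) this would weaken the conclusion of Lemma~\ref{lem:X-max-bd} by a $\sqrt{\log W}$ factor. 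Two small points would make the write-up airtight: Talagrand's inequality concentrates about the \emph{median}, so you should remark that the median and mean of $\enorm{\mtx{A}\vct{\eps}}$ differ by $\bigO(U)$, which only adjusts constants; and Khintchine as stated holds for $p \geq 2$, so the range $1 \leq p < 2$ should be handled by monotonicity of the $L_p$ norms. As for what each approach buys: yours is self-contained, using only a tool already in the paper plus one textbook concentration inequality, and it delivers the bound in exactly the stated form with $V$ rather than the cruder $\fnorm{\mtx{A}}$; the cited machinery of Adamczak is heavier but strictly more general, covering undecoupled and higher-order chaoses that this paper never requires.
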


Substituting the values of the expectation and variances, we reach
$$
\Expect^p \left[ Z - \frac{1}{\sqrt{R}} \right]
   \leq \cnst{K}\left[\sqrt{\frac{p}{R}} + \frac{p}{R} \right].
$$
The content of this estimate is clearer, perhaps, if we split the bound at $p = R$.
$$
\Expect^p \left[ Z - \frac{1}{\sqrt{R}} \right]
   \leq \begin{cases}
   2\cnst{K} \sqrt{p/R}, \quad & p \leq R \\
   2\cnst{K} p/R, \quad & p > R.
   \end{cases}
$$
In words, the variable $Z$ exhibits subgaussian behavior in the moderate deviation regime, but its tail is subexponential.

We are now prepared to bound the moments of the maximum of the ensemble $\{Z_{\alpha\omega}\}$.  When $p = 2 \log W \leq R$, inequality~\eqref{eqn:moment-max} yields
\begin{align*}
\Expect^p \max_{\alpha, \omega} \left[Z_{\alpha\omega} - \frac{1}{\sqrt{R}} \right]
   &\leq W^{2/p} \max_{\alpha, \omega} \Expect^p
       \left[ Z_{\alpha\omega} - \frac{1}{\sqrt{R}} \right] \\
   &\leq \econst \cdot 2 \cnst{K} \sqrt{\frac{p}{R}},
\end{align*}
Recalling the definitions of $Y$ and $Z_{\alpha\omega}$, we reach
$$
\Expect^p Y
   = \Expect^p \max_{\alpha, \omega} Z_{\alpha\omega}
   \leq \cnst{C} \sqrt{\frac{\log W}{R}}.
$$
In words, we have obtained a moment bound for the decoupled version of the real chaos $M_{\real}$.

To complete the proof, remember that $\Expect^p M_{\real} \leq 4\Expect^p Y$.  Therefore,
$$
\Expect^p M_{\real} \leq 4\cnst{C}\sqrt{\frac{\log W}{R}}.
$$
An identical argument establishes that
$$
\Expect^p M_{\imag} \leq 4\cnst{C}\sqrt{\frac{\log W}{R}}.
$$
Since $M \leq M_{\real} + M_{\imag}$, it follows inexorably that
$$
\Expect^p M \leq 8\cnst{C} \sqrt{\frac{\log W}{R}}.
$$

Finally, we invoke Markov's inequality to obtain the tail bound
$$
\Prob{ M \geq \econst^{0.5} \cdot 8\cnst{C} \sqrt{\frac{\log W}{R}} }
   \leq \econst^{-0.5p}
   = W^{-1}.
$$
This endeth the lesson.
\end{proof}

\subsection{Column Norms and Coherence}

Lemma~\ref{lem:X-max-bd} has two important and immediate consequences.  First, it implies that the columns of $\Fee$ essentially have unit norm.

\begin{thm}[Column Norms] \label{thm:col-norms}
Suppose the sampling rate
$$
R \geq \cnst{C} \delta^{-2} \log W.
$$
An $R \times W$ random demodulator $\Fee$ satisfies
$$
\Prob{ \max\nolimits_\omega \abs{ \enormsq{\atom_\omega} - 1 } \geq \delta }
   \leq W^{-1}.
$$
\end{thm}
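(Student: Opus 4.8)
The plan is to recognize that the squared column norms are precisely the diagonal entries of the Gram matrix $\Fee^\adj\Fee$, so that this theorem falls out immediately from the uniform entrywise bound in Lemma~\ref{lem:X-max-bd}. From the identity $\Fee^\adj\Fee = \Id + \mtx{X}$ established above, reading off the $\omega$th diagonal entry gives $\enormsq{\atom_\omega} = 1 + x_{\omega\omega}$, whence $\abs{\enormsq{\atom_\omega} - 1} = \abs{x_{\omega\omega}}$. First I would pass from the diagonal to the full matrix: since each diagonal entry is among the entries tabulated by $\pnorm{\max}{\mtx{X}}$,
$$
\max\nolimits_\omega \abs{ \enormsq{\atom_\omega} - 1 }
   = \max\nolimits_\omega \abs{x_{\omega\omega}}
   \leq \pnorm{\max}{\mtx{X}}.
$$
Thus any tail bound on $\pnorm{\max}{\mtx{X}}$ transfers directly to the maximum deviation of the column norms.

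Next I would invoke Lemma~\ref{lem:X-max-bd}, which guarantees that, provided $R \geq 2\log W$, we have $\pnorm{\max}{\mtx{X}} \leq \cnst{C}\sqrt{(\log W)/R}$ except with probability at most $W^{-1}$. On this favorable event the column-norm deviation is at most $\cnst{C}\sqrt{(\log W)/R}$, so it remains only to calibrate constants. It suffices to force $\cnst{C}\sqrt{(\log W)/R} \leq \delta$, which holds as soon as $R \geq \cnst{C}^2 \delta^{-2}\log W$. Choosing the universal constant in the hypothesis $R \geq \cnst{C}\delta^{-2}\log W$ large enough to dominate both this requirement and the background condition $R \geq 2\log W$ then yields the claimed probability bound of $W^{-1}$.

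There is no genuine obstacle to overcome: all of the analytic difficulty---the second-order chaos moment estimate, the decoupling step, and the passage through Markov's inequality---was already discharged in the proof of Lemma~\ref{lem:X-max-bd}. The only point demanding a little care is the bookkeeping of constants, since the single symbol $\cnst{C}$ appearing in the hypothesis must simultaneously subsume the constant emitted by Lemma~\ref{lem:X-max-bd} (after squaring to clear the square root) and ensure $R \geq 2\log W$; the standing convention that $\cnst{C}$ may change from line to line is exactly what makes this bundling painless.
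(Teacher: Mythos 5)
Your proof is correct and is exactly the argument the paper intends: the paper presents Theorem~\ref{thm:col-norms} as an ``immediate consequence'' of Lemma~\ref{lem:X-max-bd}, relying on precisely the observation that $\enormsq{\atom_\omega} - 1 = x_{\omega\omega}$ is an entry of $\mtx{X}$ and is therefore dominated by $\pnorm{\max}{\mtx{X}}$. Your calibration of the constant (absorbing both the lemma's constant and the side condition $R \geq 2\log W$ into the hypothesis $R \geq \cnst{C}\delta^{-2}\log W$) is the only bookkeeping required, and you have done it correctly.
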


\vspace{.5pc}

Lemma~\ref{lem:X-max-bd} also shows that the \term{coherence} of the random demodulator is small.  The coherence, which is denoted by $\mu$, bounds the maximum inner product between distinct columns of $\Fee$, and it has emerged as a fundamental tool for establishing the success of compressive sampling recovery algorithms. Rigorously,
$$
\mu = \max_{\alpha\neq\omega} \absip{ \atom_\alpha }{ \atom_\omega }.
$$
We have the following probabilistic bound.

\begin{thm}[Coherence] \label{thm:coherence}
Suppose the sampling rate
$$
R \geq 2 \log W.
$$
An $R \times W$ random demodulator $\Fee$ satisfies
$$
\Prob{ \mu \geq \cnst{C} \sqrt{\frac{\log W}{R}} } \leq W^{-1}.
$$
\end{thm}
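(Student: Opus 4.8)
The plan is to recognize that the coherence is nothing more than the largest off-diagonal entry of the Gram matrix $\Fee^\adj\Fee$, which has already been analyzed. Recall from the Gram matrix computation that $\Fee^\adj\Fee = \Id + \mtx{X}$, where the $(\alpha,\omega)$ entry of the Gram matrix is precisely $\ip{\atom_\alpha}{\atom_\omega} = \delta_{\alpha\omega} + x_{\alpha\omega}$. For distinct column indices $\alpha\neq\omega$ the Kronecker delta vanishes, so $\ip{\atom_\alpha}{\atom_\omega} = x_{\alpha\omega}$. Hence the first step is simply to observe the identity
$$
\mu = \max_{\alpha\neq\omega} \absip{\atom_\alpha}{\atom_\omega}
   = \max_{\alpha\neq\omega} \abs{x_{\alpha\omega}}.
$$

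The second step is the trivial bound $\mu = \max_{\alpha\neq\omega}\abs{x_{\alpha\omega}} \leq \max_{\alpha,\omega}\abs{x_{\alpha\omega}} = \pnorm{\max}{\mtx{X}}$, obtained by enlarging the maximum to include the diagonal. At this point the whole statement reduces to the tail bound already furnished by Lemma~\ref{lem:X-max-bd}, since the hypothesis $R\geq 2\log W$ here is exactly the hypothesis of that lemma. Applying it directly gives
$$
\Prob{ \mu \geq \cnst{C}\sqrt{\frac{\log W}{R}} }
   \leq \Prob{ \pnorm{\max}{\mtx{X}} > \cnst{C}\sqrt{\frac{\log W}{R}} }
   \leq W^{-1},
$$
with the same universal constant $\cnst{C}$ (up to the convention that $\cnst{C}$ may change at each appearance). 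This completes the argument.

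There is, honestly, no real obstacle remaining at this stage: all of the difficulty has been absorbed into Lemma~\ref{lem:X-max-bd}, whose proof handles the second-order Rademacher chaos via decoupling, the Frobenius/spectral norm estimates on the coefficient matrix $\mtx{A}$, and the chaos moment inequality. The coherence theorem is therefore best presented as an immediate corollary of that lemma; the only thing worth spelling out carefully is the identification of the off-diagonal Gram entries with $x_{\alpha\omega}$, so that the reader sees why controlling $\pnorm{\max}{\mtx{X}}$ controls $\mu$. I would write the proof in two or three sentences and not belabor it.
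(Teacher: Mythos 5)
Your argument is correct and is exactly how the paper treats this result: Theorem~\ref{thm:coherence} is presented there as an immediate consequence of Lemma~\ref{lem:X-max-bd}, via the observation that the off-diagonal Gram entries are the $x_{\alpha\omega}$, so $\mu \leq \pnorm{\max}{\mtx{X}}$. Nothing is missing; your two-to-three-sentence write-up matches the paper's intent.
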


\vspace{.5pc}

For a general $R \times W$ matrix with unit-norm columns, we can verify~\cite[Thm.~2.3]{SH03:Grassmannian-Frames} that its coherence
$$
\mu \geq \frac{1}{\sqrt{R}} \left[ 1 - \frac{R}{W} \right] \approx \frac{1}{\sqrt{R}}.
$$
Since the columns of the random demodulator are essentially normalized, we conclude that its coherence is nearly as small as possible.

\section{Recovery for the Random Phase Model} \label{app:random-phase}

In this appendix, we establish Theorem~\ref{thm:rpm}, which shows that $\ell_1$ minimization is likely to recover a random signal drawn from Model (A).  Appendix~\ref{app:rip} develops results for general signals.

The performance of $\ell_1$ minimization for random signals depends on several subtle properties of the demodulator matrix.  We must discuss these ideas in some detail.  In the next two sections, we use the bounds from Appendix~\ref{app:rand-demod-mtx} to check that the required conditions are in force.  Afterward, we proceed with the demonstration of Theorem~\ref{thm:rpm}.

\subsection{Cumulative Coherence}

The coherence measures only the inner product between a single pair of columns.  To develop accurate results on the performance of $\ell_1$ minimization, we need to understand the total correlation between a fixed column and a collection of distinct columns.

Let $\Fee$ be an $R \times W$ matrix, and let $\Omega$ be a subset of $\oneton{W}$.  The \term{local 2-cumulative coherence} of the set $\Omega$ with respect to the matrix $\Fee$ is defined as
$$
\mu_2(\Omega) = \max_{\alpha \notin \Omega}
   \left[ \sum\nolimits_{\omega \in \Omega}
   \abssqip{ \atom_\alpha }{ \atom_\omega} \right]^{1/2}.
$$
The coherence $\mu$ provides an immediate bound on the cumulative coherence:
\begin{equation*} \label{eqn:mu2-simple-bd}
\mu_2(\Omega) \leq \mu \sqrt{ \abs{\Omega} }.
\end{equation*}
Unfortunately, this bound is completely inadequate.

To develop a better estimate, we instate some additional notation.  Consider the matrix norm $\pnorm{1\to2}{\cdot}$, which returns the maximum $\ell_2$ norm of a column.  Define the \term{hollow Gram matrix}
\begin{equation*} \label{eqn:hollow-gram}
\mtx{G} = \Fee^\adj \Fee - \diag( \Fee^\adj \Fee ),
\end{equation*}
which tabulates the inner products between \emph{distinct} columns of $\Fee$.  Let $\mtx{R}_{\Omega}$ be the $W \times W$ orthogonal projector onto the coordinates listed in $\Omega$.  Elaborating on these definitions, we see that
\begin{equation} \label{eqn:mu2-bd}
\mu_2(\Omega)
   = \pnorm{1\to2}{ \mtx{R}_\Omega \mtx{G} (\Id - \mtx{R}_{\Omega}) }
   \leq \pnorm{1\to2}{ \mtx{R}_{\Omega} \mtx{G} }.
\end{equation}
In particular, the cumulative coherence $\mu_2(\Omega)$ is dominated by the maximum column norm of $\mtx{G}$.

When $\Omega$ is chosen at random, it turns out that we can use this upper bound to improve our estimate of the cumulative coherence $\mu_2(\Omega)$.  To incorporate this effect, we need the following result, which is a consequence of Theorem~3.2 of \cite{Tro08:Norms-Random} combined with a standard decoupling argument (e.g., see \cite[Lem.~14]{Tro08:Linear-Independence}).

\begin{prop} \label{prop:12-norm}
Fix a $W \times W$ matrix $\mtx{G}$.  Let $\mtx{R}$ be an orthogonal projector onto $K$ coordinates, chosen randomly from $\oneton{W}$.  For $p = 2\log W$,
$$
\Expect^p \pnorm{1\to2}{\mtx{RG}} \leq 8 \sqrt{\log W} \pnorm{\max}{\mtx{G}}
   + 2 \sqrt{\frac{K}{W}} \pnorm{1\to2}{\mtx{G}}.
$$
\end{prop}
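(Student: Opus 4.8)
The plan is to recognize $\pnorm{1\to2}{\mtx{RG}}$ as the largest column $\ell_2$-norm of the random row-submatrix $\mtx{RG}$ and then to reduce to the machinery of \cite{Tro08:Norms-Random}. First I would unwind the definitions: writing $T \subseteq \oneton{W}$ for the random $K$-element index set onto which $\mtx{R}$ projects, the $\omega$th column of $\mtx{RG}$ retains the entries $g_{j\omega}$ with $j \in T$, so
$$
\pnorm{1\to2}{\mtx{RG}}^2 = \max\nolimits_{\omega} \sum\nolimits_{j \in T} \abssq{g_{j\omega}}.
$$
Thus the object is a maximum over the $W$ columns of a sum of randomly selected squared entries. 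Since $\Prob{j \in T} = K/W$, each such sum has mean $\tfrac{K}{W}\sum_j \abssq{g_{j\omega}} \le \tfrac{K}{W}\pnorm{1\to2}{\mtx{G}}^2$; this deterministic part is the source of the second term $2\sqrt{K/W}\,\pnorm{1\to2}{\mtx{G}}$, while the first term $8\sqrt{\log W}\,\pnorm{\max}{\mtx{G}}$ must arise from the fluctuation of the random selection about its mean.

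The principal tool is Theorem~3.2 of \cite{Tro08:Norms-Random}, which bounds exactly this kind of $L_p$-moment for a coordinate restriction of a fixed matrix, but in the model where each coordinate is retained \emph{independently} with probability $\delta = K/W$. To match the hypotheses of that theorem—cleanest in the independent-selector model—I would invoke the standard decoupling/comparison argument recorded in \cite[Lem.~14]{Tro08:Linear-Independence} to relate the fixed-cardinality projector $\mtx{R}$ (sampling $K$ coordinates without replacement) to independent selectors. With the models aligned, the theorem delivers a bound of the form $\cnst{c}_1\sqrt{\log W}\,\pnorm{\max}{\mtx{G}} + \cnst{c}_2\sqrt{\delta}\,\pnorm{1\to2}{\mtx{G}}$; substituting $\delta = K/W$ and $p = 2\log W$ and tracking the explicit constants yields the stated inequality.

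If one wanted the key estimate directly rather than by citation, the heart of the argument is a symmetrization (in the spirit of Proposition~\ref{prop:symmetrization}): introduce independent selectors $\delta_j$, center the column sums about their means, and replace the centered selectors by a Rademacher-weighted version. Khintchine's inequality (Proposition~\ref{prop:khintchine}) applied columnwise controls the fluctuation by a quantity of order $\sqrt{p}\,\bigl(\sum_j \delta_j \abs{g_{j\omega}}^4\bigr)^{1/2} \le \sqrt{p}\,\pnorm{\max}{\mtx{G}}\,\bigl(\sum_j \delta_j \abssq{g_{j\omega}}\bigr)^{1/2}$, and the maximum over the $W$ columns is absorbed by the moment-of-maximum trick \eqref{eqn:moment-max} with $p = 2\log W$, so that $W^{1/p} = \sqrt{\econst}$ is harmless.

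The main obstacle is the \emph{self-referential} structure this produces: the fluctuation bound contains the square root of the very quantity $\max_\omega \sum_j \delta_j \abssq{g_{j\omega}}$ we are trying to control. Resolving it requires setting $Q = \Expect^p \pnorm{1\to2}{\mtx{RG}}$ and deriving a quadratic inequality of the form $Q^2 \lesssim (K/W)\pnorm{1\to2}{\mtx{G}}^2 + \sqrt{\log W}\,\pnorm{\max}{\mtx{G}}\,Q$; solving this quadratic gives $Q \lesssim \sqrt{\log W}\,\pnorm{\max}{\mtx{G}} + \sqrt{K/W}\,\pnorm{1\to2}{\mtx{G}}$, which is precisely the two-term form claimed. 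This self-bounding step—exactly the delicate content packaged inside Theorem~3.2, and where the otherwise routine moment computations must be managed carefully to obtain clean constants—is the crux; the decoupling from sampling-without-replacement to independent selectors is a secondary but necessary technicality.
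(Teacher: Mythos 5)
Your proposal is correct and follows essentially the same route as the paper, which offers no self-contained proof but simply derives the proposition from Theorem~3.2 of \cite{Tro08:Norms-Random} together with the decoupling argument of \cite[Lem.~14]{Tro08:Linear-Independence} for passing between the fixed-cardinality and independent-selector models. Your additional sketch of the internal mechanics of the cited theorem---symmetrization, columnwise Khintchine, the moment-of-maximum device with $p = 2\log W$, and the self-bounding quadratic in $Q$---accurately reflects how that result is established and correctly identifies the crux.
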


\vspace{.5pc}

With these facts at hand, we can establish the following bound.

\begin{thm}[Cumulative Coherence] \label{thm:cumulative-coherence}
Suppose the sampling rate
$$
R \geq \cnst{C} \left[ K \log W + \log^3 W \right].
$$
Draw an $R \times W$ random demodulator $\Fee$.  Let $\Omega$ be a random set of $K$ coordinates in $\oneton{W}$.  Then
$$
\Prob{ \mu_2(\Omega) \geq \frac{1}{\sqrt{16\log W}} } \leq 3W^{-1}.
$$
\end{thm}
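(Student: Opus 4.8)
The plan is to exploit the two independent sources of randomness separately. I would first condition on the chipping sequence, so that $\Fee$ and hence the hollow Gram matrix $\mtx{G}$ are fixed, and treat the random support $\Omega$ (equivalently, the random coordinate projector $\mtx{R}_\Omega$ onto $K$ coordinates) as the only remaining randomness. The reduction \eqref{eqn:mu2-bd} already gives $\mu_2(\Omega)\leq\pnorm{1\to2}{\mtx{R}_\Omega\mtx{G}}$, and Proposition~\ref{prop:12-norm} controls the $p$th moment of this quantity over the choice of $\Omega$ (with $p=2\log W$) in terms of two deterministic features of $\mtx{G}$: its largest entry $\pnorm{\max}{\mtx{G}}$ and its largest column norm $\pnorm{1\to2}{\mtx{G}}$. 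So the whole problem reduces to supplying good high-probability bounds on these two norms, substituting them into Proposition~\ref{prop:12-norm}, and finishing with Markov's inequality over $\Omega$.

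The first norm is immediate: since $\mtx{G}$ agrees with $\mtx{X}=\Fee^\adj\Fee-\Id$ off the diagonal and vanishes on it, we have $\pnorm{\max}{\mtx{G}}\leq\pnorm{\max}{\mtx{X}}$, so Lemma~\ref{lem:X-max-bd} gives $\pnorm{\max}{\mtx{G}}\leq\cnst{C}\sqrt{(\log W)/R}$ off an event of probability at most $W^{-1}$. The second norm is the crux. The naive estimate $\pnorm{1\to2}{\mtx{G}}\leq\sqrt{W}\,\pnorm{\max}{\mtx{G}}\lesssim\sqrt{(W\log W)/R}$ carries a spurious $\sqrt{\log W}$ that would push the sampling rate up to order $K\log^2 W$. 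To avoid this I would compute the column norm exactly, using the identity $\Fee\Fee^\adj=\mtx{H}\mtx{D}\mtx{F}\mtx{F}^\adj\mtx{D}^\adj\mtx{H}^\adj=\mtx{H}\mtx{H}^\adj=(W/R)\Id$, which holds because $\mtx{F}$ and $\mtx{D}$ are unitary and the rows of $\mtx{H}$ are orthogonal with $W/R$ ones each (cf.\ \eqref{eqn:spec-norm}). Consequently the $\omega$th column of $\Fee^\adj\Fee$ obeys
$$
\sum\nolimits_{\alpha}\abssqip{\atom_\alpha}{\atom_\omega}
 = \enormsq{\Fee^\adj\atom_\omega}
 = \frac{W}{R}\,\enormsq{\atom_\omega},
$$
so that, subtracting the diagonal term,
$$
\pnorm{1\to2}{\mtx{G}}^2
 = \max\nolimits_\omega\left[\frac{W}{R}\enormsq{\atom_\omega}-\enormsq{\atom_\omega}^2\right]
 \leq \frac{W}{R}\max\nolimits_\omega\enormsq{\atom_\omega}.
$$
Invoking the Column Norms bound (Theorem~\ref{thm:col-norms}) with $\delta=1$, which needs only $R\geq\cnst{C}\log W$, yields $\max_\omega\enormsq{\atom_\omega}\leq 2$ off a second event of probability at most $W^{-1}$, whence $\pnorm{1\to2}{\mtx{G}}\leq\sqrt{2W/R}$, the sharp $\sqrt{W/R}$ scaling with no extra logarithm.

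It remains to assemble the pieces. On the intersection of the two good events, Proposition~\ref{prop:12-norm} gives
$$
\Expect^p\pnorm{1\to2}{\mtx{R}_\Omega\mtx{G}}
 \leq 8\cnst{C}\,\frac{\log W}{\sqrt{R}} + 2\sqrt{2}\,\sqrt{\frac{K}{R}},
$$
the expectation being over $\Omega$. The two requirements on the sampling rate now appear transparently: the first term is $\lesssim(\log W)^{-1/2}$ exactly when $R\gtrsim\log^3 W$, and the second is $\lesssim(\log W)^{-1/2}$ exactly when $R\gtrsim K\log W$. Thus the hypothesis $R\geq\cnst{C}[K\log W+\log^3 W]$, with $\cnst{C}$ chosen large enough, forces $\Expect^p\pnorm{1\to2}{\mtx{R}_\Omega\mtx{G}}\leq(\econst^{0.5}\sqrt{16\log W})^{-1}$. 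Markov's inequality at the exponent $p=2\log W$ then gives $\pnorm{1\to2}{\mtx{R}_\Omega\mtx{G}}\leq(16\log W)^{-1/2}$ except with probability $\econst^{-0.5p}=W^{-1}$, and a union bound over the three failure events—one for $\pnorm{\max}{\mtx{G}}$, one for the column norms, and one for the Markov step over $\Omega$—delivers the advertised $3W^{-1}$.

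The step I expect to be the main obstacle is precisely the sharp control of $\pnorm{1\to2}{\mtx{G}}$: the temptation is to bound it entrywise, but the gain comes from recognizing that $\Fee\Fee^\adj$ is a scalar multiple of the identity, which collapses each column norm of $\mtx{G}$ into the already-controlled quantity $\enormsq{\atom_\omega}$ and removes the parasitic logarithm. Once that observation is in hand, the remainder is routine bookkeeping with the moment and Markov machinery established earlier in this appendix.
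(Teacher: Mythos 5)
Your proposal is correct and follows essentially the same route as the paper's proof: reduce to $\pnorm{1\to2}{\mtx{R}_\Omega\mtx{G}}$ via \eqref{eqn:mu2-bd}, feed a max-entry (coherence) bound and a column-norm bound on $\mtx{G}$ into Proposition~\ref{prop:12-norm} at $p=2\log W$, and finish with Markov's inequality and a union bound over three failure events. The only (immaterial) difference is that you compute $\pnorm{1\to2}{\mtx{G}}$ exactly from $\Fee\Fee^\adj=(W/R)\Id$, whereas the paper obtains the same $\bigO(\sqrt{W/R})$ scaling via the submultiplicative estimate $\pnorm{1\to2}{\Fee^\adj\Fee}\leq\norm{\Fee}\cdot\max_\omega\enorm{\atom_\omega}$ together with \eqref{eqn:spec-norm} and Theorem~\ref{thm:col-norms}.
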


\vspace{.5pc}

\begin{proof}
Under our hypothesis on the sampling rate, Theorem~\ref{thm:coherence} demonstrates that, except with probability $W^{-1}$,
$$
\pnorm{\max}{\mtx{G}} \leq \frac{\cnst{c}}{\log W},
$$
where we can make $\cnst{c}$ as small as we like by increasing the constant in the sampling rate.  Similarly, Theorem~\ref{thm:col-norms} ensures that
$$
\max_\omega \enorm{\atom_\omega} \leq 2,
$$
except with probability $W^{-1}$.  We condition on the event $F$ that these two bounds hold.  We have $\Prob{F^c} \leq 2W^{-1}$.

On account of the latter inequality and the fact \eqref{eqn:spec-norm} that $\norm{\Fee} = \sqrt{W/R}$, we obtain
\begin{multline*}
\pnorm{1\to2}{\mtx{G}} \leq \pnorm{1\to2}{\Fee^\adj \Fee}
   = \max_\omega \enorm{ \Fee^\adj \Fee \onevct_\omega } \\
   = \max_\omega \enorm{ \Fee^\adj \atom_\omega }
   \leq \norm{\Fee} \cdot \max_{\omega} \enorm{\atom_\omega}
   \leq 2\sqrt{W/R}.
\end{multline*}
We have written $\onevct_\omega$ for the $\omega$th standard basis vector in $\Cspace{W}$.

Let $\mtx{R}_{\Omega}$ be the (random) projector onto the $K$ coordinates listed in $\Omega$.  For $p = 2\log W$, relation~\eqref{eqn:mu2-bd} and Proposition~\ref{prop:12-norm} yield
\begin{align*}
\Expect^p [ \mu_2( \Omega ) \, |\, F ]
   &\leq \Expect^p [ \pnorm{1\to2}{\mtx{R}_{\Omega} \mtx{G}} \, | \, F] \\
   &\leq 8\sqrt{\log W} \cdot \frac{\cnst{c}}{\log W}
       + 2\sqrt{\frac{K}{W}} \cdot 2\sqrt{\frac{W}{R}} \\
   &\leq \frac{\cnst{c}'}{\sqrt{\log W}}
\end{align*}
under our assumption on the sampling rate.  Finally, we apply Markov's inequality to obtain
$$
\Prob{ \mu_2( \Omega ) > \frac{\cnst{c}'\econst^{0.5}}{\sqrt{\log W}} \, \big| \, F }
   \leq \econst^{-0.5 p} = W^{-1}.
$$
By selecting the constant in the sampling rate sufficiently large, we can ensure $\cnst{c}'$ is sufficiently small that
$$
\Prob{ \mu_2(\Omega) > \frac{1}{\sqrt{16\log W}} \, | \, F } \leq W^{-1}.
$$
We reach the conclusion
$$
\Prob{ \mu_2(\Omega) > \frac{1}{\sqrt{16\log W}} }
   \leq W^{-1} + \Prob{F^c} \leq 3W^{-1}
$$
when we remove the conditioning.
\end{proof}

\subsection{Conditioning of a Random Submatrix}

We also require information about the conditioning of a random set of columns drawn from the random demodulator matrix $\Fee$.  To obtain this intelligence, we refer to the following result, which is a consequence of~\cite[Thm.~1]{Tro08:Norms-Random} and a decoupling argument.

\begin{prop} \label{prop:rdm-spec-norm}
Let $\mtx{A}$ be a $W \times W$ Hermitian matrix, split into its diagonal and off-diagonal parts: $\mtx{A} = \mtx{E} + \mtx{G}$.  Draw an orthogonal projector $\mtx{R}$ onto $K$ coordinates, chosen randomly from $\oneton{W}$. For $p = 2\log W$,
\begin{multline*}
\Expect^p \norm{\mtx{RAR}}
   \leq  \cnst{C} \bigg[ \log W \pnorm{\max}{\mtx{A}} \\
       + \sqrt{\frac{K\log W}{W}} \pnorm{1\to2}{\mtx{A}}
       + \frac{K}{W} \norm{ \mtx{A} } \bigg] + \norm{\mtx{E}}.
\end{multline*}
\end{prop}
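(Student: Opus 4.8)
The plan is to reduce the two-sided random restriction $\mtx{R}\mtx{A}\mtx{R}$ to a one-sided random restriction, for which the cited theorem of \cite{Tro08:Norms-Random} applies directly, exactly paralleling the derivation of Proposition~\ref{prop:12-norm}. The first step is to peel off the diagonal. Write $\mtx{R}\mtx{A}\mtx{R} = \mtx{R}\mtx{E}\mtx{R} + \mtx{R}\mtx{G}\mtx{R}$. Because $\mtx{E}$ is diagonal and $\mtx{R}$ is a coordinate projector, $\mtx{R}\mtx{E}\mtx{R}$ is a principal submatrix of a diagonal matrix, so $\norm{\mtx{R}\mtx{E}\mtx{R}} \leq \norm{\mtx{E}}$ holds deterministically. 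Since $\Expect^p(\cdot)$ obeys the triangle inequality and $\norm{\mtx{E}}$ is constant, this accounts for the additive term $\norm{\mtx{E}}$ and leaves the task of bounding $\Expect^p \norm{\mtx{R}\mtx{G}\mtx{R}}$, where $\mtx{G}$ is hollow.

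The second step is to decouple the two appearances of $\mtx{R}$. Writing $\mtx{R} = \mtx{R}_\Omega$ through the selector variables $\delta_j = \mathbf{1}\{j \in \Omega\}$, the entries of $\mtx{R}\mtx{G}\mtx{R}$ are $\delta_j \delta_k g_{jk}$. The crucial point is that hollowness, $g_{jj} = 0$, annihilates every diagonal term $\delta_j^2 g_{jj}$, so the object is a genuine off-diagonal chaos in the selectors. A standard decoupling inequality---the same device used for Proposition~\ref{prop:12-norm}; see \cite[Lem.~14]{Tro08:Linear-Independence} and \cite[Prop.~1.9]{BT87:Invertibility-Large}---then yields $\Expect^p \norm{\mtx{R}\mtx{G}\mtx{R}} \leq \cnst{C}\,\Expect^p \norm{\mtx{R}\mtx{G}\mtx{R}'}$, where $\mtx{R}'$ projects onto an independent random set of $K$ coordinates.

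The third step is to invoke \cite[Thm.~1]{Tro08:Norms-Random} on each side. Conditioning on $\mtx{R}'$ and applying the theorem to the (row) restriction of the fixed matrix $\mtx{G}\mtx{R}'$ bounds $\Expect^p_{\mtx{R}} \norm{\mtx{R}(\mtx{G}\mtx{R}')}$ by a combination of $\log W\,\pnorm{\max}{\mtx{G}\mtx{R}'}$, $\sqrt{(K\log W)/W}\,\pnorm{1\to2}{\mtx{G}\mtx{R}'}$, and $\sqrt{K/W}\,\norm{\mtx{G}\mtx{R}'}$, where the Hermitian symmetry of $\mtx{A}$ (hence of $\mtx{G}$) identifies the max-row and max-column $\ell_2$ norms so that the same $\pnorm{1\to2}{\cdot}$ appears. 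The first two intermediate quantities pass directly by monotonicity under column deletion, $\pnorm{\max}{\mtx{G}\mtx{R}'} \leq \pnorm{\max}{\mtx{G}}$ and $\pnorm{1\to2}{\mtx{G}\mtx{R}'} \leq \pnorm{1\to2}{\mtx{G}}$, while only the spectral-norm term requires a second application of the theorem: taking $\Expect_{\mtx{R}'}$ of $\norm{\mtx{G}\mtx{R}'}$ and bounding it by $\sqrt{K/W}\norm{\mtx{G}}$ plus lower-order terms, the product of the two $\sqrt{K/W}$ factors produces exactly $(K/W)\norm{\mtx{G}}$. Throughout I take $p = 2\log W$, so that the $W^{1/p}$ factors generated by the moment-of-a-maximum inequality~\eqref{eqn:moment-max} collapse into universal constants. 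Finally I replace the $\mtx{G}$-quantities by $\mtx{A}$-quantities via $\pnorm{\max}{\mtx{G}} \leq \pnorm{\max}{\mtx{A}}$, $\pnorm{1\to2}{\mtx{G}} \leq \pnorm{1\to2}{\mtx{A}}$, and $\norm{\mtx{G}} \leq \norm{\mtx{A}} + \pnorm{\max}{\mtx{A}} \leq 2\norm{\mtx{A}}$, absorbing all numerical factors into $\cnst{C}$.

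I expect the main obstacle to be the two-sided bookkeeping in the third step: one must apply the one-sided theorem on both sides and track how the intermediate norms---chiefly $\norm{\mtx{G}\mtx{R}'}$---contract under the second restriction, so that the scalings compose into precisely $\log W\,\pnorm{\max}{\mtx{A}}$, $\sqrt{(K\log W)/W}\,\pnorm{1\to2}{\mtx{A}}$, and $(K/W)\norm{\mtx{A}}$ rather than a cruder mixture, while verifying that the extra lower-order terms from the second application are dominated by the terms already present. The observation that legitimizes the entire reduction is the hollowness of $\mtx{G}$, without which the decoupling step would be unavailable.
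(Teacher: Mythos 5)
Your proposal is correct and follows essentially the route the paper itself indicates: the paper gives no proof of Proposition~\ref{prop:rdm-spec-norm}, importing it directly as ``a consequence of \cite[Thm.~1]{Tro08:Norms-Random} and a decoupling argument,'' and your reconstruction---peeling off the diagonal to produce the additive $\norm{\mtx{E}}$, decoupling the hollow part via the selector-variable chaos, and composing the one-sided submatrix-norm bounds of \cite{Tro08:Norms-Random} so that the scalings multiply into $\log W \pnorm{\max}{\mtx{A}}$, $\sqrt{K\log W/W}\,\pnorm{1\to2}{\mtx{A}}$, and $(K/W)\norm{\mtx{A}}$---is exactly the derivation being cited. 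The only caveat is that your argument rebuilds the two-sided hollow bound from the one-sided results, whereas the cited theorem already supplies it in two-sided form, so in the paper's accounting only the diagonal split and the replacement of $\mtx{G}$-norms by $\mtx{A}$-norms remain to be done.
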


\vspace{.5pc}

Suppose that $\Omega$ is a subset of $\oneton{W}$.  Recall that $\Fee_{\Omega}$ is the column submatrix of $\Fee$ indexed by $\Omega$.  We have the following result.

\begin{thm}[Conditioning of a Random Submatrix] \label{thm:cond-rdm}
Suppose the sampling rate
$$
R \geq \cnst{C} \left[ K \log W + \log^3 W \right].
$$
Draw an $R \times W$ random demodulator, and let $\Omega$ be a random subset of $\oneton{W}$ with cardinality $K$.  Then
$$
\Prob{ \norm{ \Fee_\Omega^\adj \Fee_\Omega - \Id } \geq 0.5 } \leq 3W^{-1}.
$$
\end{thm}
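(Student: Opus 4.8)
The plan is to split the centered Gram matrix of the submatrix into a diagonal part, recording how far the column norms stray from unity, and a hollow off-diagonal part, and to control the two pieces separately. Write $\Fee^\adj\Fee = \Id + \mtx{E} + \mtx{G}$, where $\mtx{E} = \diag(\Fee^\adj\Fee) - \Id$ carries the diagonal deviations and $\mtx{G}$ is the hollow Gram matrix. Restricting to the coordinates in $\Omega$ and writing $\mtx{R}_\Omega$ for the coordinate projector, the triangle inequality gives
$$
\norm{\Fee_\Omega^\adj\Fee_\Omega - \Id} \leq \norm{\mtx{R}_\Omega\mtx{E}\mtx{R}_\Omega} + \norm{\mtx{R}_\Omega\mtx{G}\mtx{R}_\Omega}.
$$
The first term is at most $\max_\omega\abs{\enormsq{\atom_\omega}-1}$, which Theorem~\ref{thm:col-norms} forces below $0.25$ as soon as $R \geq \cnst{C}\log W$. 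All the real work is in the off-diagonal term.

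For that term I would condition on a favorable event $F$ for the chipping randomness and then average over the random set $\Omega$, exploiting that $\mtx{G}$ depends only on $\Fee$ while $\mtx{R}_\Omega$ depends only on $\Omega$, and that the two sources of randomness are independent. On $F$ I want three deterministic bounds on $\mtx{G}$: the coherence bound $\pnorm{\max}{\mtx{G}} = \mu \leq \cnst{C}\sqrt{(\log W)/R}$ from Theorem~\ref{thm:coherence}; the column-norm bound $\max_\omega\enorm{\atom_\omega} \leq 2$ from Theorem~\ref{thm:col-norms}, which (exactly as in the proof of Theorem~\ref{thm:cumulative-coherence}) yields $\pnorm{1\to2}{\mtx{G}} \leq 2\sqrt{W/R}$; and the spectral bound $\norm{\mtx{G}} \leq \norm{\Fee^\adj\Fee} + \norm{\diag(\Fee^\adj\Fee)} \leq W/R + 2 \leq \cnst{C}W/R$, using $\norm{\Fee} = \sqrt{W/R}$ from \eqref{eqn:spec-norm}. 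These two theorems fail with probability at most $2W^{-1}$, so $\Prob{F^c}\leq 2W^{-1}$.

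Conditioned on $F$, I would apply Proposition~\ref{prop:rdm-spec-norm} with $\mtx{A} = \mtx{G}$ (so that the diagonal part there vanishes) and $p = 2\log W$, and substitute the three bounds from $F$ to obtain
$$
\Expect^p[\,\norm{\mtx{R}_\Omega\mtx{G}\mtx{R}_\Omega}\mid F\,] \leq \cnst{C}\left[\frac{(\log W)^{3/2}}{\sqrt{R}} + \sqrt{\frac{K\log W}{R}} + \frac{K}{R}\right].
$$
The split sampling rate is calibrated precisely to quench each term: the summand $R \geq \cnst{C}\log^3 W$ controls the first term, while $R \geq \cnst{C}K\log W$ controls the second and third. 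Taking the constant in the sampling rate large enough drives this conditional moment below any prescribed threshold, and Markov's inequality at $p = 2\log W$ then gives $\Prob{\norm{\mtx{R}_\Omega\mtx{G}\mtx{R}_\Omega} > 0.25\mid F} \leq \econst^{-0.5p} = W^{-1}$. Combining with the diagonal estimate and removing the conditioning yields $\norm{\Fee_\Omega^\adj\Fee_\Omega - \Id} \leq 0.5$ except with probability at most $\Prob{F^c} + W^{-1} \leq 3W^{-1}$.

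The main obstacle is bookkeeping rather than conceptual: one must check that the three heterogeneous terms produced by Proposition~\ref{prop:rdm-spec-norm}---assembled from the coherence, the maximum column norm, and the spectral norm of $\mtx{G}$---each fall below a fixed constant under the two-part sampling rate, and in particular that the coherence term genuinely forces the $\log^3 W$ summand while the remaining two are absorbed by the $K\log W$ summand. Everything else follows the by-now-standard template of conditioning on the chipping sequence, estimating a $(2\log W)$-th moment over the random support, and converting to a tail bound through Markov's inequality.
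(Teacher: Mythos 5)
Your proposal is correct and follows essentially the same route as the paper: the same split of $\Fee^\adj\Fee - \Id$ into diagonal and hollow parts, the same event $F$ built from Theorems~\ref{thm:col-norms} and~\ref{thm:coherence}, the same application of Proposition~\ref{prop:rdm-spec-norm} at $p = 2\log W$ followed by Markov's inequality, and the same accounting of which summand of the sampling rate quenches which term. The only cosmetic difference is that you apply Proposition~\ref{prop:rdm-spec-norm} to the hollow matrix $\mtx{G}$ alone after peeling off the diagonal by the triangle inequality, whereas the paper feeds the full matrix $\mtx{A} = \mtx{E} + \mtx{G}$ into the proposition and lets its $\norm{\mtx{E}}$ term absorb the diagonal; the two are interchangeable.
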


\vspace{.5pc}

\begin{proof}
Define the quantity of interest
$$
Q = \norm{ \Fee_\Omega^\adj \Fee_\Omega - \Id }.
$$
Let $\mtx{R}_{\Omega}$ be the random orthogonal projector onto the coordinates listed in $\Omega$, and observe that
$$
Q = \norm{ \mtx{R}_{\Omega} ( \Fee^\adj \Fee - \Id ) \mtx{R}_\Omega }.
$$
Define the matrix $\mtx{A} = \Fee^\adj \Fee - \Id$.  Let us perform some background investigations so that we are fully prepared to apply Proposition~\ref{prop:rdm-spec-norm}.

We split the matrix
$$
\mtx{A} = \mtx{E} + \mtx{G},
$$
where $\mtx{E} = \diag(\Fee^\adj \Fee) - \Id$ and $\mtx{G} = \Fee^\adj \Fee - \diag(\Fee^\adj \Fee)$ is the hollow Gram matrix.  Under our hypothesis on the sampling rate, Theorem~\ref{thm:col-norms} provides that
$$
\norm{\mtx{E}} = \max_{\omega} \abs{\enormsq{\atom_\omega} - 1}
   \leq 0.15,
$$
except with probability $W^{-1}$.  It also follows that
$$
\norm{\diag(\Fee^\adj \Fee)} = \max_\omega \enormsq{\atom_\omega} \leq 1.15.
$$
We can bound $\pnorm{1\to2}{\mtx{G}}$ by repeating our earlier calculation and introducing the identity \eqref{eqn:spec-norm} that $\norm{\Fee} = \sqrt{W/R}$.  Thus,
$$
\pnorm{1\to2}{\mtx{G}} \leq \norm{\Fee} \max_\omega \enorm{\atom_\omega}
   \leq \sqrt{\frac{1.15 W}{R}}.
$$
Since $W/R \geq 1$, it also holds that
$$
\norm{\mtx{G}}
   = \norm{\Fee^\adj \Fee - \diag(\Fee^\adj \Fee)}
   \leq \normsq{\Fee} + 1.15
   \leq \frac{2.15W}{R}.
$$
Meanwhile, Theorem~\ref{thm:coherence} ensures that
$$
\pnorm{\max}{ \mtx{G} } \leq \frac{\cnst{c}}{\log W},
$$
except with probability $W^{-1}$.  As before, we can make $\cnst{c}$ as small as we like by increasing the constant in the sampling rate.  We condition on the event $F$ that these estimates are in force.  So $\Prob{F^c} \leq 2W^{-1}$.

Now, invoke Proposition~\ref{prop:rdm-spec-norm} to obtain
\begin{multline*}
\Expect^p [ Q \, | \, F ]
   \leq \cnst{C} \bigg[ \log W \cdot \frac{\cnst{c}}{\log W} \\
       + \sqrt{\frac{K\log W}{W}} \cdot \sqrt{\frac{1.15W}{R}}
       + \frac{K}{W} \cdot \frac{1.15W}{R} \bigg]
       + 0.15
\end{multline*}
for $p = 2\log W$.  Simplifying this bound, we reach
$$
\Expect^p [ Q \, | \, F ]
   \leq \cnst{C} \left[ \cnst{c}
       + \sqrt{\frac{\cnst{C}' K \log W}{R}} \right] + 0.15.
$$
By choosing the constant in the sampling rate $R$ sufficiently large, we can guarantee that
$$
\Expect^p [ Q \, | \, F] \leq 0.3.
$$
Markov's inequality now provides
$$
\Prob{ Q \geq 0.3\econst^{0.5} \, | \, F }
   \leq \econst^{-0.5p}
   = W^{-1}.
$$
Note that $0.3\econst^{0.5} < 0.5$ to conclude that
$$
\Prob{ Q \geq 0.5 } \leq W^{-1} + \Prob{ F^c } \leq 3W^{-1}.
$$
This is the advertised conclusion.
\end{proof}

\subsection{Recovery of Random Signals}

The literature contains powerful results on the performance of $\ell_1$ minimization for recovery of random signals that are sparse with respect to an incoherent dictionary.  We have finally acquired the keys we need to start this machinery.  Our major result for random signals, Theorem~\ref{thm:rpm}, is a consequence of the following result, which is an adaptation of~\cite[Thm.~14]{Tro08:Conditioning-Random}.

\begin{prop} \label{prop:l1-rdm-recover}
Let $\Fee$ be an $R \times W$ matrix, and let $\Omega$ be a subset of $\oneton{W}$ for which
\begin{itemize}
\item   $\mu_2(\Omega) \leq (16\log W)^{-1/2}$, and
\item   $\norm{ (\Fee_{\Omega}^\adj \Fee_\Omega)^{-1} } \leq 2$.
\end{itemize}
Suppose that $\vct{s}$ is a vector that satisfies
\begin{itemize}
\item   $\supp{\vct{s}} \subset \Omega$, and
\item   $\sgn(\vct{s}_\omega)$ are i.i.d.~uniform on the complex unit circle.
\end{itemize}
Then $\vct{s}$ is the unique solution to the optimization problem
\begin{equation} \label{eqn:p1-app}
\min \ \pnorm{1}{\vct{v}}
\quad\subjto\quad
\Fee \vct{v} = \Fee \vct{s}
\end{equation}
except with probability $2W^{-1}$.
\end{prop}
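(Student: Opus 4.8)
The plan is to prove optimality of $\vct{s}$ by exhibiting an explicit \emph{dual certificate} and then exploiting the random phases to certify its feasibility. Recall the standard characterization underlying the cited result~\cite{Tro08:Conditioning-Random}: if $\Fee_\Omega$ has full column rank and there is a vector $\vct{w} \in \Cspace{R}$ with $(\Fee^\adj \vct{w})_\omega = \sgn(s_\omega)$ for every $\omega \in \Omega$ and $\absip{\atom_\alpha}{\vct{w}} < 1$ for every $\alpha \notin \Omega$, then $\vct{s}$ is the \emph{unique} solution of \eqref{eqn:p1-app}. The hypothesis $\norm{(\Fee_\Omega^\adj \Fee_\Omega)^{-1}} \leq 2$ certifies that $\Fee_\Omega^\adj \Fee_\Omega$ is invertible, hence $\Fee_\Omega$ has full column rank, so we are free to build such a $\vct{w}$.

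First I would let $\vct{z}$ denote the vector of signs $\sgn(s_\omega)$ for $\omega \in \Omega$ and set $\vct{w} = \Fee_\Omega (\Fee_\Omega^\adj \Fee_\Omega)^{-1} \vct{z}$, which lies in the range of $\Fee_\Omega$. By construction $\Fee_\Omega^\adj \vct{w} = \vct{z}$, so the on-support condition holds \emph{exactly}, with no appeal to randomness. It then remains only to verify the strict inequality off the support. For each $\alpha \notin \Omega$, writing $\vct{c}_\alpha = (\Fee_\Omega^\adj \Fee_\Omega)^{-1} \Fee_\Omega^\adj \atom_\alpha$ and using that $(\Fee_\Omega^\adj \Fee_\Omega)^{-1}$ is Hermitian, we have $\absip{\atom_\alpha}{\vct{w}} = \absip{\vct{c}_\alpha}{\vct{z}}$, so the whole problem reduces to controlling the random sum $\ip{\vct{c}_\alpha}{\vct{z}}$ for all off-support indices simultaneously.

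The deterministic heart of the argument is a uniform $\ell_2$ bound on the coefficient vectors $\vct{c}_\alpha$, which uses both hypotheses together with the definition of the cumulative coherence. Since $\enorm{\Fee_\Omega^\adj \atom_\alpha} = \bigl[\sum_{\omega \in \Omega} \abssqip{\atom_\alpha}{\atom_\omega}\bigr]^{1/2} \leq \mu_2(\Omega)$ whenever $\alpha \notin \Omega$, I would argue
\[
\enorm{\vct{c}_\alpha}
   \leq \norm{(\Fee_\Omega^\adj \Fee_\Omega)^{-1}} \cdot \enorm{\Fee_\Omega^\adj \atom_\alpha}
   \leq 2\,\mu_2(\Omega)
   \leq \frac{2}{\sqrt{16\log W}}
   = \frac{1}{2\sqrt{\log W}},
\]
so that $\enormsq{\vct{c}_\alpha} \leq (4\log W)^{-1}$ uniformly in $\alpha$. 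This is precisely where the threshold $(16\log W)^{-1/2}$ on $\mu_2(\Omega)$ earns its keep.

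Finally I would invoke the randomness. Because the entries of $\vct{z}$ are i.i.d.\ uniform on the complex unit circle, $\ip{\vct{c}_\alpha}{\vct{z}}$ is a sum of independent, mean-zero, bounded complex random variables with second moment $\enormsq{\vct{c}_\alpha}$. A Hoeffding-type tail bound for random phases (equivalently, a Khintchine-type moment estimate followed by Markov's inequality with $p \asymp \log W$) gives a subgaussian bound of the form $\Prob{\absip{\vct{c}_\alpha}{\vct{z}} \geq 1} \leq 2\exp\!\bigl(-\tfrac{1}{2}\enormsq{\vct{c}_\alpha}^{-1}\bigr) \leq 2W^{-2}$, and a union bound over the fewer than $W$ indices $\alpha \notin \Omega$ yields total failure probability at most $2W^{-1}$, matching the claim. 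I expect the delicate point to be exactly this last calibration: the coherence threshold $(16\log W)^{-1/2}$ and the conditioning bound $2$ must combine to push $\enormsq{\vct{c}_\alpha}$ just far enough below $(4\log W)^{-1}$ that each column fails with probability $O(W^{-2})$, which is what the union bound over $\Theta(W)$ columns demands. Getting the constant in the exponent of the complex concentration inequality sharp enough (so that the exponent reads $2\log W$ rather than something smaller) is the step I would state with the greatest care, handling the real and imaginary parts, which share the common phases $z_\omega$.
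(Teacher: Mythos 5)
Your proposal is correct, but note that the paper itself contains \emph{no proof} of this proposition: it is stated as an adaptation of \cite[Thm.~14]{Tro08:Conditioning-Random} and used as a black box. What you have written is, in essence, the argument that underlies that cited theorem: the Fuchs--Tropp dual-certificate sufficient condition for unique $\ell_1$ recovery, the least-squares certificate $\vct{w} = \Fee_\Omega (\Fee_\Omega^\adj\Fee_\Omega)^{-1}\sgn{(\vct{s}_\Omega)}$ (which satisfies the on-support equality identically), the reduction of the off-support condition to the random sums $\ip{\vct{c}_\alpha}{\vct{z}}$, the uniform bound $\enorm{\vct{c}_\alpha} \leq \norm{(\Fee_\Omega^\adj\Fee_\Omega)^{-1}}\,\mu_2(\Omega) \leq (2\sqrt{\log W})^{-1}$, and a Hoeffding-type inequality for Steinhaus sequences followed by a union bound over the at most $W$ off-support columns. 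So your route is not genuinely different from the paper's; it supplies the argument that the paper outsources to the citation, and each step is sound. The one delicate point is the one you already flag: the subgaussian constant for sums $\sum_\omega c_\omega z_\omega$ with $z_\omega$ uniform on the complex circle must be strong enough that $\enormsq{\vct{c}_\alpha} \leq (4\log W)^{-1}$ drives each column's failure probability down to $\bigO(W^{-2})$. The standard Steinhaus tail bound $\Prob{\smabsip{\vct{c}}{\vct{z}} \geq u\enorm{\vct{c}}} \leq 2\econst^{-u^2/2}$ achieves this with $u = 2\sqrt{\log W}$, and your instinct to treat the real and imaginary parts jointly (rather than by a crude union bound that would degrade the exponent) is exactly right; if one only had the weaker exponent $u^2/4$, the conclusion would still hold after adjusting the numerical threshold on $\mu_2(\Omega)$, so nothing essential is at stake.
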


Our main result, Theorem~\ref{thm:rpm}, is a corollary of this result.

\begin{cor}
Suppose that the sampling rate satisfies
\begin{equation*}
\label{eqn:Rrpm}
R \geq \cnst{C} \left[ K \log W + \log^3 W \right].
\end{equation*}
Draw an $R\times W$ random demodulator $\mtx{\Fee}$.  Let $\vct{s}$ be a random amplitude vector drawn according to Model (A).  Then the solution $\widehat{\vct{s}}$ to the convex program \eqref{eqn:p1-app} equals $\vct{s}$ except with probability $8W^{-1}$.
\end{cor}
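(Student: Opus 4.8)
The plan is to derive the Corollary as a direct application of Proposition~\ref{prop:l1-rdm-recover}, using Theorem~\ref{thm:cumulative-coherence} and Theorem~\ref{thm:cond-rdm} to certify the two deterministic hypotheses of that proposition, and then a union bound to combine the several sources of randomness. The essential observation is that the stochastic structure of Model~(A) is tailored exactly to the proposition: the support $\Omega$ is a uniformly random $K$-subset of $\oneton{W}$, the vector $\vct{s}$ is supported on $\Omega$, and the phases $\sgn(s_\omega)$ are i.i.d.\ uniform on the complex unit circle. Thus, once the two matrix conditions are shown to hold with high probability, recovery of $\vct{s}$ by the program~\eqref{eqn:p1-app} follows immediately.

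First I would verify the hypotheses. Under the stated sampling rate $R \geq \cnst{C}[K\log W + \log^3 W]$, Theorem~\ref{thm:cumulative-coherence} yields $\mu_2(\Omega) \leq (16\log W)^{-1/2}$ except with probability $3W^{-1}$, which is precisely the first hypothesis. For the second, Theorem~\ref{thm:cond-rdm} gives $\norm{\Fee_\Omega^\adj \Fee_\Omega - \Id} \leq 0.5$ except with probability $3W^{-1}$. Writing $\Fee_\Omega^\adj \Fee_\Omega = \Id + \mtx{E}$ with $\norm{\mtx{E}} \leq 0.5$, the smallest eigenvalue of $\Fee_\Omega^\adj \Fee_\Omega$ is at least $1 - 0.5 = 0.5$, so this matrix is invertible and $\norm{ (\Fee_\Omega^\adj \Fee_\Omega)^{-1} } \leq 2$, which is the second hypothesis.

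Next I would assemble the probabilities. Let $F$ denote the event, governed by the joint randomness of the chipping sequence (hence $\Fee$) and the support $\Omega$, on which both matrix bounds above hold; a union bound gives $\Prob{F^c} \leq 3W^{-1} + 3W^{-1} = 6W^{-1}$. On $F$, both hypotheses of Proposition~\ref{prop:l1-rdm-recover} are in force, so that proposition guarantees $\vct{s}$ is the unique minimizer of~\eqref{eqn:p1-app} except with probability $2W^{-1}$ taken over the phases. Splitting on $F$, the total failure probability is at most $\Prob{\text{failure}\mid F} + \Prob{F^c} \leq 2W^{-1} + 6W^{-1} = 8W^{-1}$, matching the advertised bound.

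The one point requiring care --- and the main obstacle to making this rigorous --- is the interaction between the conditioning and the phase randomness. Proposition~\ref{prop:l1-rdm-recover} requires the signs $\sgn(s_\omega)$ to be i.i.d.\ uniform \emph{given} $\Fee$ and $\Omega$. Because the defining event $F$ depends only on $\Fee$ and $\Omega$, while Model~(A) draws the phases independently of both, conditioning on $F$ does not disturb the distribution of the phases: they remain i.i.d.\ uniform on the unit circle. This is what legitimizes invoking the proposition inside the event $F$ and lets the failure probabilities add cleanly to $8W^{-1}$.
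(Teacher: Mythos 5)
Your proposal is correct and follows essentially the same route as the paper: certify the two hypotheses of Proposition~\ref{prop:l1-rdm-recover} via Theorem~\ref{thm:cumulative-coherence} and Theorem~\ref{thm:cond-rdm} (each failing with probability at most $3W^{-1}$), invoke the proposition (failure $2W^{-1}$), and union-bound to $8W^{-1}$. Your closing remark about the phase randomness being independent of the event defined by $\Fee$ and $\Omega$ is a point the paper leaves implicit, but it does not change the argument.
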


\begin{proof}
Since the signal $\vct{s}$ is drawn from Model (A), its support $\Omega$ is a uniformly random set of $K$ components from $\oneton{W}$.  Theorem~\ref{thm:cumulative-coherence} ensures that
$$
\mu_2(\Omega) \leq \frac{1}{\sqrt{16\log W}},
$$
except with probability $3W^{-1}$.  Likewise, Theorem~\ref{thm:cond-rdm} guarantees
$$
\norm{ \Fee_\Omega^\adj \Fee_\Omega - \Id } \leq 0.5,
$$
except with probability $3W^{-1}$.  This bound implies that all the eigenvalues of $\Fee_\Omega^\adj \Fee_\Omega$ lie in the range $[0.5, 1.5]$.  In particular,
$$
\norm{ (\Fee_\Omega^\adj \Fee_\Omega)^{-1} } \leq 2.
$$
Meanwhile, Model (A) provides that the phases of the nonzero entries of $\vct{s}$ are uniformly distributed on the complex unit circle.  We invoke Proposition~\ref{prop:l1-rdm-recover} to see that the optimization problem \eqref{eqn:p1-app} recovers the amplitude vector $\vct{s}$ from the observations $\Fee\vct{s}$, except with probability $2W^{-1}$.  The total failure probability is $8W^{-1}$.
\end{proof}

\section{Stable Recovery}
\label{app:rip}

In this appendix, we establish much stronger recovery results under a slightly stronger assumption on the sampling rate.  This work is based on the \term{restricted isometry property} (RIP), which enjoys the privilege of a detailed theory.

\subsection{Background}

The RIP is a formalization of the statement that the sampling matrix preserves the norm of all sparse vectors up to a small constant factor.  Geometrically, this property ensures that the sampling matrix embeds the set of sparse vectors in a high-dimensional space into the lower-dimensional space of samples.  Consequently, it becomes possible to recover sparse vectors from a small number of samples.  Moreover, as we will see, this property also allows us to approximate compressible vectors, given relatively few samples.

Let us shift to rigorous discussion.  We say that an $R \times W$
matrix $\Fee$ has the RIP of order $N$ with restricted isometry
constant $\delta_N \in (0,1)$ when the following inequalities are in
force:
\begin{equation} \label{eqn:rip-1}
\abs{ \frac{\enormsq{ \Fee\vct{x} } - \enormsq{\vct{x}}}{\enormsq{\vct{x}}} }
   \leq \delta_N
\quad\text{whenever}\quad
\pnorm{0}{\vct{x}} \leq N.
\end{equation}
Recall that the function $\pnorm{0}{\cdot}$ counts the number of nonzero entries in a vector.  In words, the definition states that the sampling operator produces a small relative change in the squared $\ell_2$ norm of an $N$-sparse vector.  The RIP was introduced by Cand{\`e}s and Tao in an important paper~\cite{CT06:Near-Optimal}.

For our purposes, it is more convenient to rephrase the RIP.  Observe that the inequalities \eqref{eqn:rip-1} hold if and only if
$$
\abs{ \frac{ \vct{x}^\adj ( \Fee^\adj \Fee - \Id ) \vct{x} }{ \vct{x}^\adj \vct{x} } }
   \leq \delta_N
\quad\text{whenever}\quad
\pnorm{0}{\vct{x}} \leq N.
$$
The extreme value of this Rayleigh quotient is clearly the largest magnitude eigenvalue of any $N \times N$ principal submatrix of $\Fee^\adj \Fee - \Id$.

Let us construct a norm that packages up this quantity.  Define
$$
\triplenorm{\mtx{A}} = \sup_{\abs{\Omega}\leq N} \norm{ \mtx{A}\restrict{\Omega \times \Omega} }.
$$
In words, the triple-bar norm returns the least upper bound on the spectral norm of any $N \times N$ principal submatrix of $\mtx{A}$.  Therefore, the matrix $\Fee$ has the RIP of order $N$ with constant $\delta_N$ if and only if
$$
\triplenorm{ \Fee^\adj \Fee - \Id } \leq \delta_N.
$$
Referring back to \eqref{eqn:gram-expect}, we may write this relation in the more suggestive form
$$
\triplenorm{ \Fee^\adj \Fee - \Expect \Fee^\adj \Fee } \leq \delta_N.
$$
In the next section, we strive to achieve this estimate.

\subsection{RIP for Random Demodulator}

Our major result is that the random demodulator matrix has the restricted isometry property when the sampling rate is chosen appropriately.

\begin{thm}[RIP for Random Demodulator] \label{thm:demod-rip}
Fix $\delta > 0$.  Suppose that the sampling rate
$$
R \geq \cnst{C} \delta^{-2} \cdot N \log^6(W).
$$
Then an $R \times W$ random demodulator matrix $\Fee$ has the restricted isometry property of order $N$ with constant $\delta_{N} \leq \delta$, except with probability $\bigO(W^{-1})$.
\end{thm}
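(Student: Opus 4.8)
The plan is to verify the reformulation $\triplenorm{\Fee^\adj\Fee-\Id}\le\delta$ recorded just before the statement, where $\triplenorm{\mtx{A}}=\sup_{\abs{\Omega}\le N}\norm{\mtx{A}\restrict{\Omega\times\Omega}}$ is the largest spectral norm of an $N\times N$ principal submatrix. The key structural fact is that the block form of $\mtx{H}$ assigns each chipping variable $\eps_j$ to a single sample (the relation $j\sim r$ holds for exactly one $r$), so the rows $\vct{z}_1,\dots,\vct{z}_R\in\Cspace{W}$ of $\Fee$ depend on disjoint Rademacher variables and are therefore \emph{independent}. Writing $\Fee^\adj\Fee=\sum\nolimits_{r=1}^R\vct{z}_r\vct{z}_r^\adj$, we have $\Expect\Fee^\adj\Fee=\Id$ by \eqref{eqn:gram-expect}, and the task becomes a uniform-over-sparse-supports deviation bound for a sum of independent, rank-one, mean-$\Id$ matrices. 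I would bound the expectation $E\defby\Expect\triplenorm{\Fee^\adj\Fee-\Id}$ first, and later its $p$-th moment with $p=2\log W$ for the tail.

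Symmetrization (Proposition~\ref{prop:symmetrization}) replaces the centered sum by a Rademacher sum, giving $E\le 2\,\Expect\triplenorm{\sum_r\xi_r\,\vct{z}_r\vct{z}_r^\adj}$ with $\{\xi_r\}$ an independent Rademacher sequence. Conditioning on the rows, I would invoke the Rudelson--Vershynin-type bound cited for Theorem~\ref{thm:demod-rip}: schematically,
\[
\Expect_\xi\triplenorm{\sum\nolimits_r\xi_r\vct{z}_r\vct{z}_r^\adj}
 \lesssim \sqrt{\log N}\,\sqrt{\log W}\cdot
 \Big(\max_r\max_{\abs{\Omega}\le N}\enorm{\vct{z}_r\restrict{\Omega}}\Big)\cdot
 \sqrt{\triplenorm{\Fee^\adj\Fee}}.
\]
The per-row factor is deterministic after conditioning: each entry of $\Fee$ obeys $\pnorm{\max}{\Fee}\lesssim\sqrt{\log W/R}$ by Lemma~\ref{lem:entry-bd}, so the $\ell_2$ norm of the $N$ largest entries of any row is at most $\sqrt{N}\,\pnorm{\max}{\Fee}\lesssim\sqrt{N\log W/R}$. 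Using $\triplenorm{\Fee^\adj\Fee}\le 1+\triplenorm{\Fee^\adj\Fee-\Id}$, taking expectations, and applying Jensen's inequality produces a self-bounding inequality of the form $E\le a\sqrt{1+E}$ with $a\lesssim\sqrt{N\log W/R}\cdot\polylog(W)$. Solving this fixed-point relation forces $E\lesssim a$, and choosing $R\ge\cnst{C}\,\delta^{-2}N\log^6 W$ makes $a$ small enough that $E\le\delta/2$; the generous power of $\log W$ absorbs the logarithmic factors picked up along the way.

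To pass from the expectation bound to the high-probability statement, I would rerun the same symmetrization and Rudelson estimate at the level of the $p$-th moment for $p=2\log W$ (Khintchine, the chaos moment bound, and the componentwise estimate are already stated in moment form), obtaining $\Expect^p\triplenorm{\Fee^\adj\Fee-\Id}\le\delta/2$, and then apply Markov's inequality exactly as in the proofs of Theorems~\ref{thm:coherence} and~\ref{thm:cond-rdm}. Throughout one conditions on the event $\{\pnorm{\max}{\Fee}\lesssim\sqrt{\log W/R}\}$, which fails with probability at most $W^{-1}$, so the overall failure probability is $\bigO(W^{-1})$.

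I expect the main obstacle to be the Rudelson-type bound in the second step. Unlike Theorem~\ref{thm:cond-rdm}, which controls a \emph{single} random submatrix, the restricted isometry property demands an estimate that is \emph{uniform} over all $\binom{W}{N}$ choices of support $\Omega$ simultaneously. Establishing this requires the chaining machinery behind the Rudelson--Vershynin result---Dudley's entropy integral together with covering-number estimates for the set of sparse unit vectors (Maurey's empirical method at small scales and a volumetric bound at large scales)---and it is precisely this step that spawns the several logarithmic factors, and hence the gap between the provable rate $N\log^6 W$ and the conjectured $N\log^2 W$.
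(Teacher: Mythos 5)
Your first half---the bound on $E = \Expect\triplenorm{\Fee^\adj\Fee - \Id}$---is essentially the paper's Lemma~\ref{lem:demod-expect-rip}: independence of the rows from the disjoint blocks of the chipping sequence, symmetrization, the Rudelson--Vershynin lemma conditional on the rows, the entry bound of Lemma~\ref{lem:entry-bd} to control the per-row factor, and the self-bounding relation $E \leq \gamma(E+1)^{1/2}$. (One cosmetic difference: the paper does not condition on the max-entry event at this stage; it keeps $B = \pnorm{\max}{\Fee}$ random and applies Cauchy--Schwarz against $(\Expect\beta^2)^{1/2}$.)

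The second half is where your route diverges and where the gap lies. You propose to ``rerun the same symmetrization and Rudelson estimate at the level of the $p$-th moment for $p = 2\log W$'' and then apply Markov. But the Rudelson--Vershynin lemma is available only in expectation form; upgrading it to an $L^p$ bound with the correct $\sqrt{p}$ (subgaussian) and $p$ (subexponential) dependence is precisely the hard part, and the self-bounding fixed-point argument $E \leq \gamma(E+1)^{1/2}$ does not transfer to $\Expect^p$ without redoing the Cauchy--Schwarz and square-root steps inside the $p$-th moment. The paper avoids this entirely: having shown $\Expect Z \leq \cnst{c}\delta/\sqrt{\log W}$ (note the extra $\sqrt{\log W}$ of slack, which is where the sixth logarithm is actually spent---not in the chaining step, as you suggest), it applies the Ledoux--Talagrand concentration inequality for sums of independent, symmetric, \emph{almost surely bounded} Banach-space-valued variables (Proposition~\ref{prop:banach-tail}) with $u=\sqrt{\log W}$, $t=\log W$. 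Making the summands almost surely bounded requires a genuine truncation, not conditioning: the paper multiplies each summand by the indicator $\mathbb{I}_{F_r}$ of the per-row boundedness event and uses the contraction principle to show $\Expect Y_{\rm trunc}\leq\Expect Y$, then transfers the tail back via \eqref{eqn:symm-tail}. Simply conditioning on $\{\pnorm{\max}{\Fee}\lesssim\sqrt{\log W/R}\}$, as you propose, perturbs the mean of each summand and breaks the exact centering $\Expect\vct{z}_r\otimes\vct{z}_r'$ that the symmetrization relies on; this is repairable, but it is a step you would need to carry out, and as written your argument does not close.
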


There are three basic reasons that the random demodulator matrix satisfies the RIP. First, its Gram matrix averages to the identity.  Second, its rows are independent.  Third, its entries are uniformly bounded.  The proof shows how these three factors interact to produce the conclusion.

In the sequel, it is convenient to write $\vct{z} \otimes \vct{z}$ for the rank-one matrix $\vct{z}\vct{z}^\adj$.  We also number constants for clarity.

Most of the difficulty of argument is hidden in a lemma due to Rudelson and Vershynin~\cite[Lem.~3.6]{RV06:Sparse-Reconstruction}.

\begin{lemma}[Rudelson--Vershynin]
Suppose that $\{ \vct{z}_r \}$ is a sequence of $R$ vectors in $\Cspace{W}$ where $R \leq W$, and assume that each vector satisfies the bound $\infnorm{ \vct{z}_r } \leq B$.  Let $\{\xi_r\}$ be an independent Rademacher series.  Then
$$
\Expect
\triplenorm{ \sum\nolimits_{r} \xi_r \vct{z}_r \otimes \vct{z}_r } \\
\leq \beta \triplenorm{ \sum\nolimits_{r} \vct{z}_r \otimes \vct{z}_r }^{1/2},
$$
where
$$
\beta \leq \cnst{C}_1 B \sqrt{N} \log^2(W).
$$
\end{lemma}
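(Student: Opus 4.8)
The plan is to follow Rudelson and Vershynin: recast the triple-bar norm as the supremum of a Rademacher process indexed by the sparse unit sphere, and control that supremum with Dudley's entropy integral. First I would put the quantity of interest in variational form. Since $\vct{x}^\adj ( \vct{z}_r \otimes \vct{z}_r ) \vct{x} = \abssqip{\vct{z}_r}{\vct{x}}$, the definition of $\triplenorm{\cdot}$ gives
\[
\triplenorm{ \sum\nolimits_r \xi_r \, \vct{z}_r \otimes \vct{z}_r }
= \sup_{\vct{x} \in T} \abs{ \sum\nolimits_r \xi_r \, \abssqip{\vct{z}_r}{\vct{x}} },
\qquad
T = \{ \vct{x} : \enorm{\vct{x}} = 1, \ \pnorm{0}{\vct{x}} \leq N \}.
\]
With the $\vct{z}_r$ held fixed, the right-hand side is the supremum of the Rademacher process $X_{\vct{x}} = \sum_r \xi_r \abssqip{\vct{z}_r}{\vct{x}}$ over the set $T$ of $N$-sparse unit vectors, and this process has subgaussian increments with respect to the metric $d(\vct{x},\vct{y})^2 = \sum_r ( \abssqip{\vct{z}_r}{\vct{x}} - \abssqip{\vct{z}_r}{\vct{y}} )^2$.

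Next I would apply Dudley's inequality, $\Expect \sup_{\vct{x} \in T} \abs{X_{\vct{x}}} \lesssim \int_0^\infty \sqrt{ \log \mathcal{N}(T, d, u) } \idiff{u}$, after first linearizing $d$ so that covering numbers in $d$ reduce to covering numbers of $T$ in a Euclidean-type seminorm. The factorization $\abs{a^2 - b^2} \leq (\abs{a} + \abs{b}) \abs{a - b}$ together with the reverse triangle inequality yields, termwise, $\abs{ \abssqip{\vct{z}_r}{\vct{x}} - \abssqip{\vct{z}_r}{\vct{y}} } \leq ( \absip{\vct{z}_r}{\vct{x}} + \absip{\vct{z}_r}{\vct{y}} ) \absip{\vct{z}_r}{\vct{x} - \vct{y}}$, so that $d(\vct{x},\vct{y}) \leq 2 R_\infty \pnorm{Z}{\vct{x}-\vct{y}}$, where $R_\infty = \sup_{\vct{x} \in T} \max_r \absip{\vct{z}_r}{\vct{x}}$ and $\pnorm{Z}{\vct{w}} = ( \sum_r \abssqip{\vct{z}_r}{\vct{w}} )^{1/2}$. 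Two uniform estimates then identify the two factors in $\beta$. Because each $\vct{x} \in T$ is $N$-sparse with unit norm and $\infnorm{\vct{z}_r} \leq B$, I have $\absip{\vct{z}_r}{\vct{x}} \leq \pnorm{1}{\vct{x}} \infnorm{\vct{z}_r} \leq B \sqrt{N}$, hence $R_\infty \leq B\sqrt{N}$; and since $\pnorm{Z}{\vct{x}}^2 = \vct{x}^\adj ( \sum_r \vct{z}_r \otimes \vct{z}_r ) \vct{x} \leq \triplenorm{ \sum_r \vct{z}_r \otimes \vct{z}_r }$ on $N$-sparse unit vectors, the $\pnorm{Z}{\cdot}$-diameter of $T$ is at most $2\,\triplenorm{ \sum_r \vct{z}_r \otimes \vct{z}_r }^{1/2}$. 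This is exactly how the prefactor $B\sqrt{N}$ and the scale $\triplenorm{ \sum_r \vct{z}_r \otimes \vct{z}_r }^{1/2}$ enter the Dudley bound.

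The crux, and the step I expect to be the main obstacle, is estimating the entropy integral sharply enough to keep only a single power of $\sqrt{N}$. If one covers $T$ in the $\pnorm{Z}{\cdot}$-metric by the naive volumetric count --- selecting one of $\binom{W}{N}$ supports and an $\ell_2$-net at scale $u$ of the unit ball in $\Cspace{N}$, so that $\log \mathcal{N} \lesssim N \log( W / (N u) )$ --- then the $\sqrt{N}$ from the support entropy multiplies the $\sqrt{N}$ hidden in $R_\infty$, and Dudley's integral produces a bound of order $B N \sqrt{\log W}\,\triplenorm{ \sum_r \vct{z}_r \otimes \vct{z}_r }^{1/2}$, which carries a fatal full power of $N$. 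To recover $\sqrt{N}$ I would instead control the covering numbers of $T$ through the empirical method of Maurey (a dual Sudakov / Carl-type estimate), whose small-scale behavior reintroduces the operator norm of the \emph{random} matrix itself. This converts Dudley's integral into a self-bounding inequality, schematically $E \lesssim B \sqrt{N}\,\mathrm{polylog}(W)\, ( E + \triplenorm{ \sum_r \vct{z}_r \otimes \vct{z}_r } )^{1/2}$ for $E = \Expect \triplenorm{ \sum_r \xi_r \vct{z}_r \otimes \vct{z}_r }$; solving this fixed-point relation removes the surplus $\sqrt{N}$ and trades it for the extra logarithmic factors. Tracking these logarithms through the two covering regimes and the fixed-point step is precisely what produces the advertised exponent $\beta \leq \cnst{C}_1 B \sqrt{N} \log^2(W)$.

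By contrast, I expect the reductions in the first two paragraphs --- the variational rewriting, the appeal to Dudley, and the linearization that exposes $B\sqrt{N}$ and $\triplenorm{ \sum_r \vct{z}_r \otimes \vct{z}_r }^{1/2}$ --- to be routine once the process is set up. The entire technical weight sits in the empirical covering estimate for the sparse sphere and in managing the self-bounding argument that prevents the loss of a power of $N$ while correctly accounting for the logarithmic factors.
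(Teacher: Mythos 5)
The paper offers no proof of this lemma: it is quoted directly from Rudelson--Vershynin \cite[Lem.~3.6]{RV06:Sparse-Reconstruction} and explicitly flagged as the place where ``most of the difficulty of argument is hidden,'' so there is no internal proof to compare against, and I am judging your sketch against the argument in the cited source. Your opening reductions are correct and match that source: the triple-bar norm of the Hermitian matrix $\sum_r \xi_r\, \vct{z}_r \otimes \vct{z}_r$ is the supremum of the Rademacher process $\sum_r \xi_r \abssqip{\vct{z}_r}{\vct{x}}$ over the $N$-sparse unit sphere $T$, Dudley's inequality applies, and the increment factorizes into a ``large-coefficient'' factor times a ``difference'' factor. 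The gap is in which factor you put where. You take $d(\vct{x},\vct{y}) \leq 2 R_\infty \pnorm{Z}{\vct{x}-\vct{y}}$, which forces the covering to be done in the Hilbertian seminorm $\pnorm{Z}{\vct{w}} = ( \sum_r \abssqip{\vct{z}_r}{\vct{w}} )^{1/2}$; but Maurey's empirical method, applied to $T \subset \sqrt{N}\cdot\conv\{ \pm\onevct_j, \pm\iunit\,\onevct_j \}$, pays the cost $\pnorm{Z}{\onevct_j}^2 = \sum_r \abssqip{\vct{z}_r}{\onevct_j} \leq R B^2$, so the entropy acquires a factor of $R$ where the argument can only afford $\log R$, and the bound it yields is not of the form $\cnst{C}_1 B\sqrt{N}\log^2(W) \cdot \triplenorm{\sum_r \vct{z}_r\otimes\vct{z}_r}^{1/2}$. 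The split must go the other way: $d(\vct{x},\vct{y}) \leq 2M \max_r \absip{\vct{z}_r}{\vct{x}-\vct{y}}$ with $M = \triplenorm{\sum_r \vct{z}_r\otimes\vct{z}_r}^{1/2}$, so that the operator-norm scale exits as the Lipschitz constant and the covering is performed in the sup-type norm $\pnorm{X}{\vct{w}} = \max_r \absip{\vct{z}_r}{\vct{w}}$, whose diameter on $T$ is at most $2B\sqrt{N}$ and for which Maurey gives $\log \mathcal{N}(T,\pnorm{X}{\cdot},u) \lesssim (N B^2/u^2)\log W \log R$. Splicing that with the volumetric count at small scales is what produces $B\sqrt{N}\log^2 W$ with only one power of $\sqrt{N}$.

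The second problem is the fixed-point step. The self-bounding inequality $E \lesssim \beta\,( E + \triplenorm{\sum_r \vct{z}_r\otimes\vct{z}_r} )^{1/2}$ is not part of this lemma's proof and cannot prove this lemma: conditional on $\{\vct{z}_r\}$ the quantity $\triplenorm{\sum_r \vct{z}_r\otimes\vct{z}_r}^{1/2}$ on the right-hand side is deterministic, Dudley's integral closes without any recursion, and in any case solving $E \leq \gamma (E+A)^{1/2}$ only gives $E \lesssim \gamma^2 + \gamma\sqrt{A}$, which is strictly weaker than the stated conclusion $E \leq \gamma\sqrt{A}$ because of the additive $\gamma^2$ term. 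The self-bounding argument you describe lives one level up, in Lemma~\ref{lem:demod-expect-rip}, where the present lemma is applied, the expectation over the rows is taken, and the relation $E \leq \gamma(E+1)^{1/2}$ is solved; attributing it to the entropy estimate conflates the lemma with its application.
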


\vspace{.5pc}

The next lemma employs this bound to show that, in expectation, the random demodulator has the RIP.

\begin{lemma} \label{lem:demod-expect-rip}
Fix $\delta \in (0,1)$.  Suppose that the sampling rate
$$
R \geq \cnst{C}_2 \delta^{-2} \cdot N \log^5 W.
$$
Let $\Fee$ be an $R \times W$ random demodulator.  Then
$$
\Expect \triplenorm{\Fee^\adj \Fee - \Id } \leq \delta.
$$
\end{lemma}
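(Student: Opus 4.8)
The plan is to exploit the three structural features flagged just after the theorem: the Gram matrix $\Fee^\adj\Fee$ averages to $\Id$, the rows of $\Fee$ are independent, and the entries are uniformly small. Write $\vct{z}_r \in \Cspace{W}$ for the entrywise conjugate of the $r$th row of $\Fee$, so that $\Fee^\adj \Fee = \sum_{r=1}^R \vct{z}_r \otimes \vct{z}_r$. The crucial observation is that the $r$th row depends only on the chipping variables $\{\eps_j : j \sim r\}$, and these blocks are disjoint across $r$; hence the summands $\vct{z}_r \otimes \vct{z}_r$ are \emph{independent}. Since $\Expect \Fee^\adj \Fee = \Id$ by \eqref{eqn:gram-expect}, the quantity to control is $E \defby \Expect \triplenorm{\sum_r (\vct{z}_r \otimes \vct{z}_r - \Expect[\vct{z}_r \otimes \vct{z}_r])}$.

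First I would symmetrize. By the standard symmetrization argument (cf.~Proposition~\ref{prop:symmetrization}), applied with respect to the norm $\triplenorm{\cdot}$,
$$
E \leq 2\, \Expect \triplenorm{ \sum\nolimits_r \xi_r \, \vct{z}_r \otimes \vct{z}_r },
$$
where $\{\xi_r\}$ is a Rademacher sequence independent of $\Fee$. This is exactly the form demanded by the Rudelson--Vershynin lemma stated above. Conditioning on $\Fee$ (equivalently, on the chipping sequence), each row obeys the uniform bound $\infnorm{\vct{z}_r} = \max_\omega \abs{\varphi_{r\omega}} \leq \pnorm{\max}{\Fee} \defby B$, so the lemma yields
$$
\Expect_\xi \triplenorm{ \sum\nolimits_r \xi_r \, \vct{z}_r \otimes \vct{z}_r } \leq \cnst{C}_1 B \sqrt{N} \log^2(W) \cdot \triplenorm{ \Fee^\adj \Fee }^{1/2}.
$$

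Taking the expectation over $\Fee$ and applying the Cauchy--Schwarz inequality to decouple the two random factors $B = \pnorm{\max}{\Fee}$ and $\triplenorm{\Fee^\adj\Fee}^{1/2}$, I would reach
$$
E \leq 2\cnst{C}_1 \sqrt{N} \log^2(W) \left( \Expect \pnorm{\max}{\Fee}^2 \right)^{1/2} \left( \Expect \triplenorm{ \Fee^\adj \Fee } \right)^{1/2}.
$$
The first factor is handled by Lemma~\ref{lem:entry-bd} with $p = 2$, giving $\Expect \pnorm{\max}{\Fee}^2 \leq 6 \log(W)/R$. The second factor is controlled self-referentially: since $\triplenorm{\Id} = 1$, the triangle inequality gives $\Expect \triplenorm{\Fee^\adj\Fee} \leq E + 1$. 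Collecting these estimates produces the scalar inequality $E \leq \gamma \sqrt{E + 1}$ with $\gamma \defby 2\sqrt{6}\,\cnst{C}_1 \sqrt{N \log^5(W)/R}$.

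The last step is to solve this inequality. Under the sampling hypothesis $R \geq \cnst{C}_2 \delta^{-2} N \log^5 W$ one has $\gamma \leq 2\sqrt{6}\,\cnst{C}_1 \delta / \sqrt{\cnst{C}_2}$, which is at most $\min\{1,\delta\}$ once $\cnst{C}_2$ is taken large. Squaring $E \leq \gamma\sqrt{E+1}$ and using $\gamma \leq 1$ shows $E \leq 2\gamma \leq \delta$, again for $\cnst{C}_2$ sufficiently large. I expect the main obstacle to be precisely this final coupling: after Rudelson--Vershynin the bound is genuinely self-referential, since the target $E$ reappears through $\triplenorm{\Fee^\adj\Fee}$, and the right-hand side entangles the random entrywise bound $B$ with $\triplenorm{\Fee^\adj\Fee}^{1/2}$. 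Splitting these via Cauchy--Schwarz, estimating each factor by a previously established moment bound, and then closing the loop through the quadratic inequality is the delicate part; everything else is bookkeeping.
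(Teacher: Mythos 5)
Your proposal is correct and follows essentially the same route as the paper's own proof: symmetrization with respect to $\triplenorm{\cdot}$, the Rudelson--Vershynin lemma conditional on the chipping sequence, Cauchy--Schwarz to separate the random entrywise bound $\pnorm{\max}{\Fee}$ (controlled by Lemma~\ref{lem:entry-bd}) from $\triplenorm{\Fee^\adj\Fee}^{1/2}$, and then closing the self-referential inequality $E \leq \gamma\sqrt{E+1}$. The constants and exponents all match; there is nothing to correct.
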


\vspace{.5pc}

\begin{proof}
Let $\vct{z}_r^\adj$ denote the $r$th row of $\Fee$.  Observe that the rows are mutually \emph{independent} random vectors, although their distributions differ.  Recall that the entries of each row are uniformly bounded, owing to Lemma~\ref{lem:entry-bd}.
We may now express the Gram matrix of $\Fee$ as
$$
\Fee^\adj \Fee
   = \sum\nolimits_{r=1}^R \vct{z}_r \otimes \vct{z}_r.
$$
We are interested in bounding the quantity
\begin{align*}
E = \Expect \triplenorm{ \Fee^\adj\Fee - \Id }
   &= \Expect \triplenorm{ \sum\nolimits_r \vct{z}_r \otimes \vct{z}_r - \Id } \\
   &= \Expect \triplenorm{ \sum\nolimits_r \left( \vct{z}_r \otimes \vct{z}_r
   - \Expect \vct{z}_r' \otimes \vct{z}_r' \right) },
\end{align*}
where $\{\vct{z}_r'\}$ is an independent copy of $\{\vct{z}_r\}$.

The symmetrization result, Proposition~\ref{prop:symmetrization}, yields
$$
E \leq 2 \Expect
   \triplenorm{ \sum\nolimits_r \xi_r \vct{z}_r \otimes \vct{z}_r },
$$
where $\{ \xi_r \}$ is a Rademacher series, independent of everything else.  Conditional on the choice of $\{\vct{z}_r\}$, the Rudelson--Vershynin lemma results in
$$
E \leq 2 \Expect \left( \beta \cdot \triplenorm{
   \sum\nolimits_r \vct{z}_r \otimes \vct{z}_r }^{1/2} \right),
$$
where
$$
\beta \leq \cnst{C}_1 B \sqrt{N} \log^2 W
$$
and $B = \max_{k,\omega} \abs{\varphi_{k\omega}}$ is a random variable.

According to Lemma~\ref{lem:entry-bd},
$$
( \Expect \beta^2 )^{1/2} \leq \sqrt{6}\cnst{C}_1 \sqrt{\frac{N\log^5(W)}{R}}.
$$
We may now apply the Cauchy--Schwarz inequality to our bound on $E$ to reach
$$
E \leq 2\sqrt{6}\cnst{C}_1\sqrt{\frac{ N \log^5 W}{R}} \left( \Expect
   \triplenorm{ \sum\nolimits_r \vct{z}_r \otimes \vct{z}_r } \right)^{1/2}.
$$
Add and subtract an identity matrix inside the norm, and invoke the triangle inequality.  Note that $\triplenorm{\Id} = 1$, and identify a copy of $E$ on the right-hand side:
\begin{align*}
E &\leq \sqrt{\frac{24\cnst{C}_1^2 N \log^5 W}{R}} \left( \Expect
   \triplenorm{ \sum\nolimits_r \vct{z}_r \otimes \vct{z}_r - \Id } +
   \triplenorm{\Id} \right)^{1/2} \\
   &= \sqrt{\frac{24\cnst{C}_1^2 N \log^5 W}{R}} (E + 1)^{1/2}.
\end{align*}
Solutions to the relation $E \leq \gamma (E + 1)^{1/2}$ obey $E \leq 2\gamma$ whenever $\gamma \leq 1$.

We conclude that
$$
E \leq \sqrt{\frac{\cnst{C}_2 N \log^5 W}{R}}
$$
whenever the fraction is smaller than one.  To guarantee that $E \leq \delta$, then, it suffices that
$$
R \geq \cnst{C}_2 \delta^{-2} \cdot N \log^5 W.
$$
This point completes the argument.
\end{proof}

To finish proving the theorem, it remains to develop a large deviation bound.  Our method is the same as that of Rudelson and Vershynin: We invoke a concentration inequality for sums of independent random variables in a Banach space.  See~\cite[Thm.~3.8]{RV06:Sparse-Reconstruction}, which follows from~\cite[Thm.~6.17]{LT91:Probability-Banach}.

\begin{prop} \label{prop:banach-tail}
Let $Y_1, \dots, Y_R$ be independent, symmetric random variables in a Banach space $X$, and assume each random variable satisfies the bound $\pnorm{X}{ Y_r } \leq B$ almost surely.  Let $Y = \pnorm{X}{\sum_r Y_r}$. Then
$$
\Prob{ Y > \cnst{C}_3 \left[ u \Expect Y + tB \right] }
   \leq \econst^{-u^2} + \econst^{-t}
$$
for all $u, t \geq 1$.
\end{prop}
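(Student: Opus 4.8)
Proposition~\ref{prop:banach-tail} is the two-regime deviation inequality of Rudelson--Vershynin, and the plan is to obtain it as a specialization of the concentration inequality of Ledoux and Talagrand~\cite[Thm.~6.17]{LT91:Probability-Banach}, importing the underlying isoperimetric machinery as a black box. Write $S = \sum_r Y_r$ and $Y = \pnorm{X}{S} = \sup_{\xi} \real\, \xi(S)$, where $\xi$ ranges over the unit ball of the dual $X^\adj$. Because the summands are symmetric and satisfy $\pnorm{X}{Y_r} \leq B$ almost surely, the fluctuations of $Y$ are governed by two distinct scales: a \emph{Gaussian} scale tied to $\Expect Y$, which dictates moderate deviations, and an \emph{exponential} scale tied to the amplitude $B$, which dictates large deviations. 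The parameters $u$ and $t$ in the statement are precisely the knobs for these two regimes, and the factorized tail $\econst^{-u^2} + \econst^{-t}$ is exactly the signature of an interpolation between a subgaussian and a subexponential estimate.

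The reduction I would carry out runs as follows. First, quote the Ledoux--Talagrand bound in the form that, for universal constants $\alpha_0$ and $L$ and every $s \geq 0$,
$$
\Prob{ Y \geq \alpha_0 \Expect Y + s }
   \leq \exp\!\left( -\frac{s^2}{L\sigma^2} \right) + \exp\!\left( -\frac{s}{LB} \right),
$$
where $\sigma^2 = \sup_{\pnorm{X^\adj}{\xi}\leq 1} \sum_r \Expect \abs{\xi(Y_r)}^2$ is the weak variance. Next, set the deviation level to $s = \cnst{C}_3(u\Expect Y + tB) - \alpha_0 \Expect Y$; since $u \geq 1$, choosing $\cnst{C}_3 \geq 2\alpha_0$ forces $s \geq \tfrac12 \cnst{C}_3 u \Expect Y + \cnst{C}_3 t B$. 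The second (exponential) term is then immediately at most $\exp(-\cnst{C}_3 t / L) \leq \econst^{-t}$ once $\cnst{C}_3 \geq L$. For the first (Gaussian) term, the natural variance proxy is the immediate bound $\sigma^2 = \sup_\xi \Expect\abs{\xi(S)}^2 \leq \Expect[Y^2]$, obtained by independence and mean-zeroness of the summands. Feeding this into $\exp(-s^2/(L\sigma^2))$ together with the lower estimate $s \geq \tfrac12 \cnst{C}_3 u\Expect Y$ produces $\econst^{-u^2}$, provided the second moment $\Expect[Y^2]$ can be traded for the first moment $(\Expect Y)^2$ up to the exponential scale; enlarging $\cnst{C}_3$ then absorbs $L$ and $\alpha_0$.

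The main obstacle is exactly that last trade, and it is more than cosmetic: no single clean inequality converts the second moment $\Expect[Y^2]$ into $(\Expect Y)^2$. When the mass of $S$ is \emph{spread} across many summands one has $\sigma^2 \asymp (\Expect Y)^2$, and the Gaussian term supplies the factor $\econst^{-u^2}$; when the mass \emph{concentrates} in a single heavy coordinate one instead has $\sigma^2 \asymp B\,\Expect Y \gg (\Expect Y)^2$, the Gaussian estimate degrades, and control must pass to the exponential term. This dichotomy is the structural reason the final bound carries two separate summands rather than one, and the delicate part of the argument is verifying that for all $u,t\geq 1$ simultaneously at least one of $\econst^{-u^2}$ and $\econst^{-t}$ dominates. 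The genuinely deep input --- Talagrand's isoperimetric inequality for product measures, which underwrites the self-improving second-to-first-moment passage inside~\cite[Thm.~6.17]{LT91:Probability-Banach} --- I would import wholesale; the contribution here is only the specialization and the two-regime split just sketched.
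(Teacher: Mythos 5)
First, a point of reference: the paper does not prove Proposition~\ref{prop:banach-tail} at all. The statement is imported verbatim, with a citation to \cite[Thm.~3.8]{RV06:Sparse-Reconstruction}, which in turn is \cite[Thm.~6.17]{LT91:Probability-Banach}. Your attempt therefore stands or falls on whether your own derivation from the Ledoux--Talagrand machinery is sound, and it is not: it contains a gap that you flag yourself but never close. The inequality you quote as ``the Ledoux--Talagrand bound''---a Bernstein-type estimate with the weak variance $\sigma^2$ in the Gaussian term---is not Theorem~6.17 of that book; it is a form of Talagrand's later concentration inequality for empirical processes. The actual Theorem~6.17 states (up to the values of the absolute constants) that for integers $q, \ell \geq 1$ and $s > 0$,
$$
\Prob{ Y \geq 8 q \Expect Y + 2 \ell B + s }
   \leq \left( \frac{\cnst{K}_0}{q} \right)^{\ell}
   + 2 \exp\left( - \frac{s^2}{256\, q\, (\Expect Y)^2 } \right),
$$
that is, the isoperimetric method already delivers the \emph{first moment} $(\Expect Y)^2$ in the Gaussian denominator. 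From this form the proposition is pure bookkeeping: fix $q$ at a suitable constant, set $\ell = \lceil t \rceil$ and $s = 16\sqrt{2q}\, u \Expect Y$, and absorb all constants into $\cnst{C}_3$. By quoting the weak-variance form instead, you selected precisely the input from which the conclusion does not follow directly, which is why you collide with the second-to-first-moment obstruction.

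That obstruction is the genuine gap. Having invoked your quoted inequality as a black box, you propose to close the gap by appealing to ``the self-improving second-to-first-moment passage inside'' the cited theorem. This is circular: you cannot use the stated conclusion of an imported theorem and simultaneously reach into its proof for an additional, unstated lemma. Nor is the spread-versus-concentrated dichotomy an argument---nothing you write shows that the exponential term compensates when $\sigma^2 \gg (\Expect Y)^2$, and the closing remark about checking that ``at least one of $\econst^{-u^2}$ and $\econst^{-t}$ dominates'' does not correspond to any step of a correct proof, since the two terms simply appear additively. If you insist on the weak-variance route, the missing ingredient is classical and elementary: the Hoffmann--J{\o}rgensen inequality (also in Ch.~6 of \cite{LT91:Probability-Banach}) gives $\Expect [Y^2] \leq \cnst{C} \left[ (\Expect Y)^2 + B^2 \right]$ for independent symmetric summands bounded by $B$; hence $\sigma^2 \leq \cnst{C}\left[(\Expect Y)^2 + B^2\right]$, and for deviations $s$ of order $u \Expect Y + t B$ the Gaussian exponent is at least a constant multiple of $\bigl( u^2 (\Expect Y)^2 + t^2 B^2 \bigr) / \bigl( (\Expect Y)^2 + B^2 \bigr) \geq \min\{u^2, t^2\} \geq \min\{ u^2, t \}$, which yields the bound $\econst^{-u^2} + \econst^{-t}$ after enlarging $\cnst{C}_3$. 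The $B^2$ term of Hoffmann--J{\o}rgensen is exactly what the $tB$ portion of the deviation threshold is built to absorb. Either repair---quoting Theorem~6.17 in its true form, or supplying Hoffmann--J{\o}rgensen---turns your outline into a proof; as written, it is not one.
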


In words, the norm of a sum of independent, symmetric, bounded random variables has a tail bound consisting of two parts.  The norm exhibits subgaussian decay with ``standard deviation'' comparable to its mean, and it exhibits subexponential decay controlled by the size of the summands.

\begin{proof}[Theorem~\ref{thm:demod-rip}]
We seek a tail bound for the random variable
$$
Z = \triplenorm{ \sum\nolimits_r \vct{z}_r \otimes \vct{z}_r - \Id }
   = \triplenorm{ \sum\nolimits_r (\vct{z}_r \otimes \vct{z}_r
        - \Expect \vct{z}_r' \otimes \vct{z}_r' ) }.
$$
As before, $\vct{z}_r^\adj$ is the $r$th row of $\Fee$ and $\vct{z}_r'$ is an independent copy of $\vct{z}_r$.

To begin, we express the constant in the stated sampling rate as $\cnst{C} = \cnst{C}_2 \cdot \cnst{c}^{-2}$, where $\cnst{c}$ will be adjusted at the end of the argument.  Therefore, the sampling rate may be written as
$$
R \geq \cnst{C}_2 \left( \frac{\cnst{c}\delta}{\sqrt{\log W}}\right)^{-2} N \log^5 W.
$$
Lemma~\ref{lem:demod-expect-rip} implies that
$$
\Expect Z \leq \frac{\cnst{c} \delta}{\sqrt{\log W}} \leq \cnst{c}\delta.
$$

As described in the background section, we can produce a tail bound for $Z$ by studying the symmetrized random variable
\begin{align*}
Y &= \triplenorm{ \sum\nolimits_r (\vct{z}_r \otimes \vct{z}_r
   - \vct{z}_r' \otimes \vct{z}_r') } \\
&\sim \triplenorm{ \sum\nolimits_r \xi_r (\vct{z}_r \otimes \vct{z}_r
   - \vct{z}_r' \otimes \vct{z}_r') },
\end{align*}
where $\{\xi_r\}$ is a Rademacher sequence independent of everything.
For future reference, use the first representation of $Y$ to see that
$$
\Expect Y \leq 2 \Expect Z \leq \frac{2\cnst{c} \delta}{\sqrt{\log W}},
$$
where the inequality follows from the triangle inequality and identical distribution.

Observe that $Y$ is the norm of a sum of independent, symmetric random variables in a Banach space.  The tail bound of Proposition~\ref{prop:banach-tail} also requires the summands to be bounded in norm.  To that end, we must invoke a truncation argument.  Define the (independent) events
$$
F_r = \left\{ \max \left\{\infnorm{\vct{z}_r}^2, \infnorm{\vct{z}_r'}^2 \right\}
   \leq \frac{10\log W}{R} \right\}
$$
and
$$
F = \bigcap\nolimits_r F_r.
$$
In other terms, $F$ is the event that two independent random demodulator matrices both have bounded entries. Therefore,
$$
\Prob{F^c} \leq 2W^{-1}
$$
after two applications of Lemma~\ref{lem:entry-bd}.

Under the event $F$, we have a hard bound on the triple-norm of each summand in $Y$.
\begin{align*}
B &= \max\nolimits_r \triplenorm{ \vct{z}_r \otimes \vct{z}_r - \vct{z}_r' \otimes \vct{z}_r'} \\
   &\leq \max\nolimits_r \left( \triplenorm{ \vct{z}_r \otimes \vct{z}_r }
       + \triplenorm{ \vct{z}_r' \otimes \vct{z}_r' } \right).
\end{align*}
For each $r$, we may compute
\begin{multline*}
\triplenorm{ \vct{z}_r \otimes \vct{z}_r }
   = \sup_{\abs{\Omega} \leq N } \norm{ (\vct{z}_r \otimes \vct{z}_r)\restrict{\Omega \times \Omega} } \\
   = \sup_{\abs{\Omega} \leq N} \enormsq{ {\vct{z}_r}\restrict\Omega }
   \leq N \infnorm{\vct{z}_r}^2
   \leq \frac{10N \log W}{R},
\end{multline*}
where the last inequality follows from $F$.
An identical estimate holds for the terms involving $\vct{z}_r'$.  Therefore,
$$
B \leq \frac{20 N \log W}{R}.
$$
Our hypothesized lower bound for the sampling rate $R$ gives
$$
B \leq 20 N \log W \cdot \frac{(\cnst{c}\delta /\sqrt{\log W})^2}{\cnst{C}_2 N\log^5 W}
   \leq \frac{\cnst{c} \delta}{\log W},
$$
where the second inequality certainly holds if $\cnst{c}$ is a sufficiently small constant.

Now, define the truncated variable
$$
Y_{\rm trunc} = \triplenorm{ \sum\nolimits_r \xi_r (\vct{z}_r \otimes \vct{z}_r
   - \vct{z}_r' \otimes \vct{z}_r') \mathbb{I}_{F_r} }.
$$
By construction, the triple-norm of each summand is bounded by $B$.
Since $Y$ and $Y_{\rm trunc}$ coincide on the event $F$, we have
\begin{align*}
\Prob{ Y > v }
	&\leq \Prob{ Y > v \ | \ F } \cdot \Prob{F} + \Prob{F^c} \\
	&= \Prob{ Y_{\rm trunc} > v \ |\ F } \cdot \Prob{F} + \Prob{F^c} \\
	&= \Prob{ Y_{\rm trunc} > v } + \Prob{F^c}
\end{align*}
for each $v > 0$.
According to the contraction principle~\cite[Thm.~4.4]{LT91:Probability-Banach},
the bound
$$
\Expect_{\vct{\xi}} Y_{\rm trunc} \leq \Expect_{\vct{\xi}} Y
$$
holds pointwise for each choice of $\{\vct{z}_r\}$ and $\{\vct{z}_r'\}$.
Therefore,
$$
\Expect Y_{\rm trunc} = \Expect_{\vct{z}_r, \vct{z}_r'} \Expect_{\vct{\xi}} Y_{\rm trunc}
	\leq \Expect_{\vct{z}_r, \vct{z}_r'} \Expect_{\vct{\xi}} Y
	= \Expect Y.
$$
Recalling our estimate for $\Expect Y$, we see that
$$
\Expect Y_{\rm trunc}
	\leq \frac{2\cnst{c} \delta}{\sqrt{\log W}}.
$$

We now apply Proposition~\ref{prop:banach-tail} to the symmetric variable $Y_{\rm trunc}$. The bound reads
$$
\Prob{ Y_{\rm trunc} > \cnst{C}_3 \left( u \Expect Y_{\rm trunc} + tB \right) }
   \leq \econst^{-u^2} + \econst^{-t}.
$$
Select $u = \sqrt{\log W}$ and $t = \log W$, and recall the bounds on $\Expect Y_{\rm trunc}$ and on $B$ to obtain
$$
\Prob{ Y_{\rm trunc} > \cnst{C}_3 (2\cnst{c} \delta + \cnst{c}\delta) } \leq 2W^{-1}.
$$
Using the relationship between the tails of $Y$ and $Y_{\rm trunc}$, we reach
\begin{align*}
\Prob{ Y > 3\cnst{C}_3 \cnst{c} \delta }
   &\leq \Prob{ Y_{\rm trunc} > 3\cnst{C}_3 \cnst{c} \delta } + \Prob{F^c} \\
   &\leq 4W^{-1}.
\end{align*}

Finally, the tail bound for $Y$ yields a tail bound for $Z$ via relation~\eqref{eqn:symm-tail}:
$$
\Prob{ Z > 2 \Expect{Z} + u } \leq 2 \Prob{ Y > u }.
$$
As noted, $\Expect Z \leq \cnst{c}\delta \leq \cnst{C}_3 \cnst{c} \delta$.  Therefore,
\begin{align*}
\Prob{Z > 5\cnst{C}_3 \cnst{c} \delta}
   &\leq \Prob{Z > 2\Expect Z + 3\cnst{C}_3 \cnst{c}\delta} \\
   &\leq 2\Prob{Y > 3\cnst{C}_3 \cnst{c}\delta } \\
   &\leq 8W^{-1}.
\end{align*}
To complete the proof, we select $\cnst{c} \leq (5\cnst{C}_3)^{-1}$.
\end{proof}

\subsection{Signal Recovery under the RIP}

When the sampling matrix has the restricted isometry property, the samples contain enough information to approximate general signals extremely well.  Cand{\`e}s, Romberg, and Tao have shown that signal recovery can be performed by solving a convex optimization problem.  This approach produces exceptional results in theory and in practice~\cite{CRT06:Stable-Signal}.

\begin{prop} \label{prop:l1-stable}
Suppose that $\Fee$ is an $R \times W$ matrix that verifies the RIP of order $2K$ with restricted isometry constant $\delta_{2K} \leq \cnst{c}$.  Then the following statement holds.  Consider an arbitrary vector $\vct{s}$ in $\Cspace{W}$, and suppose we collect noisy samples
$$
\vct{y} = \Fee\vct{s} + \vct{\nu}
$$
where $\enorm{\vct{\nu}} \leq \eta$.  Every solution $\widehat{\vct{s}}$ to the optimization problem
\begin{equation} \label{eqn:l1-err-app}
\min \pnorm{1}{\vct{v}}
\quad\subjto\quad
\enorm{\Fee\vct{v} - \vct{y}} \leq \eta,
\end{equation}
approximates the target signal:
$$
\enorm{ \widehat{\vct{s}} - \vct{s} }
   \leq \cnst{C} \max\left\{ \eta,
       \frac{1}{\sqrt{K}} \pnorm{1}{\vct{s} - \vct{s}_{K}} \right\},
$$
where $\vct{s}_K$ is a best $K$-sparse approximation to $\vct{s}$ with respect to the $\ell_1$ norm.
\end{prop}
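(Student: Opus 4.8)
The plan is to follow the now-standard geometric argument of Cand\`es, Romberg, and Tao, working entirely with the error vector $\vct{h} = \widehat{\vct{s}} - \vct{s}$ and exploiting the near-isometry that the RIP grants on $2K$-sparse vectors. First I would let $T_0$ index the $K$ largest-magnitude entries of $\vct{s}$, so that the restriction $\vct{s}_{T_0}$ equals the best $K$-term approximation $\vct{s}_K$ and the tail off $T_0$ is exactly $\vct{s} - \vct{s}_K$. Then I would partition the complement $T_0^c$ into consecutive blocks $T_1, T_2, \dots$ of size $K$, where $T_1$ collects the $K$ largest entries of $\vct{h}$ off $T_0$, $T_2$ the next $K$ largest, and so on. The target quantity $\enorm{\vct{h}}$ will be split as $\enorm{\vct{h}} \leq \enorm{\vct{h}_{T_0 \cup T_1}} + \sum\nolimits_{j \geq 2} \enorm{\vct{h}_{T_j}}$, and the whole argument reduces to controlling these two pieces.

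Two deterministic facts drive everything. Because $\enorm{\Fee\vct{s} - \vct{y}} = \enorm{\vct{\nu}} \leq \eta$, the true signal $\vct{s}$ is feasible for \eqref{eqn:l1-err-app}, so optimality of $\widehat{\vct{s}}$ forces $\pnorm{1}{\widehat{\vct{s}}} \leq \pnorm{1}{\vct{s}}$; splitting this across $T_0$ and $T_0^c$ and using the triangle inequality yields the cone constraint $\pnorm{1}{\vct{h}_{T_0^c}} \leq \pnorm{1}{\vct{h}_{T_0}} + 2\pnorm{1}{\vct{s} - \vct{s}_K}$. Second, since both $\widehat{\vct{s}}$ and $\vct{s}$ satisfy the residual constraint, $\enorm{\Fee\vct{h}} \leq 2\eta$. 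I would then record the block tail estimate: every entry of $\vct{h}_{T_j}$ is dominated by the average magnitude on $T_{j-1}$, so $\enorm{\vct{h}_{T_j}} \leq K^{-1/2}\pnorm{1}{\vct{h}_{T_{j-1}}}$, and summing over $j \geq 2$ gives $\sum\nolimits_{j \geq 2} \enorm{\vct{h}_{T_j}} \leq K^{-1/2}\pnorm{1}{\vct{h}_{T_0^c}}$. Combined with the cone constraint and the crude bound $\pnorm{1}{\vct{h}_{T_0}} \leq \sqrt{K}\,\enorm{\vct{h}_{T_0 \cup T_1}}$, the spill outside $T_0 \cup T_1$ is controlled by $\enorm{\vct{h}_{T_0 \cup T_1}}$ plus the tail.

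The heart of the matter is bounding $\enorm{\vct{h}_{T_0 \cup T_1}}$ through the RIP. Writing $\Fee\vct{h}_{T_0 \cup T_1} = \Fee\vct{h} - \sum\nolimits_{j \geq 2}\Fee\vct{h}_{T_j}$ and forming the inner product against $\Fee\vct{h}_{T_0 \cup T_1}$, I would invoke the lower RIP bound $\enormsq{\vct{h}_{T_0 \cup T_1}} \leq (1-\delta_{2K})^{-1}\enormsq{\Fee\vct{h}_{T_0 \cup T_1}}$ on the left, and on the right estimate $\absip{\Fee\vct{h}_{T_0 \cup T_1}}{\Fee\vct{h}} \leq 2\eta\,\enorm{\Fee\vct{h}_{T_0 \cup T_1}}$ by Cauchy--Schwarz, together with the standard RIP corollary that $\Fee$ nearly orthogonalizes vectors with disjoint supports, $\absip{\Fee\vct{u}}{\Fee\vct{w}} \leq \delta_{2K}\enorm{\vct{u}}\enorm{\vct{w}}$ whenever the combined support has size at most $2K$. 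These cross terms reintroduce $\sum\nolimits_{j \geq 2}\enorm{\vct{h}_{T_j}}$, which I would replace by the tail estimate above.

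After the substitutions, the result is a self-referential inequality of the form $\enorm{\vct{h}_{T_0 \cup T_1}} \leq \alpha\,\eta + \beta\,K^{-1/2}\pnorm{1}{\vct{s} - \vct{s}_K} + \rho\,\enorm{\vct{h}_{T_0 \cup T_1}}$, where $\rho$ depends on $\delta_{2K}$ alone. I expect the main obstacle to be the bookkeeping that forces $\rho < 1$: one must check that the constant $\cnst{c}$ in the hypothesis $\delta_{2K} \leq \cnst{c}$ is small enough that the $\enorm{\vct{h}_{T_0 \cup T_1}}$ term can be absorbed into the left-hand side, leaving $\enorm{\vct{h}_{T_0 \cup T_1}} \lesssim \eta + K^{-1/2}\pnorm{1}{\vct{s} - \vct{s}_K}$. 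Feeding this back into $\enorm{\vct{h}} \leq \enorm{\vct{h}_{T_0 \cup T_1}} + \sum\nolimits_{j \geq 2}\enorm{\vct{h}_{T_j}}$ and applying the tail and cone bounds once more yields the advertised estimate $\enorm{\widehat{\vct{s}} - \vct{s}} \leq \cnst{C}\max\{\eta,\ K^{-1/2}\pnorm{1}{\vct{s} - \vct{s}_K}\}$. Every step here is deterministic given the RIP, so no probabilistic reasoning enters; the entire effort lies in tracking the constants through the absorption.
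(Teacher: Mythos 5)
The paper does not actually prove Proposition~\ref{prop:l1-stable}: it imports the statement wholesale from Cand\`es--Romberg--Tao \cite{CRT06:Stable-Signal}, so there is no in-paper argument to compare against. Your sketch reconstructs the standard proof from that literature (in its streamlined form, closer to Cand\`es's 2008 note than to the original), and the skeleton is sound: the tube constraint $\enorm{\Fee\vct{h}} \leq 2\eta$, the cone constraint from $\ell_1$-optimality, the block decomposition of $T_0^c$ with the tail bound $\sum_{j\geq 2}\enorm{\vct{h}_{T_j}} \leq K^{-1/2}\pnorm{1}{\vct{h}_{T_0^c}}$, and the absorption of $\enorm{\vct{h}_{T_0\cup T_1}}$ into the left-hand side under a smallness condition on $\delta_{2K}$ are exactly the right ingredients, and every step is deterministic as you say. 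One detail deserves care: the near-orthogonality estimate $\absip{\Fee\vct{u}}{\Fee\vct{w}} \leq \delta_{2K}\enorm{\vct{u}}\enorm{\vct{w}}$ is only available when the \emph{combined} support of $\vct{u}$ and $\vct{w}$ has size at most $2K$, but you propose to pair $\Fee\vct{h}_{T_0\cup T_1}$ (support $2K$) against $\Fee\vct{h}_{T_j}$ (support $K$), which would require $\delta_{3K}$. The standard fix is to split the cross term as $\ip{\Fee\vct{h}_{T_0}}{\Fee\vct{h}_{T_j}} + \ip{\Fee\vct{h}_{T_1}}{\Fee\vct{h}_{T_j}}$ and use $\enorm{\vct{h}_{T_0}} + \enorm{\vct{h}_{T_1}} \leq \sqrt{2}\,\enorm{\vct{h}_{T_0\cup T_1}}$, which costs only a factor of $\sqrt{2}$ in the absorption constant $\rho$ and leads to the familiar requirement $\delta_{2K} < \sqrt{2}-1$. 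With that adjustment (and the cosmetic replacement of inner products by their real parts, since the paper works over $\Cspace{W}$), your outline fills in a proof the paper leaves to the references.
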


An equivalent guarantee holds when the approximation $\widehat{\vct{s}}$ is computed using the {\sf CoSaMP} algorithm~\cite[Thm.~A]{NT08:CoSaMP-Iterative}.

Combining Proposition~\ref{prop:l1-stable} with Theorem~\ref{thm:demod-rip}, we obtain our major result, Theorem~\ref{thm:stable-recovery}.

\begin{cor}
Suppose that the sampling rate
$$
R \geq \cnst{C} K \log^6 W.
$$
An $R \times W$ random demodulator matrix verifies the RIP of order $2K$ with constant $\delta_{2K} \leq \cnst{c}$, except with probability $\bigO(W^{-1})$.  Thus, the conclusions of Proposition~\ref{prop:l1-stable} are in force.
\end{cor}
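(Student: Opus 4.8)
The plan is to read this corollary as nothing more than a specialization of Theorem~\ref{thm:demod-rip} feeding directly into Proposition~\ref{prop:l1-stable}, so the entire argument reduces to matching parameters and absorbing constants. First I would invoke Theorem~\ref{thm:demod-rip} with the isometry order set to $N = 2K$ and the target isometry constant set to $\delta = \cnst{c}$, where $\cnst{c}$ denotes the fixed threshold constant appearing in the hypothesis $\delta_{2K} \leq \cnst{c}$ of Proposition~\ref{prop:l1-stable}. With these choices, the sampling-rate requirement of Theorem~\ref{thm:demod-rip} reads
$$
R \geq \cnst{C} \cnst{c}^{-2} \cdot 2K \log^6 W.
$$

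Next I would observe that $\cnst{c}$ is a universal constant, so the prefactor $2\cnst{c}^{-2}$ is itself a constant; absorbing it into the leading constant (following the analyst's convention announced in the notation section, whereby $\cnst{C}$ may change from line to line) yields precisely the stated hypothesis $R \geq \cnst{C} K \log^6 W$. Under this hypothesis, Theorem~\ref{thm:demod-rip} guarantees that the random demodulator matrix $\Fee$ has the RIP of order $2K$ with constant $\delta_{2K} \leq \cnst{c}$, except with probability $\bigO(W^{-1})$.

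Finally, on the event that this RIP bound is in force, the hypothesis of Proposition~\ref{prop:l1-stable} is satisfied verbatim, so its conclusion---the stable recovery guarantee of Theorem~\ref{thm:stable-recovery}---follows at once. I do not expect any genuine obstacle in this step. Both the hard probabilistic content (the RIP estimate, established through symmetrization, the Rudelson--Vershynin lemma, the truncation argument, and the Banach-space tail bound of Proposition~\ref{prop:banach-tail}) and the deterministic recovery guarantee are already in hand; the corollary merely records their composition. The only point demanding a modicum of care is ensuring that the isometry constant delivered by Theorem~\ref{thm:demod-rip} is chosen to meet, rather than merely approach, the threshold $\cnst{c}$ required by Proposition~\ref{prop:l1-stable}, which the $\delta^{-2}$ dependence in the sampling rate makes automatic.
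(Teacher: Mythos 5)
Your proposal is correct and follows exactly the route the paper takes: the corollary is obtained by applying Theorem~\ref{thm:demod-rip} with $N = 2K$ and $\delta = \cnst{c}$, absorbing the resulting constant $2\cnst{c}^{-2}$ into $\cnst{C}$, and then invoking Proposition~\ref{prop:l1-stable} on the event that the RIP holds. The paper offers no further detail for this step, so there is nothing missing from your account.
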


\bibliographystyle{IEEEbib}
\footnotesize
\bibliography{random-modulator}

\end{document}